\documentclass[sts,preprint]{imsart}
\makeatletter
\@twocolumnfalse
\makeatother
\RequirePackage{amsthm,amsmath,amsfonts,amssymb,mathtools}
\RequirePackage[authoryear]{natbib}
\RequirePackage[hyphens]{url}
\RequirePackage[colorlinks,citecolor=blue,urlcolor=blue]{hyperref}
\RequirePackage{graphicx}
\RequirePackage{booktabs}
\RequirePackage{tikz}

\startlocaldefs
\theoremstyle{plain}
\newtheorem{theorem}{Theorem}
\newtheorem{proposition}{Proposition}
\newtheorem{lemma}{Lemma}
\newtheorem{corollary}{Corollary}
\theoremstyle{remark}

\newtheorem{algorithm}{Algorithm}
\theoremstyle{definition}
\newtheorem{definition}{Definition}

\newcounter{labelledlistindex}
\newcounter{labelleditem}
\newenvironment{labelledlist}[1]%
  {\if@inlabel\leavevmode\fi
   \stepcounter{labelledlistindex}%
   \renewcommand{\thelabelleditem}{#1}%
   \begin{list}{\thelabelleditem}{\usecounter{labelleditem}}}%
  {\end{list}}
\newenvironment{algsteps}%
  {\begin{labelledlist}{(\alph{labelleditem})}%
   \let\algsteps@label\label
   \renewcommand{\label}[1]{\algsteps@label{##1}%
     {\protected@edef\@currentlabel{\thealgorithm.\alph{labelleditem}}\algsteps@label{##1:full}}}}%
  {\end{labelledlist}}
\newcommand{\algstep}[1]{Algorithm~\ref{#1:full}}

\newcommand{\dd}{\mathrm{d}}
\newcommand{\bbE}{\mathbb{E}}
\newcommand{\bbP}{\mathbb{P}}
\newcommand{\pos}{\mathbf{x}} %
\newcommand{\statespace}{\mathsf{X}} %
\newcommand{\target}{\pi} %
\newcommand{\exttarget}{\bar{\target}} %
\newcommand{\normconst}{Z} %
\newcommand{\transdens}{m} %
\newcommand{\pot}{g} %
\newcommand{\obsdens}{h} %
\newcommand{\weight}{W} %
\newcommand{\weightvec}{\mathbf{\weight}} %
\newcommand{\anc}{a} %
\newcommand{\refidx}{b} %
\newcommand{\outidx}{l} %
\newcommand{\bsprob}{B} %
\newcommand{\kernel}{K} %
\newcommand{\forced}{F} %
\newcommand{\coupling}{C} %
\newcommand{\expprod}{Q} %
\newcommand{\excess}{D} %
\newcommand{\dobrushin}{\beta} %
\newcommand{\mixconst}{\varepsilon} %
\newcommand{\gain}{\gamma} %
\newcommand{\resdist}{r} %
\newcommand{\AT}{\textsf{AT}}
\newcommand{\BS}{\textsf{BS}}
\DeclareRobustCommand{\FMAT}{\textsf{FM\nobreakdash-AT}}
\DeclareRobustCommand{\FMBS}{\textsf{FM\nobreakdash-BS}}
\DeclareRobustCommand{\MBS}{\textsf{M\nobreakdash-BS}}
\DeclareRobustCommand{\CMBS}{\textsf{CM\nobreakdash-BS}}
\DeclareRobustCommand{\legFMBSNtwo}{\tikz[baseline=-0.6ex]{\draw[color={rgb,255:red,213;green,94;blue,0}, line width=1pt, solid] (0,0) -- (16pt,0);}}
\DeclareRobustCommand{\legFMBSNeight}{\tikz[baseline=-0.6ex]{\draw[color={rgb,255:red,213;green,94;blue,0}, line width=1pt, densely dotted] (0,0) -- (16pt,0);}}
\DeclareRobustCommand{\legMBSNtwo}{\tikz[baseline=-0.6ex]{\draw[color={rgb,255:red,0;green,158;blue,115}, line width=1pt, solid] (0,0) -- (16pt,0);}}
\DeclareRobustCommand{\legMBSNeight}{\tikz[baseline=-0.6ex]{\draw[color={rgb,255:red,0;green,158;blue,115}, line width=1pt, densely dotted] (0,0) -- (16pt,0);}}
\DeclareRobustCommand{\legCMBSNtwo}{\tikz[baseline=-0.6ex]{\draw[color={rgb,255:red,0;green,114;blue,178}, line width=1pt, solid] (0,0) -- (16pt,0);}}
\DeclareRobustCommand{\legCMBSNeight}{\tikz[baseline=-0.6ex]{\draw[color={rgb,255:red,0;green,114;blue,178}, line width=1pt, densely dotted] (0,0) -- (16pt,0);}}
\endlocaldefs

\begin{document}

\begin{frontmatter}
\title{Metropolized backward sampling for conditional sequential Monte Carlo}
\runtitle{Metropolized backward sampling}

\begin{aug}
\author[A]{\fnms{Adrien}~\snm{Corenflos}\ead[label=e1]{adrien.corenflos@warwick.ac.uk}}
\address[A]{Department of Statistics, University of Warwick\printead[presep={\ }]{e1}.}
\end{aug}

\begin{abstract}
Particle Gibbs samplers with backward sampling are widely used for sampling the smoothing distribution of state-space models, but, with few particles, the new trajectory often coincides with the current one at many time steps.
A known but rarely used remedy replaces the random selection of the final particle by a Metropolis--Hastings move which avoids the current particle.
We show that, with backward sampling, this remedy only helps at the last time steps, because its effect on the update rates decays geometrically with the distance to the final time.
We propose instead Metropolized backward sampling, which makes such a move at every step of the backward pass, forcibly switching trajectories whenever possible.
Additionally, we introduce a coupled version which prevents proposing the current trajectory, maximizing the chances of an actual switch versus the current trajectory and performing perfect sampling when the state-space model has no observations, a feature not shared by backward sampling.
We prove that, at every time step, both methods dominate backward sampling as they update the state more often and have smaller lag-one autocorrelations than it does.
Our methods are drop-in replacements for backward sampling, and can be used to improve any particle Gibbs sampler.
\end{abstract}

\begin{keyword}
\kwd{Backward sampling}
\kwd{Markov chain Monte Carlo}
\kwd{Metropolized Gibbs sampler}
\kwd{particle Gibbs}
\kwd{Peskun ordering}
\kwd{state-space model}
\end{keyword}
\end{frontmatter}

\section{Introduction}\label{sec:intro}

\subsection{Particle Gibbs and backward sampling}

Particle Gibbs samplers~\citep{andrieu2010particle} sample the smoothing distribution of a state-space model, or, more generally, the path measure of a Feynman--Kac model~\citep{delmoral2004feynman, chopin2020introduction}, by iterating a \emph{conditional sequential Monte Carlo} (CSMC) update: a particle filter is run with one particle fixed to the current trajectory, the \emph{reference}, after which a new trajectory is selected among the particles.
In the original algorithm, the new trajectory is obtained by drawing a final-time particle and tracing its ancestry back in time, which we refer to as ancestral tracing~(\AT{}).
Because of path degeneracy, the early states of the reference are then rarely updated when the time horizon $T$ is large compared with the number of particles $N$.
Theoretical guarantees are then only obtained when $N$ grows linearly with $T$~\citep{andrieu2018uniform}.
\emph{Backward sampling}~(\BS{})~\citep{whiteley2010particle, lindsten2013backward}, which adapts the backward simulation of \citet{godsill2004monte} to CSMC, redraws the indices of the new trajectory backwards in time, from probabilities which combine the filtering weights with the transition densities.
It dominates \AT{} in asymptotic efficiency~\citep[Theorem~11]{chopin2015particle} and, contrary to \AT{}, remains effective with a number of particles which does not grow with $T$~\citep{lee2020coupled,karjalainen2026mixing}.

\subsection{Metropolized Gibbs steps and the forced move}

In CSMC, the final index, and, with \BS{}, each earlier index, is drawn by a Gibbs step on the extended space of all the variables generated by the particle filter~\citep{andrieu2010particle, chopin2015particle}, that is, from its conditional distribution, without regard for its current value.
However, for discrete variables, \citet{liu1996peskun} proposed the \emph{Metropolized Gibbs} sampler, which proposes a value different from the current one, with probabilities proportional to the conditional distribution, and accepts it with a Metropolis--Hastings probability.
It dominates the Gibbs step in the ordering of \citet{peskun1973optimum}, and hence in asymptotic variance~\citep{tierney1998note, mira2001ordering}.
With two candidates, the Gibbs step is the acceptance rule of \citet{barker1965monte}, whereas the Metropolized Gibbs step is the Metropolis rule.
\citet[Section~8]{chopin2013particle} applied this construction to the final index of CSMC with \AT{}, under the name of \emph{forced move}, and showed that the resulting sampler, which we denote by \FMAT{}~(forced move plus ancestral tracing), dominates \AT{} in the Peskun ordering~\citep{peskun1973optimum,tierney1998note}.
The construction extends immediately to \BS{}, which gives \FMBS{}~(forced move plus backward sampling).
The forced move does not appear in the published version~\citep{chopin2015particle}, which concentrates on \BS{}.
To our knowledge, it is rarely used in practice.
There is a simple reason for this: with \BS{}, a forced move at the final time step only changes the final index, while the backward draws at earlier times are exact draws from conditional distributions which forget the final index.
We show that, under mixing conditions, the benefit decays geometrically with the distance to the final time (Proposition~\ref{prop:washout}), so that, for even moderately long time series, \FMBS{} behaves like \BS{}.
This is particularly clear when the model factorizes across time, as, in this case, the backward draws are independent of the final index and a forced move at the final time step does not affect any earlier time steps.

\subsection{Contributions}

To address this issue, we propose to make a forced move at every step of the backward pass.
We call the resulting sampler \emph{Metropolized backward sampling}~(\MBS{}).
The key observation is that the index drawn at time $t$ has a natural current value: the ancestor of the particle selected at time $t + 1$, which \AT{} keeps deterministically, and which \BS{} ignores.
When the output and the reference differ at time $t + 1$, this ancestor is independent of the index of the reference at time $t$, which may therefore be selected again, decreasing efficiency.
To fix this, we introduce a \emph{coupled Metropolized backward sampling}~(\CMBS{}) variant.
The index at time $t$ is then drawn as with \BS{}, but replaced by a forced move away from the index of the reference if it equals the latter.
We prove that CSMC remains valid when each index of the backward pass is drawn from a Markov kernel \emph{started from the index of the reference}, under a symmetry condition which is directly satisfied by \AT{} and \BS{}, and which, when the output and the reference coincide at the next time step, is the reversibility with respect to the backward-sampling probabilities.
When the output and the reference differ, the kernels of \MBS{} draw the index independently of the index of the reference, so that they may be started from the ancestor instead.
This generalized backward pass is summarized in Algorithm~\ref{alg:csmc}.
It is reversible with respect to the target distribution (Proposition~\ref{prop:reversible}), which extends the reversibility of CSMC with \BS{}~\citep[Proposition~9]{chopin2015particle}.
At every time step, CSMC with \MBS{} or \CMBS{} has larger update rates and smaller lag-one autocovariances of all univariate functions of the state than \BS{} and \FMBS{} (Theorem~\ref{thm:ordering}).
The update rates of the states at two consecutive time steps are ordered in the same way, as are, for \MBS{}, their autocovariances (Corollary~\ref{cor:pairs}).
Given the indices of the reference and of the output at time $t + 1$, the probability that the index of the reference is selected again at time $t$ can only decrease when the draw of \FMBS{} at time $t$ is replaced by that of \MBS{} or \CMBS{}.
At every time $t < T$, the update rate of \MBS{}, or of \CMBS{}, exceeds that of \FMBS{} by at least the expectation of this decrease under that rule (Proposition~\ref{prop:gain}).
When the model factorizes across time and the particles have equal weights, every state is updated under \CMBS{}, whereas each state is kept with probability $1/N$ under \BS{} and, under \MBS{}, with a probability which tends to $1/(N + 1)$ as $T - t$ grows.
With two particles and a model which factorizes across time, CSMC with \CMBS{} performs an independent Metropolis--Hastings step at each time step (Proposition~\ref{prop:imh}).
Appendix~\ref{sec:previous-state} extends Algorithm~\ref{alg:csmc}, its reversibility, and the results of Sections~\ref{sec:update-rates} and~\ref{sec:gains} to potential functions which depend on two states and to any resampling scheme under which the expected number of offspring of each particle is $N$ times its normalized weight, with ancestor sampling of the reference.
Section~\ref{sec:numerics} illustrates these results on a linear-Gaussian model and on the nonlinear model of \citet{gordon1993novel}, which \citet{andrieu2010particle} used to illustrate CSMC.
The proofs of our results are given in Appendix~\ref{sec:proofs}.

\subsection{Related work}

In particle Gibbs with ancestor sampling, \citet[Section~6.1]{lindsten2014particle} observed that the ancestor index of the reference may be drawn from any Markov kernel which leaves the ancestor-sampling probabilities invariant.
The backward pass of Algorithm~\ref{alg:csmc} applies this idea to backward sampling, with kernels started from the indices of the reference.
To reduce the cost of ancestor sampling, \citet{lindsten2014particle} suggested a Metropolis--Hastings step, which only evaluates the ancestor-sampling weights of the current and proposed indices.
They recommended a proposal which never proposes the current index, in the spirit of \citet{liu1996peskun}, without specifying it further.
Such a kernel is not the Metropolized Gibbs kernel of \citet{liu1996peskun}, whose proposal requires all the weights, so that, in general, it does not dominate the Gibbs step in the Peskun ordering.
In particle smoothing, \citet{bunch2013improved} also replaced each backward draw by Metropolis--Hastings steps, started from the ancestor of the particle selected at the next time step, to reduce the cost of backward sampling.
\citet{dau2023backward} formulated these steps as independent Metropolis--Hastings steps with the filtering weights as proposal.
They proved that a single step per draw gives a convergent and stable smoother, at a cost linear in $N$ \citep[Proposition~4]{dau2023backward}.
Their validity rests on the same invariance as that of the forced moves of \MBS{} started from the ancestors: the ancestor is distributed according to the backward-sampling probabilities, which these kernels leave invariant (\citealp[Theorem~2.1]{dau2023backward}; Lemma~\ref{lem:pair}).
We use the Metropolized Gibbs kernel instead, whose cost is that of \BS{}, in order to improve the statistical efficiency of the sampler rather than its cost.
In the backward pass of CSMC, the steps of \citet{bunch2013improved} would have the opposite effect of \MBS{}: because they favour the current index, they would reduce the update rates and increase the autocorrelations.

\section{Backward sampling and forced moves}\label{sec:method}

\subsection{Feynman--Kac models and conditional SMC}

Let $\statespace$ be a measurable space, equipped with a $\sigma$-finite measure $\dd x$.
We consider a Markov chain on $\statespace$ with initial density $\transdens_1(x_1)$ and transition densities $\transdens_t(x_{t-1}, x_t)$, $2 \leq t \leq T$, with respect to $\dd x$, and potential functions $\pot_t \colon \statespace \to (0, \infty)$, $t \in [T] \coloneqq \{1, \dotsc, T\}$.
The target distribution has density
\begin{equation}
  \target(x_{1:T}) = {\frac{\transdens_1(x_1) \pot_1(x_1)}{\normconst} \prod_{t=2}^T \transdens_t(x_{t-1}, x_t) \pot_t(x_t)}, \label{eq:target}
\end{equation}
where we assume that $\normconst \in (0, \infty)$.
For a state-space model with transition densities $\transdens_t$ and observation densities $\obsdens_t(y_t \mid x_t)$, the choice $\pot_t(x_t) = \obsdens_t(y_t \mid x_t)$ makes $\target$ the smoothing distribution and the particle filter below the bootstrap filter.
The particle filter uses $N \geq 2$ particles $x_t^1, \dotsc, x_t^N$ at each time step, with the normalized weights $\weight_t^n \coloneqq \pot_t(x_t^n) / \sum_{i=1}^N \pot_t(x_t^i)$, which form the probability vector $\weightvec_t \coloneqq (\weight_t^1, \dotsc, \weight_t^N)$.
We denote by $\anc_{t-1}^n \in [N]$ the index of the ancestor at time $t - 1$ of the particle $x_t^n$.
For $t \in [T-1]$ and $i, j \in [N]$, we define the \emph{backward-sampling probabilities}
\begin{equation}
  \bsprob_t(i \mid j) \coloneqq \frac{\weight_t^i \transdens_{t+1}(x_t^i, x_{t+1}^j)}{\sum_{k=1}^N \weight_t^k \transdens_{t+1}(x_t^k, x_{t+1}^j)}. \label{eq:bs}
\end{equation}

\begin{algorithm}[CSMC with a general backward pass]\label{alg:csmc}
Given the reference trajectory $x_{1:T}$ and Markov kernels on $[N]$ which are functions of the particle positions only: $\kernel_T$, reversible with respect to $\weightvec_T$, and, for $t \in [T-1]$ and $j, j' \in [N]$, $\kernel_{t,j,j'}$, such that, for all $i, i' \in [N]$,
\begin{equation}
  \bsprob_t(i \mid j) \kernel_{t,j,j'}(i, i') = \bsprob_t(i' \mid j') \kernel_{t,j',j}(i', i): \label{eq:symmetry}
\end{equation}
\begin{algsteps}
  \item\label{alg:csmc:indices} Draw $\refidx_1, \dotsc, \refidx_T$ independent uniforms on $[N]$.
  \item\label{alg:csmc:forward} Set $x_1^{\refidx_1} \coloneqq x_1$ and draw $x_1^n \sim \transdens_1$ for $n \neq \refidx_1$.
For $t = 2, \dotsc, T$, set $\anc_{t-1}^{\refidx_t} \coloneqq \refidx_{t-1}$ and $x_t^{\refidx_t} \coloneqq x_t$, then, for $n \neq \refidx_t$, draw $\anc_{t-1}^n \sim \weightvec_{t-1}$ and $x_t^n \sim \transdens_t(x_{t-1}^{\anc_{t-1}^n}, \cdot)$.
  \item\label{alg:csmc:final} Draw $\outidx_T \sim \kernel_T(\refidx_T, \cdot)$.
  \item\label{alg:csmc:backward} For $t = T - 1, \dotsc, 1$, draw $\outidx_t \sim \kernel_{t, \refidx_{t+1}, \outidx_{t+1}}(\refidx_t, \cdot)$.
  \item\label{alg:csmc:return} Return $\tilde{x}_{1:T} \coloneqq (x_1^{\outidx_1}, \dotsc, x_T^{\outidx_T})$.
\end{algsteps}
\end{algorithm}

The kernels $\kernel_T$ and $\kernel_{t,j,j'}$ of the six selection rules below depend on the particles only through the probabilities $\weightvec_T$ and $\bsprob_t$.
In \algstep{alg:csmc:backward}, the kernel at time $t$ starts from the index $\refidx_t$ of the reference.
If $\outidx_{t+1} = \refidx_{t+1}$, this is the ancestor $\anc_t^{\outidx_{t+1}}$ of the particle selected at time $t + 1$.
For $j = j'$, \eqref{eq:symmetry} is the reversibility of $\kernel_{t,j,j}$ with respect to $\bsprob_t(\cdot \mid j)$.
For $j \neq j'$, summing~\eqref{eq:symmetry} over $i$ shows that, if $i \sim \bsprob_t(\cdot \mid j)$ and $i' \sim \kernel_{t,j,j'}(i, \cdot)$, then $i' \sim \bsprob_t(\cdot \mid j')$, while~\eqref{eq:symmetry} states that the pair $(i', i)$ then has the same distribution as when $i' \sim \bsprob_t(\cdot \mid j')$ and $i \sim \kernel_{t,j',j}(i', \cdot)$.
For probability vectors $\mathbf{p}$ and $\mathbf{q}$ on $[N]$, the selection rules are built from the forced move $\forced_{\mathbf{p}}$ and the coupling kernel $\coupling_{\mathbf{p},\mathbf{q}}$ of $\mathbf{p}$ and $\mathbf{q}$, whose formal definitions are deferred to Sections~\ref{sec:forced} and~\ref{sec:coupling}.
We consider six kernel choices.
In the first five, $\kernel_{t,j,j'}(i, \cdot) = \bsprob_t(\cdot \mid j')$ for $j \neq j'$, so that the indices of the output and of the reference are drawn independently once they differ:
\begin{list}{}{\settowidth{\labelwidth}{\CMBS{}:}\setlength{\leftmargin}{\dimexpr\labelwidth+\labelsep\relax}}
  \item[\AT{}:] $\kernel_T(i, \cdot) = \weightvec_T$ and $\kernel_{t,j,j}(i, \cdot) = \delta_i$: ancestral tracing~\citep{andrieu2010particle};
  \item[\FMAT{}:] $\kernel_T = \forced_{\weightvec_T}$ and $\kernel_{t,j,j}(i, \cdot) = \delta_i$: forced move plus ancestral tracing~\citep[Section~8]{chopin2013particle};
  \item[\BS{}:] $\kernel_T(i, \cdot) = \weightvec_T$ and $\kernel_{t,j,j}(i, \cdot) = \bsprob_t(\cdot \mid j)$: the backward sampling of \citet{whiteley2010particle};
  \item[\FMBS{}:] $\kernel_T = \forced_{\weightvec_T}$ and $\kernel_{t,j,j}(i, \cdot) = \bsprob_t(\cdot \mid j)$: forced move plus backward sampling;
  \item[\MBS{}:] $\kernel_T = \forced_{\weightvec_T}$ and $\kernel_{t,j,j} = \forced_{\bsprob_t(\cdot \mid j)}$: Metropolized backward sampling, which we propose;
  \item[\CMBS{}:] $\kernel_T = \forced_{\weightvec_T}$, $\kernel_{t,j,j} = \forced_{\bsprob_t(\cdot \mid j)}$, and, for $j \neq j'$, $\kernel_{t,j,j'} = \coupling_{\bsprob_t(\cdot \mid j), \bsprob_t(\cdot \mid j')}$: coupled Metropolized backward sampling, which we also propose.
\end{list}
For the first five rules, \algstep{alg:csmc:backward} is usually implemented as $\outidx_t \sim \kernel_{t, \outidx_{t+1}, \outidx_{t+1}}(\anc_t^{\outidx_{t+1}}, \cdot)$, from the ancestor of the particle selected at time $t + 1$, without the indices of the reference at the times $t < T$.
When $x_{1:T} \sim \target$, this gives the same joint distribution of the reference and the output (Lemma~\ref{lem:pair}).
With \AT{}, $\outidx_t = \anc_t^{\outidx_{t+1}}$, whereas, with \MBS{}, $\outidx_t$ is drawn by a forced move from $\anc_t^{\outidx_{t+1}}$ at every time step.
With \CMBS{}, if $\outidx_{t+1} = \refidx_{t+1}$, the index $\outidx_t$ is drawn as with \MBS{}, by a forced move from the ancestor $\refidx_t = \anc_t^{\outidx_{t+1}}$.
Otherwise, it is drawn from $\bsprob_t(\cdot \mid \outidx_{t+1})$, as with \BS{}, and replaced, if it equals $\refidx_t$, by a forced move from $\refidx_t$ with respect to the probabilities proportional to $\bsprob_t(i \mid \refidx_{t+1}) \bsprob_t(i \mid \outidx_{t+1})$, $i \in [N]$.
As for \BS{}~\citep[Remark~2]{chopin2015particle}, the law of the output does not change if the indices of the reference are fixed, $\refidx_1 = \dotsb = \refidx_T = 1$, because the steps of the particle filter are exchangeable and the six selection rules are equivariant under relabelling of the particles.
We use uniformly drawn indices in the proofs and fixed indices in the experiments.
All six rules require the probabilities $\weightvec_T$, and, except for \AT{} and \FMAT{}, the probabilities $\bsprob_t(\cdot \mid \outidx_{t+1})$, $t \in [T-1]$.
\CMBS{} also requires the probabilities $\bsprob_t(\cdot \mid \refidx_{t+1})$ at the times $t$ at which $\outidx_{t+1} \neq \refidx_{t+1}$ and the draw from $\bsprob_t(\cdot \mid \outidx_{t+1})$ equals $\refidx_t$.
The costs of \BS{}, \FMBS{}, \MBS{}, and \CMBS{} are therefore of order $NT$.
The backward pass of \MBS{} costs exactly as much as that of \BS{}, and that of \CMBS{} at most twice as much.

\subsection{The forced move}\label{sec:forced}

Let $\mathbf{p}$ be a probability vector on $[N]$.
Under the Gibbs kernel, $\kernel(i, \cdot) = \mathbf{p}$, the new index is drawn from $\mathbf{p}$ without regard for the current index $i$.
The Metropolized Gibbs kernel of \citet{liu1996peskun}, which is the forced move of \citet{chopin2013particle}, proposes, from $i \in [N]$ with $p_i < 1$, an index $j \neq i$ with probability $p_j / (1 - p_i)$, and accepts it with probability $\min\{1, (1 - p_i) / (1 - p_j)\}$, so that
\begin{equation}
\begin{split}
  \forced_{\mathbf{p}}(i, j) & = \frac{p_j}{\max\{1 - p_i, 1 - p_j\}}, \quad j \neq i, \\
  \forced_{\mathbf{p}}(i, i) & = 1 - \sum_{j \neq i} \forced_{\mathbf{p}}(i, j),
\end{split} \label{eq:forced}
\end{equation}
and $\forced_{\mathbf{p}}(i, \cdot) \coloneqq \delta_i$ if $p_i = 1$.
For $N = 2$ and $p_1, p_2 > 0$, $\forced_{\mathbf{p}}(1, 2) = \min\{1, p_2 / p_1\}$ is the Metropolis acceptance probability, whereas the probability $p_2 = p_2 / (p_1 + p_2)$ of moving from $1$ to $2$ under the Gibbs kernel is that of \citet{barker1965monte}.
For a Markov kernel $\kernel$ on $[N]$ and $\varphi \colon [N] \to \mathbb{R}$, we write $\mathbf{p}(\varphi) \coloneqq \sum_i p_i \varphi(i)$ for the expectation of $\varphi$ with respect to $\mathbf{p}$, and
\begin{align*}
  \langle \varphi, \kernel \varphi \rangle_{\mathbf{p}} \coloneqq \sum_{i, j \in [N]} p_i \kernel(i, j) \varphi(i) \varphi(j).
\end{align*}
This is the expectation of $\varphi(i) \varphi(j)$ when $i \sim \mathbf{p}$ and $j \sim \kernel(i, \cdot)$, so that, if $\kernel$ leaves $\mathbf{p}$ invariant, $\langle \varphi, \kernel \varphi \rangle_{\mathbf{p}} - \mathbf{p}(\varphi)^2$ is the lag-one autocovariance of $\varphi$ for the Markov chain with kernel $\kernel$ started from $\mathbf{p}$.

\begin{lemma}\label{lem:forced}
The kernel $\forced_{\mathbf{p}}$ is reversible with respect to $\mathbf{p}$.
For $i \neq j$, $\forced_{\mathbf{p}}(i, j) \geq p_j$.
For every $\varphi \colon [N] \to \mathbb{R}$, $\langle \varphi, \forced_{\mathbf{p}} \varphi \rangle_{\mathbf{p}} \leq \mathbf{p}(\varphi)^2$.
\end{lemma}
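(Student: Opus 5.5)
Reversibility and the lower bound $\forced_{\mathbf{p}}(i,j) \geq p_j$ follow directly from the formula~\eqref{eq:forced}; the covariance bound is the main step, for which we compare $\forced_{\mathbf{p}}$ with the Gibbs kernel.

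\emph{Degenerate case.} If some $p_i = 1$, then $\mathbf{p} = \delta_i$. Every $j \neq i$ has $p_j = 0$, so the detailed balance terms $p_j \forced_{\mathbf{p}}(j,i)$ and $p_i \forced_{\mathbf{p}}(i,j) = 0$ both vanish. The covariance inequality then reads $\varphi(i)^2 \leq \varphi(i)^2$, which holds with equality. So assume every $p_i < 1$.

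\emph{Reversibility.} For $i \neq j$, we have
\[
p_i \forced_{\mathbf{p}}(i,j) = \frac{p_i p_j}{\max\{1-p_i, 1-p_j\}},
\]
which is symmetric in $(i,j)$. Diagonal terms are trivially symmetric. We should also check that $\forced_{\mathbf{p}}(i,i) \geq 0$: since $\max\{1-p_i, 1-p_j\} \geq 1-p_i$, we get $\sum_{j\neq i} \forced_{\mathbf{p}}(i,j) \leq \sum_{j \neq i} p_j/(1-p_i) = 1$.

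\emph{Lower bound.} Because $\max\{1-p_i, 1-p_j\} \leq 1$, it follows that $\forced_{\mathbf{p}}(i,j) \geq p_j$ for $i \neq j$.

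\emph{Covariance bound.} Write the Gibbs kernel as $G(i,\cdot) = \mathbf{p}$, so that $\langle \varphi, G\varphi\rangle_{\mathbf{p}} = \mathbf{p}(\varphi)^2$. We want $\langle \varphi, \forced_{\mathbf{p}}\varphi\rangle_{\mathbf{p}} \leq \langle \varphi, G\varphi\rangle_{\mathbf{p}}$.

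Both kernels are reversible with respect to $\mathbf{p}$, and the off-diagonal entries satisfy $\forced_{\mathbf{p}}(i,j) \geq G(i,j) = p_j$. The standard Peskun/Dirichlet-form argument then applies. For a reversible kernel $K$,
\[
\langle \varphi, K\varphi\rangle_{\mathbf{p}} = \mathbf{p}(\varphi^2) - \tfrac12 \sum_{i\neq j} p_i K(i,j)\bigl(\varphi(i)-\varphi(j)\bigr)^2 .
\]
This identity is obtained by expanding $\varphi(i)\varphi(j) = \tfrac12\bigl(\varphi(i)^2+\varphi(j)^2-(\varphi(i)-\varphi(j))^2\bigr)$, and using the row sums $\sum_j K(i,j) = 1$ together with reversibility to turn the $\varphi(j)^2$ term into $\mathbf{p}(\varphi^2)$. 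The diagonal contributes nothing to the sum, since its squared differences vanish. Hence $\langle \varphi, K\varphi\rangle_{\mathbf{p}}$ is nonincreasing in the off-diagonal entries of $K$. Applying this to $K = \forced_{\mathbf{p}}$ and $K = G$, the entrywise bound gives the claim.

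The only care needed is to verify the Dirichlet identity cleanly, which is routine. No step is a serious obstacle.
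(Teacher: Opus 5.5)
Your proof is correct and follows the paper's route. The first two parts are Liu's direct check from~\eqref{eq:forced}, with the case $p_i = 1$ handled separately; there the lower bound, which you do not state, is immediate from $\forced_{\mathbf{p}}(i,j) = 0 = p_j$ and $\forced_{\mathbf{p}}(j,i) = 1 = p_i$ for $j \neq i$. The third part is the Peskun comparison with the Gibbs kernel, which the paper cites as Tierney's Lemma~3 and which the formalization appendix writes out through the same Dirichlet-form identity you use (the computation in the proof of Lemma~\ref{lem:coupling} with $\mathbf{q} = \mathbf{p}$).
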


The first two parts of Lemma~\ref{lem:forced} are due to \citet[Section~2]{liu1996peskun} when $p_i < 1$ for all $i \in [N]$.
Because the probability of moving from $i$ to $j \neq i$ under the Gibbs kernel is $p_j$, the second part states that $\forced_{\mathbf{p}}$ dominates the Gibbs kernel off the diagonal, which is the ordering of \citet{peskun1973optimum}.
If $p_i = 1$ for some $i$, then the first two parts follow from $p_j = 0$ and $\forced_{\mathbf{p}}(j, i) = 1$ for $j \neq i$.
Because $\langle \varphi, \kernel \varphi \rangle_{\mathbf{p}} = \mathbf{p}(\varphi)^2$ for the Gibbs kernel, the third part follows from the first two by \citet[Lemma~3]{tierney1998note}.

\subsection{A coupling which avoids the current index}\label{sec:coupling}

The forced move also gives a coupling of two probability vectors $\mathbf{p}$ and $\mathbf{q}$ on $[N]$ under which the two indices coincide less often than when they are independent: this is the opposite of classically used maximal couplings~\citep{goldstein1979maximal}.
For $\mathbf{u} \in [0, \infty)^N$, let $\forced_{\mathbf{u}} \coloneqq \forced_{\mathbf{u} / (u_1 + \dotsb + u_N)}$ if $\mathbf{u} \neq \mathbf{0}$ and $\forced_{\mathbf{0}}(i, \cdot) \coloneqq \delta_i$.
With $\mathbf{p} \odot \mathbf{q} \coloneqq (p_1 q_1, \dotsc, p_N q_N)$, let
\begin{equation}
  \coupling_{\mathbf{p},\mathbf{q}}(i, \cdot) \coloneqq \mathbf{q} + q_i \{\forced_{\mathbf{p} \odot \mathbf{q}}(i, \cdot) - \delta_i\}, \quad i \in [N]. \label{eq:coupling}
\end{equation}
That is, an index is drawn from $\mathbf{q}$ and replaced, if it equals $i$, by a forced move from $i$ with respect to the probabilities proportional to $p_j q_j$, $j \in [N]$.
In particular, $\coupling_{\mathbf{p},\mathbf{q}}(i, i) = q_i \forced_{\mathbf{p} \odot \mathbf{q}}(i, i)$, so that the entries of $\coupling_{\mathbf{p},\mathbf{q}}$ are nonnegative and $\coupling_{\mathbf{p},\mathbf{q}}$ is a Markov kernel.

\begin{lemma}\label{lem:coupling}
Let $\mathbf{p}$ and $\mathbf{q}$ be probability vectors on $[N]$.
For all $i, i' \in [N]$, $p_i \coupling_{\mathbf{p},\mathbf{q}}(i, i') = q_{i'} \coupling_{\mathbf{q},\mathbf{p}}(i', i)$.
For $i \neq i'$, $\coupling_{\mathbf{p},\mathbf{q}}(i, i') \geq q_{i'}$.
For every $\varphi \colon [N] \to \mathbb{R}$, $\langle \varphi, \coupling_{\mathbf{p},\mathbf{q}} \varphi \rangle_{\mathbf{p}} \leq \mathbf{p}(\varphi) \mathbf{q}(\varphi)$.
\end{lemma}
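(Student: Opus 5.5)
We prove the three parts directly from the definition~\eqref{eq:coupling}, writing $\mathbf{r} \coloneqq \mathbf{p} \odot \mathbf{q}$. This vector is symmetric in $\mathbf{p}$ and $\mathbf{q}$, so $\forced_{\mathbf{r}}$ is the same kernel in $\coupling_{\mathbf{p},\mathbf{q}}$ and in $\coupling_{\mathbf{q},\mathbf{p}}$. By Lemma~\ref{lem:forced}, $\forced_{\mathbf{r}}$ is reversible with respect to $\mathbf{r}$ normalized, that is, $r_i \forced_{\mathbf{r}}(i,i') = r_{i'} \forced_{\mathbf{r}}(i',i)$. This identity is trivial if $\mathbf{r} = \mathbf{0}$, and for $\mathbf{r} \neq \mathbf{0}$ it survives multiplication by the normalizing constant.

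\emph{Symmetry.} For $i \neq i'$ we have $p_i \coupling_{\mathbf{p},\mathbf{q}}(i,i') = p_i q_{i'} + p_i q_i \forced_{\mathbf{r}}(i,i')$. The first term is the required $p_i q_{i'}$, and the second is $r_i \forced_{\mathbf{r}}(i,i')$, which is symmetric by reversibility. Hence this expression equals $q_{i'} p_i + q_{i'} p_{i'} \forced_{\mathbf{r}}(i',i) = q_{i'} \coupling_{\mathbf{q},\mathbf{p}}(i',i)$. For $i = i'$ we have $p_i \coupling_{\mathbf{p},\mathbf{q}}(i,i) = p_i q_i \forced_{\mathbf{r}}(i,i)$, which is symmetric in $(\mathbf{p},\mathbf{q})$.

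\emph{Peskun-type bound.} For $i \neq i'$, $\coupling_{\mathbf{p},\mathbf{q}}(i,i') = q_{i'} + q_i \forced_{\mathbf{r}}(i,i') \geq q_{i'}$, since the entries of $\forced_{\mathbf{r}}$ are nonnegative.

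\emph{Covariance bound.} This is the step requiring care, because $\coupling_{\mathbf{p},\mathbf{q}}$ is not a reversible kernel and \citet{tierney1998note} does not apply verbatim. We use the definition directly instead:
\begin{align*}
\langle \varphi, \coupling_{\mathbf{p},\mathbf{q}} \varphi\rangle_{\mathbf{p}}
= \mathbf{p}(\varphi)\mathbf{q}(\varphi) + \sum_i r_i \varphi(i)\Bigl\{\sum_{j} \forced_{\mathbf{r}}(i,j)\varphi(j) - \varphi(i)\Bigr\}.
\end{align*}
The correction term equals $\langle \varphi, \forced_{\mathbf{r}}\varphi\rangle_{\mathbf{r}} - \mathbf{r}(\varphi^2)$, written with the unnormalized $\mathbf{r}$. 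If $\mathbf{r} = \mathbf{0}$, it is zero. Otherwise we divide by $|\mathbf{r}| \coloneqq r_1 + \dotsb + r_N > 0$ and apply the third part of Lemma~\ref{lem:forced} to $\bar{\mathbf{r}} \coloneqq \mathbf{r}/|\mathbf{r}|$. This bounds the correction by $|\mathbf{r}|\{\bar{\mathbf{r}}(\varphi)^2 - \bar{\mathbf{r}}(\varphi^2)\} = -|\mathbf{r}|\operatorname{Var}_{\bar{\mathbf{r}}}(\varphi) \leq 0$. This gives the claim.

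Alternatively, one can rewrite the correction as $-\tfrac12\sum_{i,j} r_i \forced_{\mathbf{r}}(i,j)\{\varphi(i)-\varphi(j)\}^2 \leq 0$, using the reversibility of $\forced_{\mathbf{r}}$ with respect to $\mathbf{r}$ and the fact that its rows sum to one. This Dirichlet-form identity gives the same conclusion without invoking the third part of Lemma~\ref{lem:forced}, and I would present it as the main argument.
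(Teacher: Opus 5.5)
Your proof is correct and follows the paper's. The first two parts are argued identically, and your Dirichlet-form identity for the third part matches the paper's term by term, since $p_i\{\coupling_{\mathbf{p},\mathbf{q}}(i,i') - q_{i'}\} = r_i\{\forced_{\mathbf{r}}(i,i') - \delta_i(i')\}$ for all $i, i' \in [N]$. The only difference is that the paper gets the zero column sums of these coefficients from the first part and their off-diagonal nonnegativity from the second, rather than from the reversibility and nonnegativity of $\forced_{\mathbf{r}}$. Your variant through the third part of Lemma~\ref{lem:forced} applied to $\bar{\mathbf{r}}$ is also valid, and it gives the slightly sharper bound $\langle \varphi, \coupling_{\mathbf{p},\mathbf{q}} \varphi \rangle_{\mathbf{p}} \leq \mathbf{p}(\varphi)\mathbf{q}(\varphi) - |\mathbf{r}| \operatorname{Var}_{\bar{\mathbf{r}}}(\varphi)$.
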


By the first part of Lemma~\ref{lem:coupling}, if $i \sim \mathbf{p}$ and $i' \sim \coupling_{\mathbf{p},\mathbf{q}}(i, \cdot)$, then $i' \sim \mathbf{q}$.
The two indices then coincide with probability $\sum_i p_i q_i \forced_{\mathbf{p} \odot \mathbf{q}}(i, i)$, whereas independent indices coincide with probability $\sum_i p_i q_i$.
If $\mathbf{p} = \mathbf{q}$ is uniform, then $\forced_{\mathbf{p} \odot \mathbf{q}}(i, i) = 0$ for all $i \in [N]$, so that the two indices never coincide.
The second and third parts of Lemma~\ref{lem:coupling} compare $\coupling_{\mathbf{p},\mathbf{q}}$ with the independent coupling, whose rows all equal $\mathbf{q}$, as the second and third parts of Lemma~\ref{lem:forced} compare $\forced_{\mathbf{p}}$ with the Gibbs kernel.

\section{Validity}\label{sec:validity}

Our proofs rely on the extended distribution of \citet{andrieu2010particle} on the particle positions $\pos \coloneqq (x_t^n)_{t \in [T], n \in [N]}$, the ancestor indices $\anc \coloneqq (\anc_t^n)_{t \in [T-1], n \in [N]}$, and a final index $\refidx_T \in [N]$, which have joint density
\begin{equation}
\begin{split}
  \exttarget(\pos, \anc, \refidx_T) \coloneqq {} & {\frac{\weight_T^{\refidx_T}}{\normconst N^T} \prod_{t=1}^T \Bigl\{\sum_{i=1}^N \pot_t(x_t^i)\Bigr\} \prod_{n=1}^N \transdens_1(x_1^n)} \\
  & \times \prod_{t=2}^T \prod_{n=1}^N \weight_{t-1}^{\anc_{t-1}^n} \transdens_t(x_{t-1}^{\anc_{t-1}^n}, x_t^n).
\end{split} \label{eq:extended}
\end{equation}
The trajectory traced back from $\refidx_T$ is $(x_1^{\refidx_1}, \dotsc, x_T^{\refidx_T})$, with genealogy $\refidx_t \coloneqq \anc_t^{\refidx_{t+1}}$ for $t \in [T-1]$.
For the variables generated by steps~\ref{alg:csmc:indices} and~\ref{alg:csmc:forward} of Algorithm~\ref{alg:csmc}, these indices are those of the reference drawn in \algstep{alg:csmc:indices}, because $\anc_t^{\refidx_{t+1}} = \refidx_t$ by \algstep{alg:csmc:forward}.

\begin{lemma}\label{lem:extended}
Let $x_{1:T} \sim \target$ and run steps~\ref{alg:csmc:indices} and~\ref{alg:csmc:forward} of Algorithm~\ref{alg:csmc}.
Then $(\pos, \anc, \refidx_T) \sim \exttarget$, with $x_{1:T}$ the trajectory traced back from $\refidx_T$.
Under $\exttarget$, conditionally on $\pos$, the indices $\refidx_T$ and $\anc_t^n$, $t \in [T-1]$, $n \in [N]$, are independent, with the distributions $\weightvec_T$ and $\bsprob_t(\cdot \mid n)$, respectively.
\end{lemma}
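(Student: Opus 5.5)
I would prove Lemma~\ref{lem:extended} by a direct density computation, in the standard manner of \citet{andrieu2010particle}, followed by a factorization of~\eqref{eq:extended} in the indices given the positions.

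\textbf{Joint law of the variables produced by the algorithm.} Let $x_{1:T} \sim \target$. Steps~\ref{alg:csmc:indices} and~\ref{alg:csmc:forward} produce $\refidx_{1:T}$, $\pos$ and $\anc$. Once $\refidx_T$ and $\anc$ are known, the indices $\refidx_{1:T-1}$ are determined by the genealogy $\refidx_t = \anc_t^{\refidx_{t+1}}$. The law of $(\pos, \anc, \refidx_T)$ therefore has density, with respect to $\dd\pos$ times counting measure, equal to
\begin{equation*}
  N^{-T}\, \target(x_1^{\refidx_1}, \dotsc, x_T^{\refidx_T}) \prod_{n \neq \refidx_1} \transdens_1(x_1^n) \prod_{t=2}^T \prod_{n \neq \refidx_t} \weight_{t-1}^{\anc_{t-1}^n} \transdens_t(x_{t-1}^{\anc_{t-1}^n}, x_t^n),
\end{equation*}
evaluated at $\refidx_t \coloneqq \anc_t^{\refidx_{t+1}}$. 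The factor $N^{-T}$ comes from the uniform draws in \algstep{alg:csmc:indices}. Summing over the unobserved $\refidx_{1:T-1}$ is harmless, because only the choice consistent with the genealogy has positive probability.

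\textbf{Matching with~\eqref{eq:extended}.} I would write $\exttarget$ as the full product over all $n$, and then divide out the reference factors $\transdens_1(x_1^{\refidx_1})$ and $\weight_{t-1}^{\refidx_{t-1}}\transdens_t(x_{t-1}^{\refidx_{t-1}}, x_t^{\refidx_t})$, using $\anc_{t-1}^{\refidx_t} = \refidx_{t-1}$. Substituting $\weight_t^{\refidx_t} = \pot_t(x_t^{\refidx_t}) / \sum_i \pot_t(x_t^i)$, the normalizers $\sum_i \pot_t(x_t^i)$ cancel for $t = 1, \dotsc, T$; this includes the factor $\weight_T^{\refidx_T}$ at time $T$. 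What remains is $\transdens_1(x_1^{\refidx_1}) \pot_1(x_1^{\refidx_1}) \prod_{t \geq 2} \transdens_t \pot_t / \normconst$ along the reference path, which is exactly $\target$ of the traced trajectory. So the two densities agree, and $x_{1:T}$ is the trajectory traced back from $\refidx_T$. The only real care needed is this bookkeeping: one has to check that each $\pot_t$ normalizer appears once in the numerator and once via $\weight_t^{\refidx_t}$.

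\textbf{Conditional law given $\pos$.} With $\pos$ fixed, \eqref{eq:extended} is a product of one factor $\weight_T^{\refidx_T}$ and, for each pair $(t, n)$, a factor $\weight_t^{\anc_t^n} \transdens_{t+1}(x_t^{\anc_t^n}, x_{t+1}^n)$. All remaining factors depend on $\pos$ only. The conditional density therefore factorizes into independent components. Normalizing each factor over its index gives $\weightvec_T$ for $\refidx_T$ and $\bsprob_t(\cdot \mid n)$ for $\anc_t^n$, by definition~\eqref{eq:bs}. These normalizers are positive because $\pot_t > 0$, and they are finite almost everywhere under the marginal law of $\pos$, so no degenerate case arises.
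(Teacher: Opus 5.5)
Your proposal is correct and follows the paper's proof essentially step for step. Both arguments write the density of $(\pos, \anc, \refidx_{1:T})$ on the set where $\anc_{t-1}^{\refidx_t} = \refidx_{t-1}$, complete the products over $n$ using the reference factors and $\pot_t(x_t^{\refidx_t}) = \weight_t^{\refidx_t} \sum_i \pot_t(x_t^i)$, pass to $(\pos, \anc, \refidx_T)$ because $\refidx_{1:T-1}$ is a function of $(\anc, \refidx_T)$, and factorize in $(\anc, \refidx_T)$ given $\pos$.

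One small inaccuracy: the normalizer of $\bsprob_t(\cdot \mid n)$ is not positive just because $\pot_t > 0$, since this also needs some $\transdens_{t+1}(x_t^k, x_{t+1}^n) > 0$. It is nevertheless positive $\exttarget$-almost surely, because these sums are factors of the marginal density of $\pos$.
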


The first part of Lemma~\ref{lem:extended} is due to \citet{andrieu2010particle}, in the form of \citet[Section~2.3]{chopin2015particle}, where the indices of the reference are drawn uniformly before the other particles are generated.
The second part of Lemma~\ref{lem:extended} is the observation of \citet[proof of Proposition~9]{chopin2015particle} that, with multinomial resampling, \BS{} is a Gibbs update of all the ancestor indices given the particle positions.
It relies on the potential functions depending on the current state only, so that the weights do not depend on the ancestor indices.

Combined with steps~\ref{alg:csmc:final} and~\ref{alg:csmc:backward} of Algorithm~\ref{alg:csmc}, Lemma~\ref{lem:extended} gives the joint distribution of the indices $\refidx_{1:T}$ of the reference and $\outidx_{1:T}$ of the output trajectory: given the particle positions, the pairs $(\refidx_t, \outidx_t)$ form a Markov chain backwards in time.

\begin{lemma}\label{lem:pair}
Let $x_{1:T} \sim \target$ and run Algorithm~\ref{alg:csmc}.
Conditionally on $\pos$, $\refidx_T \sim \weightvec_T$ and, given $\refidx_T$, $\outidx_T \sim \kernel_T(\refidx_T, \cdot)$.
For $t \in [T-1]$, conditionally on $\pos$ and $(\refidx_{t+1:T}, \outidx_{t+1:T})$, $\refidx_t \sim \bsprob_t(\cdot \mid \refidx_{t+1})$ and, given $\refidx_t$, $\outidx_t \sim \kernel_{t, \refidx_{t+1}, \outidx_{t+1}}(\refidx_t, \cdot)$.
If $\kernel_{t,j,j'}(i, \cdot) = \bsprob_t(\cdot \mid j')$ for all $j \neq j'$ and $t \in [T-1]$, the same holds with $\outidx_t \sim \kernel_{t, \outidx_{t+1}, \outidx_{t+1}}(\anc_t^{\outidx_{t+1}}, \cdot)$ in \algstep{alg:csmc:backward}.
\end{lemma}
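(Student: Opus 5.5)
The proof rests on Lemma~\ref{lem:extended}. The idea is to read off the conditional structure of the reference genealogy under $\exttarget$, and then to append the output indices, which Algorithm~\ref{alg:csmc} draws using only $\pos$, the reference indices and previously drawn output indices.

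\emph{The reference indices.} Take $x_{1:T} \sim \target$ and run steps~\ref{alg:csmc:indices} and~\ref{alg:csmc:forward}. By the first part of Lemma~\ref{lem:extended}, $(\pos, \anc, \refidx_T) \sim \exttarget$ and $\refidx_t = \anc_t^{\refidx_{t+1}}$. By its second part, conditionally on $\pos$ the variables $\refidx_T$ and $(\anc_t^n)_{t,n}$ are independent, with $\refidx_T \sim \weightvec_T$ and $\anc_t^n \sim \bsprob_t(\cdot \mid n)$. Fix $t \in [T-1]$ and condition on $\pos$ and $\refidx_{t+1:T}$. These are measurable with respect to $\pos$, $\refidx_T$ and $\anc_s^n$ for $s > t$. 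The family $(\anc_t^n)_n$ is independent of these variables given $\pos$. Hence, conditionally on $\pos$, $\refidx_{t+1:T}$ and $\{\anc_s^n : s > t\}$, we get $\refidx_t = \anc_t^{\refidx_{t+1}} \sim \bsprob_t(\cdot \mid \refidx_{t+1})$. The same independence shows that $\anc_t^n$ for $n \neq \refidx_{t+1}$ is, given all this together with $\refidx_t$, still distributed as $\bsprob_t(\cdot \mid n)$ independently. This fact is needed for the last claim.

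\emph{Adding the output.} In steps~\ref{alg:csmc:final} and~\ref{alg:csmc:backward}, $\outidx_T$ is drawn from $\kernel_T(\refidx_T, \cdot)$ and $\outidx_t$ from $\kernel_{t,\refidx_{t+1},\outidx_{t+1}}(\refidx_t, \cdot)$. The kernels depend only on $\pos$, and the fresh randomness is independent of everything else. Thus, conditioning additionally on $\outidx_{t+1:T}$ does not change the conditional law of $\refidx_t$, because $\outidx_{t+1:T}$ is a randomized function of $(\pos, \refidx_{t+1:T})$. Given $\refidx_t$, $\outidx_t$ then has the stated kernel law. Induction backwards in $t$ gives the Markov description.

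\emph{The alternative implementation.} Assume $\kernel_{t,j,j'}(i,\cdot)=\bsprob_t(\cdot\mid j')$ for $j\neq j'$. Draw $\outidx_t \sim \kernel_{t,\outidx_{t+1},\outidx_{t+1}}(\anc_t^{\outidx_{t+1}},\cdot)$, and split into cases.
\begin{itemize}
\item If $\outidx_{t+1} = \refidx_{t+1}$, then $\anc_t^{\outidx_{t+1}} = \refidx_t$, and the draw coincides with the original one.
\item If $\outidx_{t+1} \neq \refidx_{t+1}$, the original rule gives $\outidx_t \sim \bsprob_t(\cdot \mid \outidx_{t+1})$ independently of $\refidx_t$.
\end{itemize}
In the second case the new rule uses $\anc_t^{\outidx_{t+1}}$ with $\outidx_{t+1} \neq \refidx_{t+1}$. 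By the independence noted above, this variable is distributed as $\bsprob_t(\cdot \mid \outidx_{t+1})$ independently of $\refidx_t$ and of the past of the backward recursion. The kernel $\kernel_{t,\outidx_{t+1},\outidx_{t+1}}$ leaves $\bsprob_t(\cdot\mid\outidx_{t+1})$ invariant: this is \eqref{eq:symmetry} with $j=j'$, which gives reversibility. So $\outidx_t \sim \bsprob_t(\cdot \mid \outidx_{t+1})$, as required.

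The main obstacle is the bookkeeping in this second case. Later steps may consult $\anc_s^{\outidx_{s+1}}$ at other times $s$, and also $\anc_t^n$ has now been used. I would therefore prove the following joint statement by backward induction. Given $\pos$ and everything generated at times $>t$, the unused variables $\anc_t^n$ (all $n$) are still conditionally independent with laws $\bsprob_t(\cdot\mid n)$. This holds because, at times $s>t$, the procedure consults only ancestors $\anc_s^{\cdot}$. Consequently the pair $(\refidx_t,\outidx_t)$ has the claimed conditional law.
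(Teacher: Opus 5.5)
Your proposal is correct and follows essentially the same route as the paper. Both arguments use the conditional independence of the $\anc_t^n$ given $\pos$ from Lemma~\ref{lem:extended}, and observe that everything generated at times $s>t$ is a function of $\refidx_T$, the $\anc_s^n$ with $s>t$, and fresh draws; this holds for both implementations. They then read off $\refidx_t=\anc_t^{\refidx_{t+1}}\sim\bsprob_t(\cdot\mid\refidx_{t+1})$ and split on whether $\outidx_{t+1}=\refidx_{t+1}$, invoking the invariance of $\bsprob_t(\cdot\mid\outidx_{t+1})$ under $\kernel_{t,\outidx_{t+1},\outidx_{t+1}}$ in the second case.

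The only cosmetic difference is the order of the argument. You first handle the original implementation through the weaker fact that $\outidx_{t+1:T}$ is a randomized function of $(\pos,\refidx_{t+1:T})$. The paper instead conditions on all variables at times $>t$ from the start, which covers both implementations at once.
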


Special cases of Proposition~\ref{prop:reversible} are known: for \AT{} (\citealp[Proposition~8]{chopin2015particle}; \citealp[Theorem~1]{andrieu2018uniform}), for \BS{}~\citep[Proposition~9]{chopin2015particle}, and for backward sampling at a subset of the time steps combined with ancestral tracing at the others~\citep[Proposition~15]{chopin2013particle}.
We are not aware of a proof for general kernels $\kernel_T$ and $\kernel_{t,j,j'}$, so we give one here.

\begin{proposition}\label{prop:reversible}
For any kernels $\kernel_T$ and $\kernel_{t,j,j'}$ as in Algorithm~\ref{alg:csmc}, the Markov kernel defined by Algorithm~\ref{alg:csmc} is reversible with respect to $\target$.
In particular, it leaves $\target$ invariant.
\end{proposition}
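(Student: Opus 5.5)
Reversibility with respect to $\target$ means that if $x_{1:T} \sim \target$ and $\tilde{x}_{1:T}$ is the output of Algorithm~\ref{alg:csmc}, then the pair $(x_{1:T}, \tilde{x}_{1:T})$ is exchangeable. The plan is to prove a stronger statement on the extended space. Given $\pos$, the pair of index sequences $(\refidx_{1:T}, \outidx_{1:T})$ has the same conditional law as $(\outidx_{1:T}, \refidx_{1:T})$. Since $x_{1:T} = (x_t^{\refidx_t})_t$ and $\tilde{x}_{1:T} = (x_t^{\outidx_t})_t$ are the same function of $\pos$ applied to the two index sequences, exchangeability of the trajectories follows by integrating over $\pos$. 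Invariance then follows by marginalising.

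By Lemma~\ref{lem:extended}, $\pos$ has the marginal of $\exttarget$ and the reference indices are the genealogy traced from $\refidx_T$. By Lemma~\ref{lem:pair}, conditionally on $\pos$, the joint probability of $(\refidx_{1:T}, \outidx_{1:T}) = (i_{1:T}, l_{1:T})$ is the product
\begin{equation*}
  \weight_T^{i_T} \kernel_T(i_T, l_T) \prod_{t=1}^{T-1} \bsprob_t(i_t \mid i_{t+1}) \kernel_{t, i_{t+1}, l_{t+1}}(i_t, l_t).
\end{equation*}
It then suffices to show that this expression is symmetric under swapping $i_{1:T}$ with $l_{1:T}$. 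This follows factor by factor. Reversibility of $\kernel_T$ with respect to $\weightvec_T$ gives $\weight_T^{i_T} \kernel_T(i_T, l_T) = \weight_T^{l_T} \kernel_T(l_T, i_T)$. The symmetry condition~\eqref{eq:symmetry}, with $j = i_{t+1}$, $j' = l_{t+1}$, $i = i_t$, $i' = l_t$, gives exactly $\bsprob_t(i_t \mid i_{t+1}) \kernel_{t,i_{t+1},l_{t+1}}(i_t, l_t) = \bsprob_t(l_t \mid l_{t+1}) \kernel_{t,l_{t+1},i_{t+1}}(l_t, i_t)$. Multiplying over $t$ yields the symmetry of the joint conditional law.

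The delicate step is justifying the product form. One must check that, conditionally on $\pos$, the reference index $\refidx_t$ given $(\refidx_{t+1:T}, \outidx_{t+1:T})$ has law $\bsprob_t(\cdot \mid \refidx_{t+1})$. This holds because $\refidx_t = \anc_t^{\refidx_{t+1}}$, the ancestors are conditionally independent given $\pos$ (Lemma~\ref{lem:extended}), and $\outidx_{t+1:T}$ depends only on $\pos$, on $\refidx_{t+1:T}$ and on the independent kernel randomness. The kernels must also be functions of $\pos$ only, not of $\anc$, which is assumed in Algorithm~\ref{alg:csmc}. All of this is Lemma~\ref{lem:pair}, which I take as given.

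Finally, write $P(x_{1:T}, \dd\tilde{x}_{1:T})$ for the kernel. For bounded $f$,
\begin{equation*}
  \bbE\{f(x_{1:T}, \tilde{x}_{1:T})\} = \bbE\Bigl[\sum_{i,l} p(i,l \mid \pos) f\bigl((x_t^{i_t})_t, (x_t^{l_t})_t\bigr)\Bigr],
\end{equation*}
where $p(i, l \mid \pos)$ is the product above. This expression is symmetric in the two arguments of $f$, which gives $\target(\dd x) P(x, \dd\tilde{x}) = \target(\dd\tilde{x}) P(\tilde{x}, \dd x)$.

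The extra ancestors and particles not on either trajectory are integrated out inside the expectation, so they need no further treatment.
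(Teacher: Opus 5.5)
Your proposal is correct and follows essentially the same route as the paper: it uses Lemma~\ref{lem:pair} to write the conditional law of $(\refidx_{1:T}, \outidx_{1:T})$ given $\pos$ as a backward Markov chain on $[N]^2$, obtains the symmetry of the initial factor from the reversibility of $\kernel_T$ and of each transition factor from~\eqref{eq:symmetry}, and then integrates over $\pos$. Your explicit product formula is the same expression the paper's formalization appendix computes, so the two arguments are the same up to presentation.
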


By Lemma~\ref{lem:pair}, conditionally on the particle positions, the pairs $(\refidx_t, \outidx_t)$ form a backward Markov chain on $[N]^2$ whose initial distribution and transitions, by the reversibility of $\kernel_T$ and by~\eqref{eq:symmetry}, do not change when the indices of the reference and of the output are exchanged, which gives the reversibility.

Proposition~\ref{prop:reversible} applies to the six selection rules of Section~\ref{sec:method}, because the identity kernel, the Gibbs kernel, and, by Lemma~\ref{lem:forced}, $\forced_{\mathbf{p}}$ are reversible with respect to $\mathbf{p}$, whereas, for $j \neq j'$, the kernels $\kernel_{t,j,j'}(i, \cdot) = \bsprob_t(\cdot \mid j')$ satisfy~\eqref{eq:symmetry}, and so do the kernels $\kernel_{t,j,j'} = \coupling_{\bsprob_t(\cdot \mid j), \bsprob_t(\cdot \mid j')}$ of \CMBS{}, by the first part of Lemma~\ref{lem:coupling}.
For \BS{}, by Lemma~\ref{lem:extended}, $\outidx_{1:T}$ has the distribution of the indices traced back from an independent draw of $(\refidx_T, \anc)$ from its conditional distribution given $\pos$.
Proposition~\ref{prop:reversible} then follows from the reversibility of this Gibbs update, which is the proof of \citet[Proposition~9]{chopin2015particle}.

\section{Comparison of the selection rules}\label{sec:comparison}

\subsection{Update rates and autocovariances}\label{sec:update-rates}

Throughout this section, $x_{1:T} \sim \target$ and $\tilde{x}_{1:T}$ is the output of one iteration of Algorithm~\ref{alg:csmc}, so that $x_t = x_t^{\refidx_t}$ and $\tilde{x}_t = x_t^{\outidx_t}$ for $t \in [T]$.
Probabilities and expectations conditional on $\pos$ are denoted by $\bbP_{\pos}$ and $\bbE_{\pos}$, while those with the name of a selection rule as subscript, such as $\bbP_{\MBS{}}$, are taken under the joint distribution of all the variables, with $x_{1:T} \sim \target$ and $\tilde{x}_{1:T}$ the output of Algorithm~\ref{alg:csmc} for said selection rule.

The comparisons rely on a backward recursion for the expected products of a function at the indices of the reference and of the output.
For $\varphi \colon [N] \to \mathbb{R}$, $t \in [T]$, $s \in [T-1]$, and $j, j' \in [N]$, let
\begin{align*}
  \expprod_t(\varphi) & \coloneqq \bbE_{\pos}\{\varphi(\refidx_t) \varphi(\outidx_t)\}, \\
  (\bsprob_s \varphi)(j) & \coloneqq \sum_{i=1}^N \bsprob_s(i \mid j) \varphi(i), \\
  \excess_{s,j,j'}(\varphi) & \coloneqq \langle \varphi, \kernel_{s,j,j'} \varphi \rangle_{\bsprob_s(\cdot \mid j)} - (\bsprob_s \varphi)(j) (\bsprob_s \varphi)(j').
\end{align*}
By Lemma~\ref{lem:pair}, $\expprod_T(\varphi) = \langle \varphi, \kernel_T \varphi \rangle_{\weightvec_T}$.
For $t \in [T-1]$, also by Lemma~\ref{lem:pair}, $\langle \varphi, \kernel_{t,j,j'} \varphi \rangle_{\bsprob_t(\cdot \mid j)}$ is the conditional expectation of $\varphi(\refidx_t) \varphi(\outidx_t)$ given $\refidx_{t+1} = j$ and $\outidx_{t+1} = j'$, while it equals $(\bsprob_t \varphi)(j) (\bsprob_t \varphi)(j') + \excess_{t,j,j'}(\varphi)$ by definition, so that
\begin{equation}
\begin{split}
  \expprod_t(\varphi) = {} & \expprod_{t+1}(\bsprob_t \varphi) \\
  & + \sum_{j, j' \in [N]} \bbP_{\pos}(\refidx_{t+1} = j, \outidx_{t+1} = j') \excess_{t,j,j'}(\varphi).
\end{split} \label{eq:qt}
\end{equation}
For \BS{} and \FMBS{}, $\excess_{t,j,j'}(\varphi) = 0$.
For \AT{}, $\excess_{t,j,j'}(\varphi)$ is the variance of $\varphi$ under $\bsprob_t(\cdot \mid j)$ if $j = j'$ and zero otherwise, whereas, for \MBS{} and \CMBS{}, $\excess_{t,j,j'}(\varphi) \leq 0$, by Lemma~\ref{lem:forced} if $j = j'$, and, if $j \neq j'$, because $\excess_{t,j,j'}(\varphi) = 0$ for \MBS{} and by Lemma~\ref{lem:coupling} for \CMBS{}.
Backward induction on~\eqref{eq:qt} therefore orders the selection rules at every time step.

\begin{theorem}\label{thm:ordering}
Let $t \in [T]$ and let $f \colon \statespace \to \mathbb{R}$ be measurable with $\int f(x_t)^2 \target(x_{1:T}) \dd x_{1:T} < \infty$.
Then
\begin{align*}
  \bbE_{\MBS{}}\{f(x_t) f(\tilde{x}_t)\} & \leq \bbE_{\FMBS{}}\{f(x_t) f(\tilde{x}_t)\} \\
  & \leq \bbE_{\BS{}}\{f(x_t) f(\tilde{x}_t)\} \\
  & \leq \bbE_{\AT{}}\{f(x_t) f(\tilde{x}_t)\},
\end{align*}
and the first inequality also holds with \CMBS{} in place of \MBS{}.
Moreover,
\begin{align*}
  \bbP_{\MBS{}}(\tilde{x}_t \neq x_t) & \geq \bbP_{\FMBS{}}(\tilde{x}_t \neq x_t) \\
  & \geq \bbP_{\BS{}}(\tilde{x}_t \neq x_t) \geq \bbP_{\AT{}}(\tilde{x}_t \neq x_t),
\end{align*}
and the first inequality also holds with \CMBS{} in place of \MBS{}.
\end{theorem}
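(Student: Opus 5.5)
We first reduce the statement about $f$ to one about functions on $[N]$: conditionally on $\pos$, put $\varphi(n) \coloneqq f(x_t^n)$. Then $\bbE\{f(x_t) f(\tilde{x}_t)\} = \bbE\{\expprod_t(\varphi)\}$, where the outer expectation is over $\pos$. By Lemma~\ref{lem:extended}, the marginal law of $\pos$ is the same for all selection rules, since steps~\ref{alg:csmc:indices} and~\ref{alg:csmc:forward} do not involve the kernels. It therefore suffices to prove the ordering of $\expprod_t(\varphi)$ pointwise in $\pos$, for every $\varphi$. Square integrability of $f$ under $\target$ makes all these expectations finite, because $|f(x_t) f(\tilde{x}_t)| \leq \{f(x_t)^2 + f(\tilde{x}_t)^2\}/2$ and $\tilde{x}_{1:T} \sim \target$ by Proposition~\ref{prop:reversible}.

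The pointwise comparison proceeds by backward induction on $t$ using~\eqref{eq:qt}. We prove the stronger claim that, for every $t$ and every $\varphi \colon [N] \to \mathbb{R}$, $\expprod_t^{\MBS{}}(\varphi) \leq \expprod_t^{\FMBS{}}(\varphi) \leq \expprod_t^{\BS{}}(\varphi) \leq \expprod_t^{\AT{}}(\varphi)$, and similarly with \CMBS{}. At $t = T$ we have $\expprod_T(\varphi) = \langle \varphi, \kernel_T \varphi \rangle_{\weightvec_T}$, which equals $\weightvec_T(\varphi)^2$ for the Gibbs kernel (\AT{}, \BS{}) and is at most $\weightvec_T(\varphi)^2$ for $\forced_{\weightvec_T}$ (\FMBS{}, \MBS{}, \CMBS{}) by Lemma~\ref{lem:forced}. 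For the inductive step, apply~\eqref{eq:qt} to both rules being compared. The first term $\expprod_{t+1}(\bsprob_t \varphi)$ is ordered by the induction hypothesis applied to the function $\bsprob_t \varphi$. The crux is that the hypothesis is quantified over all functions, and $\bsprob_t \varphi$ depends only on $\pos$, not on the rule.

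The second term is the main obstacle, because it is weighted by $\bbP_{\pos}(\refidx_{t+1} = j, \outidx_{t+1} = j')$, which differs between rules. We handle it through signs only. For \BS{} and \FMBS{}, $\excess_{t,j,j'} = 0$, so the term vanishes. For \MBS{} and \CMBS{}, $\excess_{t,j,j'}(\varphi) \leq 0$ by Lemmas~\ref{lem:forced} and~\ref{lem:coupling}, so the term is $\leq 0$ whatever the weights. For \AT{}, $\excess$ is a variance and hence $\geq 0$. Thus \MBS{} $\leq$ \FMBS{}: first term ordered, second term $\leq 0 = $ second term of \FMBS{}. \FMBS{} vs \BS{}: both have second term zero, and the first terms are ordered. \BS{} $\leq$ \AT{}: likewise, with the \AT{} second term nonnegative.

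For update rates, take $\varphi = \mathbf{1}_{\{n\}}$ and sum over $n$. This gives $\sum_n \expprod_t(\mathbf{1}_{\{n\}}) = \bbP_{\pos}(\outidx_t = \refidx_t)$, so the index-coincidence probabilities are ordered in the reverse direction. To pass to $\tilde{x}_t \neq x_t$ when distinct particles may share positions, we instead apply the inequality with $f = \mathbf{1}_{\{y\}}$, summed over atoms. More cleanly, we condition on $\pos$ and use $\varphi_y(n) = \mathbf{1}\{x_t^n = y\}$ for each distinct value $y$ among $\{x_t^n\}$. Then $\bbP_{\pos}(\tilde{x}_t = x_t) = \sum_y \expprod_t(\varphi_y)$, and each term is ordered by the pointwise claim. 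Taking complements and expectations over $\pos$ gives the update-rate ordering.
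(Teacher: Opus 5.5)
Your proposal is correct and follows the paper's proof essentially step for step. You use the same backward induction over all $\varphi \colon [N] \to \mathbb{R}$ via \eqref{eq:qt} and the signs of $\excess_{t,j,j'}$, the same reduction with $\varphi(n) = f(x_t^n)$ and finiteness via Proposition~\ref{prop:reversible}, and your indicators $\varphi_y$ for the update rates are exactly the paper's $\mathbf{1}_S$ over classes $S$ of indices with equal particles at time $t$ (the intermediate idea of taking $f = \mathbf{1}_{\{y\}}$ summed over atoms is unnecessary and can simply be dropped).
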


By stationarity, $f(x_t)$ and $f(\tilde{x}_t)$ have the same variance, so that the lag-one autocorrelations are ordered in the same way.
If $\statespace = \mathbb{R}^d$ and $\int \lVert x_t \rVert^2 \target(x_{1:T}) \dd x_{1:T} < \infty$, then the expected squared jumped distances (ESJD)
\begin{align*}
  \bbE\lVert \tilde{x}_t - x_t \rVert^2 = 2 \bbE\lVert x_t \rVert^2 - 2 \bbE\langle x_t, \tilde{x}_t \rangle
\end{align*}
are ordered in the reverse direction.
At $t = T$, the pair $(x_T, \tilde{x}_T)$ has the same distribution under \FMBS{} as under \FMAT{}, and under \BS{} as under \AT{}.
For bounded potential functions, the comparison of \FMBS{} and \BS{} at $t = T$ therefore also follows from the Peskun ordering of \FMAT{} over \AT{}~\citep[Proposition~7]{chopin2013particle}, combined with \citet[Lemma~3]{tierney1998note} for the autocovariances.
The comparison of \BS{} and \AT{} is a weaker form of \citet[Theorem~10]{chopin2015particle}, who prove lag-one domination for all square-integrable functions of the trajectory.

While Theorem~\ref{thm:ordering} applies to univariate functions, the same argument applies to functions of the states at two consecutive time steps, except for the autocovariances of \CMBS{}.

\begin{corollary}\label{cor:pairs}
Let $t \in [T-1]$ and let $f \colon \statespace^2 \to \mathbb{R}$ be measurable with $\int f(x_t, x_{t+1})^2 \target(x_{1:T}) \dd x_{1:T} < \infty$.
Then
\begin{align*}
  & \bbE_{\MBS{}}\{f(x_t, x_{t+1}) f(\tilde{x}_t, \tilde{x}_{t+1})\} \\
  & \quad \leq \bbE_{\FMBS{}}\{f(x_t, x_{t+1}) f(\tilde{x}_t, \tilde{x}_{t+1})\} \\
  & \quad \leq \bbE_{\BS{}}\{f(x_t, x_{t+1}) f(\tilde{x}_t, \tilde{x}_{t+1})\} \\
  & \quad \leq \bbE_{\AT{}}\{f(x_t, x_{t+1}) f(\tilde{x}_t, \tilde{x}_{t+1})\}.
\end{align*}
If, moreover, the particles at time $t$ and those at time $t + 1$ are almost surely pairwise distinct under $\exttarget$, then the probabilities of $\{(\tilde{x}_t, \tilde{x}_{t+1}) \neq (x_t, x_{t+1})\}$ are nonincreasing from \MBS{} to \FMBS{}, \BS{}, and \AT{}, and are at least as large under \CMBS{} as under \FMBS{}.
\end{corollary}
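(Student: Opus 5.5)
We mirror the backward induction used for Theorem~\ref{thm:ordering}, applied to functions of two consecutive indices. Conditionally on $\pos$, set $\psi(i,j) \coloneqq f(x_t^i, x_{t+1}^j)$. Then the quantity to compare is
\begin{align*}
  R(\psi) \coloneqq \bbE\{\bbE_{\pos}[\psi(\refidx_t,\refidx_{t+1})\psi(\outidx_t,\outidx_{t+1})]\}.
\end{align*}
Condition first on $(\refidx_{t+1},\outidx_{t+1}) = (j,j')$. By Lemma~\ref{lem:pair}, $\refidx_t \sim \bsprob_t(\cdot\mid j)$ and $\outidx_t \sim \kernel_{t,j,j'}(\refidx_t,\cdot)$. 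With $\varphi_j \coloneqq \psi(\cdot, j)$, the inner expectation is
\begin{align*}
  \sum_{i,i'} \bsprob_t(i\mid j)\,\kernel_{t,j,j'}(i,i')\,\varphi_j(i)\,\varphi_{j'}(i').
\end{align*}

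For $j \neq j'$, the first four rules and \MBS{} use $\kernel_{t,j,j'}(i,\cdot)=\bsprob_t(\cdot\mid j')$, so this term equals $(\bsprob_t\varphi_j)(j)\,(\bsprob_t\varphi_{j'})(j')$. For $j = j'$, the term is $\langle \varphi_j, \kernel_{t,j,j}\varphi_j\rangle_{\bsprob_t(\cdot\mid j)}$. Lemma~\ref{lem:forced} shows this is at most $(\bsprob_t\varphi_j)(j)^2$ for \MBS{}, with equality for \BS{} and \FMBS{}. For \AT{} it is the second moment, which is at least that value. Writing $g(j) \coloneqq (\bsprob_t \varphi_j)(j)$, we therefore get
\begin{align*}
  R = \bbE\{\bbE_{\pos}[g(\refidx_{t+1})g(\outidx_{t+1})]\} + \bbE\Bigl\{\sum_j \bbP_{\pos}(\refidx_{t+1}=\outidx_{t+1}=j)\,\excess_j\Bigr\},
\end{align*}
where $\excess_j$ is $\le 0$ for \MBS{}, $=0$ for \BS{} and \FMBS{}, and $\ge 0$ for \AT{}.

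The first term is $\expprod_{t+1}(g)$, where $g$ is a univariate function of the index at time $t+1$. The argument of Theorem~\ref{thm:ordering}, namely backward induction on \eqref{eq:qt} from time $T$ down to $t+1$, orders $\expprod_{t+1}(g)$ across \MBS{}, \FMBS{}, \BS{} and \AT{}. That induction applies to arbitrary functions of indices given $\pos$, not just to $f(x^i)$.

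The one subtlety is that the weights $\bbP_{\pos}(\refidx_{t+1}=\outidx_{t+1}=j)$ in the correction term differ between rules. However, each correction has a fixed sign for each rule, which settles the ordering between \MBS{} (correction $\le 0$), the \BS{} and \FMBS{} pair (correction $0$), and \AT{} (correction $\ge 0$). For the comparison of \FMBS{} with \BS{}, both corrections vanish, so only $\expprod_{t+1}(g)$ matters, and it is ordered by the theorem. The case of \CMBS{} fails here because its $j\neq j'$ coupling term involves $\varphi_j$ and $\varphi_{j'}$, two different functions, and Lemma~\ref{lem:coupling} only controls a single $\varphi$. This is why the statement excludes it. Integrability follows from Cauchy--Schwarz and the stationarity of the pairs. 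We then take expectations over $\pos$ via Lemma~\ref{lem:extended}.

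For the update probabilities, we apply the identity for indicators. Distinct particles give $\{(\tilde x_t,\tilde x_{t+1})=(x_t,x_{t+1})\} = \{\outidx_t=\refidx_t,\ \outidx_{t+1}=\refidx_{t+1}\}$, and this event equals $\sum_{i,j}\mathbf 1\{\refidx_t=\outidx_t=i\}\,\mathbf 1\{\refidx_{t+1}=\outidx_{t+1}=j\}$. This is a sum over $(i,j)$ of $\psi\otimes\psi$ terms with $\psi=\mathbf 1_{(i,j)}$. Summing the inequalities above over $(i,j)$ orders the four rules.

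For \CMBS{}, the event requires $\outidx_{t+1}=\refidx_{t+1}$, so the coupling kernel never enters at time $t$. We get
\begin{align*}
  \sum_j \bbP_{\pos}(\refidx_{t+1}=\outidx_{t+1}=j)\sum_i \bsprob_t(i\mid j)\,\kernel_{t,j,j}(i,i),
\end{align*}
which is bounded using $\forced(i,i)\le 1-\sum_{k\ne i}\bsprob_t(k\mid j)=\bsprob_t(i\mid j)$ from Lemma~\ref{lem:forced}. Combined with $\bbP_{\CMBS{}}(\outidx_{t+1}=\refidx_{t+1}) \le \bbP_{\FMBS{}}(\outidx_{t+1}=\refidx_{t+1})$, which is Theorem~\ref{thm:ordering} at time $t+1$ for the indices, this yields the claim.

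The main obstacle is this last step: the inner factor depends on $j$, so the total requires an ordering of the weighted sum $\sum_j \bbP_{\pos}(\refidx_{t+1}=\outidx_{t+1}=j)\,h(j)$ with $h(j)=\sum_i\bsprob_t(i\mid j)^2$, not just of its total mass. We would obtain this by applying the induction of Theorem~\ref{thm:ordering} to $\expprod_{t+1}$ of the functions $\mathbf 1_j\sqrt{h(j)}$, summed over $j$.
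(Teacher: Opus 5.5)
Your proposal is correct and follows the paper's proof. The first part is the paper's reduction to $\expprod_{t+1}(g)$ (the paper's $\chi$) plus a diagonal correction of fixed sign. For the update probabilities, your resolution of the ``main obstacle'' is the paper's argument: order $\bbP_{\pos}(\refidx_{t+1} = \outidx_{t+1} = j)$ termwise in $j$ via \eqref{eq:qorder} at time $t+1$, then use $\forced_{\mathbf{p}}(i,i) \leq p_i$. Summing the first-part inequality over the indicators $\mathbf{1}_{(i,j)}$ just repackages the paper's product of two ordered nonnegative factors, and it would even cover \CMBS{} directly, since for these indicators the terms with $\refidx_{t+1} \neq \outidx_{t+1}$ vanish.
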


The particles are pairwise distinct, for example, if $\statespace = \mathbb{R}^d$ and $\dd x$ is Lebesgue measure.
With $\varphi(i, i') \coloneqq f(x_t^i, x_{t+1}^{i'})$ for $i, i' \in [N]$, by Lemma~\ref{lem:pair}, the conditional expectation of $f(x_t, x_{t+1}) f(\tilde{x}_t, \tilde{x}_{t+1})$ given $\pos$, $\refidx_{t+1} = j$, and $\outidx_{t+1} = j'$ is
\begin{equation}
  \sum_{i, i' \in [N]} \bsprob_t(i \mid j) \kernel_{t,j,j'}(i, i') \varphi(i, j) \varphi(i', j'). \label{eq:pairs}
\end{equation}
For \CMBS{} and $j \neq j'$, the functions $\varphi(\cdot, j)$ and $\varphi(\cdot, j')$ are applied in~\eqref{eq:pairs} to the two indices of $\coupling_{\mathbf{p},\mathbf{q}}$, where $\mathbf{p} = \bsprob_t(\cdot \mid j)$ and $\mathbf{q} = \bsprob_t(\cdot \mid j')$.
If $\varphi(\cdot, j') = \varphi(\cdot, j)$, Lemma~\ref{lem:coupling} bounds~\eqref{eq:pairs} by its value for the independent coupling, but the inequality is reversed if $\varphi(\cdot, j') = -\varphi(\cdot, j)$, so that Corollary~\ref{cor:pairs} does not compare the autocovariances of \CMBS{} with those of the other rules.
Corollary~\ref{cor:pairs} applies, for example, to the gradient of $\log \transdens_{t+1}(x_t, x_{t+1})$ with respect to a parameter of the transition densities, whose expectations under $\target$ are the terms of the score in Fisher's identity~\citep{cappe2005inference}.
It does not order the lag-one autocovariances of sums of such functions over $t$, which involve products of functions at different time steps.

\subsection{The final forced move and the gains at every step}\label{sec:gains}

The recursion~\eqref{eq:qt} explains the difference between \FMBS{} and the two rules which we propose.
For \FMBS{}, the final forced move only affects $\expprod_t(\varphi)$ through the value of $\expprod_T$ at the function $\bsprob_{T-1} \dotsm \bsprob_t \varphi$, because $\expprod_t(\varphi) = \expprod_T(\bsprob_{T-1} \dotsm \bsprob_t \varphi)$.
The backward-sampling kernels contract this function towards a constant.
We measure this contraction by the Dobrushin coefficients~\citep{dobrushin1956central}
\begin{align*}
  \dobrushin_s \coloneqq \max_{j, j' \in [N]} \frac{1}{2} \sum_{i=1}^N |\bsprob_s(i \mid j) - \bsprob_s(i \mid j')|, \quad s \in [T-1].
\end{align*}

\begin{proposition}\label{prop:washout}
Let $t \in [T]$.
Then
\begin{align*}
  0 & \leq \bbP_{\FMBS{}}(\tilde{x}_t \neq x_t) - \bbP_{\BS{}}(\tilde{x}_t \neq x_t) 
  \leq {\bbE\Biggl(\prod_{s=t}^{T-1} \dobrushin_s\Biggr)},
\end{align*}
where the expectation is taken under $\exttarget$.
If there is $\mixconst \in (0, 1]$ such that $\transdens_{s+1}(x, \cdot) \geq \mixconst \transdens_{s+1}(x', \cdot)$ for all $x, x' \in \statespace$ and $s \in [T-1]$, then the right-hand side is at most $(1 - \mixconst)^{T-t}$.
\end{proposition}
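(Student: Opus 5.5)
The plan is to reduce both update rates to the quantity $\expprod_T$ evaluated at a single function, and then compare the Gibbs kernel with the forced move at time $T$ only.

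\textbf{Reduction to $\expprod_T$.} Condition on $\pos$. Let $v$ range over the finitely many distinct values among $x_t^1,\dotsc,x_t^N$, and set $\varphi_v(i)\coloneqq \mathbf{1}\{x_t^i=v\}$. Then
\begin{align*}
\bbP_{\pos}(\tilde x_t = x_t)=\sum_v \expprod_t(\varphi_v).
\end{align*}
For \BS{} and \FMBS{}, every $\excess_{s,j,j'}$ vanishes. Iterating~\eqref{eq:qt} therefore gives $\expprod_t(\varphi_v)=\expprod_T(\psi_v)$ with $\psi_v\coloneqq \bsprob_{T-1}\dotsm\bsprob_t\varphi_v$ (and $\psi_v=\varphi_v$ when $t=T$). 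The two rules differ only through $\kernel_T$, which is $\weightvec_T$ for \BS{} and $\forced_{\weightvec_T}$ for \FMBS{}. Write $\mathbf{W}=\weightvec_T$ and $\Delta(i,j)\coloneqq W_i\{\forced_{\mathbf{W}}(i,j)-W_j\}$. By Lemma~\ref{lem:forced}, $\Delta$ is symmetric and nonnegative off the diagonal, and its rows sum to zero. Hence, for every $\psi$,
\begin{align*}
\langle\psi,\psi\rangle^{\mathrm{Gibbs}}_{\mathbf{W}}-\langle\psi,\forced_{\mathbf{W}}\psi\rangle_{\mathbf{W}}=\tfrac12\sum_{i\neq j}\Delta(i,j)\{\psi(i)-\psi(j)\}^2\ge 0.
\end{align*}
Summing over $v$ and taking expectations gives the lower bound $0$, since the difference of update rates equals the difference of these conditional quantities.

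\textbf{Upper bound.} For each $i$, the values $\psi_v(i)\in[0,1]$ sum to $1$ over $v$. Indeed, $\psi_v(i)$ is the probability that the $(T-t)$-step backward chain $\bsprob_{T-1}\dotsm\bsprob_t$ started at $i$ lands on a particle with value $v$. Since the terms lie in $[0,1]$,
\begin{align*}
\sum_v\{\psi_v(i)-\psi_v(j)\}^2\le\sum_v|\psi_v(i)-\psi_v(j)|,
\end{align*}
and the right-hand side is at most twice the total variation distance between the rows $i$ and $j$ of $\bsprob_{T-1}\dotsm\bsprob_t$, because pushing forward to values does not increase total variation. By submultiplicativity of Dobrushin coefficients, this is at most $2\prod_{s=t}^{T-1}\dobrushin_s$. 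Therefore the conditional difference is at most
\begin{align*}
\prod_s\dobrushin_s\sum_{i\ne j}\Delta(i,j)\le\prod_s\dobrushin_s\sum_{i\neq j}W_i\forced_{\mathbf{W}}(i,j)\le\prod_s\dobrushin_s.
\end{align*}
Taking expectations with $\pos$ distributed as its marginal under $\exttarget$, via Lemma~\ref{lem:extended}, gives the bound.

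\textbf{Mixing case.} The condition $\transdens_{s+1}(x_s^i,y)\ge\mixconst\,\transdens_{s+1}(x_s^k,y)$ for all $k$ gives $\transdens_{s+1}(x_s^i,y)\ge\mixconst\sum_k\weight_s^k\transdens_{s+1}(x_s^k,y)$. It follows that $\bsprob_s(i\mid j)\ge\mixconst\,\weight_s^i$ for all $j$. This is a Doeblin minorization, so $\dobrushin_s\le1-\mixconst$ deterministically.

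\textbf{Main obstacle.} The delicate step is handling update rates of states rather than of indices, since distinct indices may carry equal particles. This is why I sum over the value-indicators $\varphi_v$ rather than index indicators, and why I use the bound of squares by absolute values together with the pushforward contraction.
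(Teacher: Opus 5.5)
Your proof is correct and follows the paper's argument: the same reduction to $\expprod_T(\psi_S)$ over the classes of equal particles, the same bound of squares by absolute values followed by the Dobrushin contraction, and the same minorization $\bsprob_s(i \mid j) \geq \mixconst \weight_s^i$. The only variation is in one step. You write $\weightvec_T(\psi)^2 - \langle \psi, \forced_{\weightvec_T} \psi \rangle_{\weightvec_T}$ directly as $\frac{1}{2}\sum_{i \neq j} \Delta(i,j)\{\psi(i)-\psi(j)\}^2$, with the Peskun-excess weights, whose off-diagonal mass is at most one, and this single identity gives both bounds at once. The paper instead gets nonnegativity from Lemma~\ref{lem:forced}, bounds the difference by the variance of $\psi$ under $\weightvec_T$ via Cauchy--Schwarz, and then expands that variance with the weights $\weight_T^j \weight_T^{j'}$.
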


Proposition~\ref{prop:washout} is related to the literature on the forgetting of particle filters~\citep{delmoral2004feynman, karjalainen2023forgetting}.
The closest result that we know of is \citet[Lemmas~14 and~15 of the supplementary material]{karjalainen2026mixing}, who bound, under a strong mixing condition, a decoupling probability by a term which decays geometrically in the number of time steps plus a term of order $1/N$.

By Proposition~\ref{prop:washout}, unless the Dobrushin coefficients $\dobrushin_s$ are close to one, a final forced move only matters at the last time steps.
When the model factorizes across time, that is, when $\transdens_{s+1}(x, x')$ does not depend on $x$, the condition of Proposition~\ref{prop:washout} holds with $\mixconst = 1$, so that a final forced move does not change the update rates at the times $t < T$.
For \MBS{} and \CMBS{}, by contrast, the recursion~\eqref{eq:qt} adds the nonpositive term $\sum_{j,j'} \bbP_{\pos}(\refidx_{t+1} = j, \outidx_{t+1} = j') \excess_{t,j,j'}(\varphi)$ at every time $t \in [T-1]$.
Proposition~\ref{prop:gain} bounds the resulting gain in update rate from below.

\begin{proposition}\label{prop:gain}
Let $t \in [T-1]$ and assume that the particles $x_t^1, \dotsc, x_t^N$ are almost surely pairwise distinct under $\exttarget$.
For $j, j' \in [N]$, let
\begin{align*}
  \gain_t(j, j') \coloneqq \sum_{i=1}^N \bsprob_t(i \mid j) \bigl\{\bsprob_t(i \mid j') - \kernel_{t,j,j'}(i, i)\bigr\},
\end{align*}
with the kernels $\kernel_{t,j,j'}$ of \MBS{}.
Then $\gain_t(j, j') \geq 0$ and
\begin{align*}
  & \bbP_{\MBS{}}(\tilde{x}_t \neq x_t) - \bbP_{\FMBS{}}(\tilde{x}_t \neq x_t) \\
  & \quad \geq \bbE_{\MBS{}}\bigl\{\gain_t(\refidx_{t+1}, \outidx_{t+1})\bigr\}.
\end{align*}
The same holds with \CMBS{} in place of \MBS{}, in both the kernels and the inequality.
\end{proposition}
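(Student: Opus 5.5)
Because the particles at time $t$ are almost surely pairwise distinct, $\{\tilde{x}_t \neq x_t\} = \{\outidx_t \neq \refidx_t\}$ almost surely under every rule. The plan is to compute both update rates through the backward Markov chain of pairs $(\refidx_s, \outidx_s)$ of Lemma~\ref{lem:pair}, conditioning on $\pos$ and on the pair at time $t+1$.

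First, $\gain_t(j,j') \geq 0$. For $j = j'$ and \MBS{}, $\gain_t(j,j) = \sum_i \mathbf{p}_i\{\mathbf{p}_i - \forced_{\mathbf{p}}(i,i)\}$ with $\mathbf{p} = \bsprob_t(\cdot \mid j)$. Since rows of $\forced_{\mathbf{p}}$ sum to one, $\mathbf{p}_i - \forced_{\mathbf{p}}(i,i) = \sum_{k \neq i}\{\forced_{\mathbf{p}}(i,k) - \mathbf{p}_k\} \geq 0$ by the second part of Lemma~\ref{lem:forced}. For $j \neq j'$ and \MBS{}, $\kernel_{t,j,j'}(i,i) = \bsprob_t(i\mid j')$, so $\gain_t = 0$. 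For \CMBS{} the same row-sum argument with the second part of Lemma~\ref{lem:coupling} gives $\gain_t(j,j') \geq 0$. Alternatively, apply the third part of Lemma~\ref{lem:coupling} to indicators.

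Next, under any rule, Lemma~\ref{lem:pair} gives
\begin{align*}
  \bbP(\outidx_t = \refidx_t \mid \pos, \refidx_{t+1} = j, \outidx_{t+1} = j') = \sum_i \bsprob_t(i \mid j) \kernel_{t,j,j'}(i,i).
\end{align*}
For \FMBS{} this equals $\sum_i \bsprob_t(i\mid j)\bsprob_t(i\mid j')$, so under each rule $\mathcal{R}$,
\begin{align*}
  \bbP_{\mathcal{R}}(\outidx_t = \refidx_t) = \bbE_{\mathcal{R}}\{h(\refidx_{t+1},\outidx_{t+1})\} - \bbE_{\mathcal{R}}\{\gain^{\mathcal{R}}_t(\refidx_{t+1},\outidx_{t+1})\},
\end{align*}
where $h(j,j') \coloneqq \sum_i \bsprob_t(i\mid j)\bsprob_t(i\mid j')$ depends on $\pos$ only, and $\gain^{\FMBS{}}_t = 0$. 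It therefore suffices to show
\begin{align*}
  \bbE_{\MBS{}}\{h(\refidx_{t+1},\outidx_{t+1})\} \leq \bbE_{\FMBS{}}\{h(\refidx_{t+1},\outidx_{t+1})\},
\end{align*}
and likewise for \CMBS{}.

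This comparison is the main obstacle. Write $h(j,j') = \sum_i \varphi_i(j)\varphi_i(j')$ with $\varphi_i \coloneqq \bsprob_t(i \mid \cdot)$, a sum of products of one function evaluated at the two indices. Hence $\bbE_{\pos}\{h(\refidx_{t+1},\outidx_{t+1})\} = \sum_i \expprod_{t+1}(\varphi_i)$. Unrolling the recursion~\eqref{eq:qt} from $t+1$ to $T$ gives
\begin{align*}
  \expprod_{t+1}(\varphi) = \expprod_T(\bsprob_{T-1}\dotsm\bsprob_{t+1}\varphi) + \sum_{s=t+1}^{T-1} \sum_{j,j'} \bbP_{\pos}(\refidx_{s+1}=j,\outidx_{s+1}=j')\,\excess_{s,j,j'}(\bsprob_{s-1}\dotsm\bsprob_{t+1}\varphi).
\end{align*}
\MBS{}, \CMBS{} and \FMBS{} share $\kernel_T = \forced_{\weightvec_T}$, so the first term coincides across the three rules. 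The excess terms vanish for \FMBS{} and are nonpositive for \MBS{} and \CMBS{}, as noted before Theorem~\ref{thm:ordering}. Summing over $i$ gives the displayed inequality, conditionally on $\pos$ and hence after taking expectations. This is the same backward induction as in Theorem~\ref{thm:ordering}, applied to the functions $\varphi_i$ at time $t+1$.

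Combining the two steps,
\begin{align*}
  \bbP_{\MBS{}}(\outidx_t \neq \refidx_t) - \bbP_{\FMBS{}}(\outidx_t \neq \refidx_t) = \bbE_{\FMBS{}}(h) - \bbE_{\MBS{}}(h) + \bbE_{\MBS{}}(\gain_t) \geq \bbE_{\MBS{}}(\gain_t).
\end{align*}
The argument for \CMBS{} is identical.
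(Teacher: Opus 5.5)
Your proof is correct and follows essentially the paper's argument. Summing the recursion~\eqref{eq:qt} over the indicators $\mathbf{1}_{\{i\}}$ gives exactly your decomposition $\bbP_{\pos}(\outidx_t = \refidx_t) = \sum_i \expprod_{t+1}(\bsprob_t \mathbf{1}_{\{i\}}) - \bbE_{\pos}\{\gain_t(\refidx_{t+1}, \outidx_{t+1})\}$, and the paper compares the first term via~\eqref{eq:qorder} at time $t+1$, which you re-derive by unrolling the recursion; it also gets $\gain_t \geq 0$ from $\forced_{\mathbf{p}}(i,i) \leq p_i$ and $\coupling_{\mathbf{p},\mathbf{q}}(i,i) \leq q_i$, which is your row-sum argument.
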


The quantity $\gain_t(j, j')$ is the decrease of the probability that the output and the reference coincide at time $t$, given $\refidx_{t+1} = j$ and $\outidx_{t+1} = j'$, when the draw of $\outidx_t$ by \FMBS{} is replaced by that of \MBS{} or \CMBS{}.
For $N = 2$, $\gain_t(j, j) = 2\{1 - \max_i \bsprob_t(i \mid j)\}^2$ for both rules.
For $j \neq j'$, $\gain_t(j, j') = 0$ for \MBS{}, whose gain is therefore made at the times at which the output and the reference coincide at the next time step, that is, when the sampler struggles to move.
For \CMBS{}, the gain is also made at the other times, with $\gain_t(j, j') = 2 \min_i \bsprob_t(i \mid j) \bsprob_t(i \mid j')$ for $N = 2$ and $j \neq j'$.

\subsection{Relationship with independent Metropolis--Hastings}\label{sec:imh}

With two particles, $\coupling_{\mathbf{p},\mathbf{p}} = \forced_{\mathbf{p}}$ for every probability vector $\mathbf{p}$, so that \CMBS{} recovers the independent Metropolis--Hastings sampler at each time step when the model factorizes across time, whatever the weights.

\begin{proposition}\label{prop:imh}
Let $N = 2$, and assume that, for $s \in [T-1]$, $\transdens_{s+1}(x, x')$ does not depend on $x$.
Then, under \CMBS{} and conditionally on $\pos$ and $\refidx_{1:T}$, the indices $\outidx_1, \dotsc, \outidx_T$ are independent, with $\outidx_t \sim \forced_{\weightvec_t}(\refidx_t, \cdot)$.
\end{proposition}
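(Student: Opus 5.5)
The plan is to work directly with the kernels of \CMBS{} in the backward pass of Algorithm~\ref{alg:csmc}, conditionally on $\pos$ and $\refidx_{1:T}$. When $\transdens_{s+1}(x, x')$ does not depend on $x$, the definition~\eqref{eq:bs} gives $\bsprob_s(i \mid j) = \weight_s^i \transdens_{s+1}(x_{s+1}^j) / \sum_k \weight_s^k \transdens_{s+1}(x_{s+1}^j) = \weight_s^i$. The backward-sampling probabilities therefore equal $\weightvec_s$ for every $j$. It follows that $\kernel_{s,j,j} = \forced_{\weightvec_s}$ and, for $j \neq j'$, $\kernel_{s,j,j'} = \coupling_{\weightvec_s,\weightvec_s}$. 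If I can show that $\coupling_{\mathbf{p},\mathbf{p}} = \forced_{\mathbf{p}}$ when $N = 2$, then every kernel used at time $s$ equals $\forced_{\weightvec_s}$, whatever $(\refidx_{s+1}, \outidx_{s+1})$ is.

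The key computation is therefore the identity $\coupling_{\mathbf{p},\mathbf{p}} = \forced_{\mathbf{p}}$ for $N = 2$. Take $i = 1$. By~\eqref{eq:coupling}, $\coupling_{\mathbf{p},\mathbf{p}}(1, 1) = p_1 \forced_{\mathbf{p} \odot \mathbf{p}}(1, 1)$, where $\forced_{\mathbf{p} \odot \mathbf{p}}$ is the forced move for the normalized vector $(p_1^2, p_2^2)/(p_1^2 + p_2^2)$. For $N = 2$ the forced move from $1$ moves with probability $\min\{1, p_2^2/p_1^2\}$, so $\coupling_{\mathbf{p},\mathbf{p}}(1,2) = 1 - p_1 + p_1\min\{1, p_2^2/p_1^2\}$. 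I then split into cases:
\begin{itemize}
\item If $p_1 \leq p_2$, this equals $1$, which matches $\forced_{\mathbf{p}}(1,2) = \min\{1, p_2/p_1\} = 1$.
\item If $p_1 > p_2$, it equals $p_2 + p_2^2/p_1 = p_2(p_1 + p_2)/p_1 = p_2/p_1$, which matches $\forced_{\mathbf{p}}(1,2) = p_2/\max\{1-p_1, 1-p_2\} = p_2/p_1$.
\end{itemize}
The degenerate cases also need checking. If $p_1 = 1$, both kernels keep index $1$. If $p_1 = 0$, both move to $2$; here I should verify that $\forced_{\mathbf{p}\odot\mathbf{p}}$ is well defined, which it is since $\mathbf{p} \neq \mathbf{0}$. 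The case $i = 2$ follows by symmetry.

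With the identity in hand, \algstep{alg:csmc:final} draws $\outidx_T \sim \forced_{\weightvec_T}(\refidx_T, \cdot)$. By the above, \algstep{alg:csmc:backward} draws $\outidx_t \sim \forced_{\weightvec_t}(\refidx_t, \cdot)$, a law that depends only on $\refidx_t$ and $\pos$ and not on $\outidx_{t+1}$. Given $\pos$ and $\refidx_{1:T}$, the algorithm is a sequential draw in which each conditional law of $\outidx_t$ given $\outidx_{t+1:T}$ is free of $\outidx_{t+1:T}$. The joint law therefore factorizes as $\prod_t \forced_{\weightvec_t}(\refidx_t, \outidx_t)$, which gives the claimed independence. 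I expect no real obstacle; the only delicate point is the $N = 2$ identity, including its boundary cases.
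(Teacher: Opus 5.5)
Your proof is correct and follows the paper's route: you reduce $\bsprob_t(\cdot \mid j)$ to $\weightvec_t$, verify $\coupling_{\mathbf{p},\mathbf{p}} = \forced_{\mathbf{p}}$ on $[2]$ by a two-case computation, and conclude that every \CMBS{} kernel at time $t$ acts as $\forced_{\weightvec_t}$ started from $\refidx_t$. The only difference is in the last step: the paper reads the factorization off the joint law $\prod_t \weight_t^{\refidx_t} \forced_{\weightvec_t}(\refidx_t, \outidx_t)$ given by Lemma~\ref{lem:pair}, whereas you read it directly from steps~\ref{alg:csmc:final} and~\ref{alg:csmc:backward} of Algorithm~\ref{alg:csmc}, which is equally valid because those kernels depend only on $\pos$ and $\refidx_t$, so conditioning also on the ancestor indices changes nothing.
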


With $x_{1:T} \sim \target$ and $\refidx_{1:T}$ given, the particles $x_t^n$, $n \neq \refidx_t$, are independent draws from $\transdens_t$, whose weights relative to $x_t$ are $\weight_t^n / \weight_t^{\refidx_t} = \pot_t(x_t^n) / \pot_t(x_t)$.
By Proposition~\ref{prop:imh}, one iteration of CSMC with \CMBS{} therefore replaces each state $x_t$, independently, by the proposal $x_t^n$ with probability $\min\{1, \pot_t(x_t^n) / \pot_t(x_t)\}$: this is the independent Metropolis--Hastings sampler with proposal $\transdens_t$ at each time step.
With \MBS{}, by contrast, the index at time $t$ is drawn independently of $\refidx_t$ if $\outidx_{t+1} \neq \refidx_{t+1}$, as with \BS{}.

If the probabilities $\weightvec_T$ and $\bsprob_t(\cdot \mid j)$, $t \in [T-1]$, $j \in [N]$, are uniform, as when the potential functions are constant and $\transdens_{s+1}(x, x')$ does not depend on $x$, then the kernels $\forced_{\weightvec_T}$ and $\forced_{\bsprob_t(\cdot \mid j)}$, and, for \CMBS{}, the kernels $\kernel_{t,j,j'}$, never keep the current index, by~\eqref{eq:forced} and~\eqref{eq:coupling}.
By Lemma~\ref{lem:pair}, $\outidx_t \neq \refidx_t$ almost surely for every $t \in [T]$ under \CMBS{}, whereas $\bbP_{\pos}(\outidx_t = \refidx_t) = 1/N$ under \BS{}, and under \FMBS{} for $t < T$.
Under \MBS{}, also by Lemma~\ref{lem:pair}, $\bbP_{\pos}(\outidx_T = \refidx_T) = 0$ and, for $t < T$, the probability of $\{\outidx_t = \refidx_t\}$ is that of $\{\outidx_{t+1} \neq \refidx_{t+1}\}$ divided by $N$, so that
\begin{align*}
  \bbP_{\pos}(\outidx_t = \refidx_t) = \frac{1 - (-1/N)^{T-t}}{N + 1}.
\end{align*}
If the particles are pairwise distinct, every state is therefore updated under \CMBS{}, whereas each state is kept with probability $1/N$ under \BS{} and, under \MBS{}, with a probability which tends to $1/(N + 1)$ as $T - t$ grows.
With two particles, far from $T$, the moves of \MBS{} thus follow the Metropolis rule at about a third of the time steps, and the rule of \citet{barker1965monte} at the others.

\section{Illustration}\label{sec:numerics}

We consider two state-space models.
The first is a linear-Gaussian (LG) model with $x_1 \sim \mathcal{N}(0, \sigma^2 (1 - \rho^2)^{-1} I_d)$, $x_t \mid x_{t-1} \sim \mathcal{N}(\rho x_{t-1}, \sigma^2 I_d)$, and $y_t \mid x_t \sim \mathcal{N}(x_t, \sigma^2 I_d)$, where $\rho = 0.9$, $\sigma = 1$, $T = 200$, and $d \in \{1, 3\}$.
The second is the nonlinear (NL) model of \citet{gordon1993novel}, in the form used by \citet[Section~3.1]{andrieu2010particle}: $x_1 \sim \mathcal{N}(0, 5)$,
\begin{align*}
  x_t \mid x_{t-1} & \sim \mathcal{N}\Bigl(\frac{x_{t-1}}{2} + \frac{25 x_{t-1}}{1 + x_{t-1}^2} + 8 \cos(1.2 t), \sigma_V^2\Bigr), \\
  y_t \mid x_t & \sim \mathcal{N}\Bigl(\frac{x_t^2}{20}, \sigma_W^2\Bigr),
\end{align*}
with $\sigma_V^2 = 10$, $\sigma_W^2 \in \{10, 1\}$, and $T = 100$, as in \citet{andrieu2010particle}.
Algorithm~\ref{alg:csmc} uses the bootstrap filter, with $\transdens_t$ the transition density and $\pot_t(x_t) = \obsdens_t(y_t \mid x_t)$ the observation density, so that the potential functions only depend on the current state.

We estimate the quantities compared in Section~\ref{sec:comparison}, which concern one iteration started from the target distribution.
For each model and $N \in \{2, 4, 8, 16, 32, 64\}$, we draw $10^6$ data sets $(x_{1:T}, y_{1:T})$ from the model, so that, given $y_{1:T}$, the trajectory $x_{1:T}$ is distributed according to the smoothing distribution.
For each data set, we run steps~\ref{alg:csmc:indices} and~\ref{alg:csmc:forward} of Algorithm~\ref{alg:csmc} once with the reference $x_{1:T}$, apply the selection rules \BS{}, \FMBS{}, \MBS{}, and \CMBS{} to the same particles, and record, for each rule and each $t$, the indicator of $\{\outidx_t \neq \refidx_t\}$ and the squared jump $\lVert \tilde{x}_t - x_t \rVert^2$.
Because the particles are almost surely pairwise distinct, $\{\outidx_t \neq \refidx_t\} = \{\tilde{x}_t \neq x_t\}$ almost surely, so that the averages of these quantities over the data sets estimate the update rates and the ESJD, averaged over the distribution of the data under the model.
Because the orderings of Section~\ref{sec:comparison} hold for every data set, they also hold for the expectations which these averages estimate.
Standard errors are computed from $2{,}000$ batches of $500$ data sets, with the delta method for the ratios.
Code to reproduce the experiments can be found at \url{https://github.com/AdrienCorenflos/metropolized-backward-sampling-experiments}.
We do not show \AT{}, which \BS{} dominates (Theorem~\ref{thm:ordering} and \citealp[Theorem~11]{chopin2015particle}).

\begin{figure*}[t!]
\centering
\includegraphics[width=\textwidth]{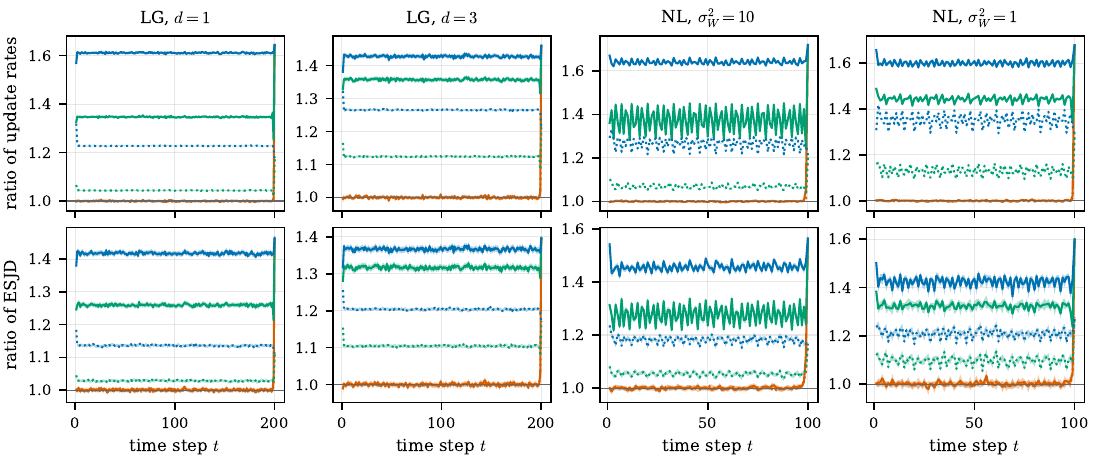}
\caption{Update rates (top) and ESJD (bottom) of forced move plus backward sampling (\FMBS{}), Metropolized backward sampling (\MBS{}), and coupled Metropolized backward sampling (\CMBS{}), relative to those of backward sampling (\BS{}), at each time step $t$, on the linear-Gaussian (LG) model with $T = 200$ and the nonlinear (NL) model with $T = 100$: \FMBS{} with $N = 2$ (\legFMBSNtwo) and $N = 8$ (\legFMBSNeight), \MBS{} with $N = 2$ (\legMBSNtwo) and $N = 8$ (\legMBSNeight), and \CMBS{} with $N = 2$ (\legCMBSNtwo) and $N = 8$ (\legCMBSNeight).
The ratios are those of the averages over $10^6$ data sets drawn from the model, with one iteration for each data set, started from the trajectory drawn with the data.
The shaded bands extend two standard errors on each side of the ratios.
For the NL model, the ratios oscillate with the time-dependent term $8 \cos(1.2 t)$ of the transition.}
\label{fig:ratios}
\end{figure*}

In agreement with Proposition~\ref{prop:washout}, the final forced move of \FMBS{} only matters at the last time steps (Figure~\ref{fig:ratios}): it increases the update rates by $2\%$ to $72\%$ at time $T$, but by less than $1\%$ on average over the time steps.
Appendix~\ref{sec:averages} gives these averages for all the models and numbers of particles.
In agreement with Theorem~\ref{thm:ordering}, \MBS{} and \CMBS{} increase the update rates and the ESJD at every time step in Figure~\ref{fig:ratios}.
On average over the time steps, \MBS{} increases them by $35\%$ to $45\%$ and by $26\%$ to $33\%$, respectively, for $N = 2$, and by $4\%$ to $13\%$ and by $3\%$ to $10\%$ for $N = 8$.
\CMBS{} increases them by $43\%$ to $64\%$ and by $37\%$ to $46\%$ for $N = 2$, and by $23\%$ to $35\%$ and by $14\%$ to $21\%$ for $N = 8$, so that its averages exceed those of \MBS{} for all the models and numbers of particles.
With stratified resampling and ancestor sampling of the reference, the increases of the update rates for $N = 2$ are the same, $35\%$ to $45\%$ for \MBS{} and $43\%$ to $64\%$ for \CMBS{} (Appendix~\ref{sec:stratified}).
The gains decrease as $N$ grows, because \BS{} then already moves with high probability.

\section{Discussion}\label{sec:discussion}

\MBS{} replaces each draw $\outidx_t \sim \bsprob_t(\cdot \mid \outidx_{t+1})$ of \BS{} by a forced move from the ancestor $\anc_t^{\outidx_{t+1}}$, at no additional cost, and keeps the reversibility of \BS{}.
\CMBS{} also avoids the index $\refidx_t$ of the reference when $\outidx_{t+1} \neq \refidx_{t+1}$, by replacing the draw of \BS{} by a forced move from $\refidx_t$ when it equals $\refidx_t$.
Its backward pass costs at most twice as much as that of \BS{}.
Their gains are largest when the index of the reference is often selected again by \BS{}, that is, for small numbers of particles or for uneven weights, which is the regime of interest when each particle is expensive to propagate or to weight.
Small numbers of particles also arise in high dimensions, where the number of particles of CSMC can be traded for locality by proposing the particles around the reference~\citep{finke2023conditional, corenflos2024particle}: this forms the original motivation for this work.

The analysis we presented here relies on multinomial resampling and on potential functions which depend on the current state only.
Together, they make the ancestor indices conditionally independent given the particle positions (Lemma~\ref{lem:extended}), from which the backward Markov chain of Lemma~\ref{lem:pair}, and hence both the reversibility and the recursion~\eqref{eq:qt}, follow.
With potential functions which depend on two states, as for guided particle filters, or with conditional versions of other resampling schemes (\citealp[Section~5]{chopin2015particle}; \citealp{karppinen2024bridge, finke2026resampling}), this conditional independence fails.
Appendix~\ref{sec:previous-state} follows the general conditional resampling approach of \citet{finke2026resampling}: the ancestor index of each state of the reference is drawn by ancestor sampling~\citep{lindsten2014particle} in the forward pass, without additional cost.
Conditionally on the particle positions and the ancestor indices, the backward Markov chain of Lemma~\ref{lem:pair} is then recovered, so that the reversibility and the results of Sections~\ref{sec:update-rates} and~\ref{sec:gains} extend to these settings.
Without ancestor sampling, the forced moves of \FMBS{} and \MBS{} from the ancestors still leave the target distribution invariant, which extends the duality of backward and ancestor sampling~\citep[Proposition~3.27]{finke2015extended} to forced moves.
On the linear-Gaussian model of Section~\ref{sec:numerics} with the locally optimal proposal~\citep{doucet2000sequential}, whose potential functions depend on the previous state, and with ancestor sampling, the update rates and the ESJD of the four selection rules are ordered as in Theorem~\ref{thm:ordering}, and \CMBS{} updates the states $49\%$ to $66\%$ more often than \BS{} with two particles (Appendix~\ref{sec:locally-optimal}).
Theorem~\ref{thm:ordering} and Corollary~\ref{cor:pairs} concern functions of the states at one or two consecutive time steps, except for the autocovariances of \CMBS{} at two time steps: whether \MBS{} or \CMBS{} dominates \BS{} for all square-integrable functions of the trajectory, which would order the asymptotic variances~\citep{tierney1998note, mira2001ordering}, remains open, as does the comparison of \MBS{} and \CMBS{}.

As a rule of thumb, we recommend \MBS{}.
It dominates \BS{} in the sense of Theorem~\ref{thm:ordering} at exactly the same cost.
For potential functions of the previous state and other resampling schemes, it remains valid with the same forward pass, whereas \CMBS{} then requires ancestor sampling of the reference and therefore a different implementation of the forward pass.
The backward pass of \CMBS{} also costs up to twice as much as that of \BS{}.
\CMBS{} is worth its additional cost when the potential functions are much more expensive to evaluate than the transition densities, because the forward pass then dominates the cost of an iteration, while the backward pass only evaluates transition densities.
This is the case, for example, in approximate Bayesian computation for hidden Markov models whose observation densities are intractable but can be simulated~\citep{jasra2012filtering, dean2014parameter}.

Finally, \MBS{} and \CMBS{} only change how each backward index is drawn.
They can therefore be combined with other improvements of backward sampling, such as the bridge backward sampling of \citet{karppinen2024bridge}.
By the duality of backward and ancestor sampling, the forced moves of \MBS{} correspond to forced moves of the ancestor index $\anc_t^{\refidx_{t+1}}$ of the reference, started from $\refidx_t$, which are allowed in particle Gibbs with ancestor sampling~\citep[Section~6.1]{lindsten2014particle}.

\bibliographystyle{imsart-nameyear}
\bibliography{refs}

\begin{thebibliography}{33}

\bibitem[\protect\citeauthoryear{Andrieu, Doucet and
  Holenstein}{2010}]{andrieu2010particle}
\begin{barticle}[author]
\bauthor{\bsnm{Andrieu},~\bfnm{Christophe}\binits{C.}},
  \bauthor{\bsnm{Doucet},~\bfnm{Arnaud}\binits{A.}} \AND
  \bauthor{\bsnm{Holenstein},~\bfnm{Roman}\binits{R.}}
(\byear{2010}).
\btitle{Particle {M}arkov chain {M}onte {C}arlo methods}.
\bjournal{Journal of the Royal Statistical Society: Series B (Statistical
  Methodology)}
\bvolume{72}
\bpages{269--342}.
\bnote{With discussion}.
\bdoi{10.1111/j.1467-9868.2009.00736.x}
\end{barticle}
\endbibitem

\bibitem[\protect\citeauthoryear{Andrieu, Lee and
  Vihola}{2018}]{andrieu2018uniform}
\begin{barticle}[author]
\bauthor{\bsnm{Andrieu},~\bfnm{Christophe}\binits{C.}},
  \bauthor{\bsnm{Lee},~\bfnm{Anthony}\binits{A.}} \AND
  \bauthor{\bsnm{Vihola},~\bfnm{Matti}\binits{M.}}
(\byear{2018}).
\btitle{Uniform ergodicity of the iterated conditional {SMC} and geometric
  ergodicity of particle {G}ibbs samplers}.
\bjournal{Bernoulli}
\bvolume{24}
\bpages{842--872}.
\bdoi{10.3150/15-BEJ785}
\end{barticle}
\endbibitem

\bibitem[\protect\citeauthoryear{Barker}{1965}]{barker1965monte}
\begin{barticle}[author]
\bauthor{\bsnm{Barker},~\bfnm{A.~A.}\binits{A.~A.}}
(\byear{1965}).
\btitle{{M}onte {C}arlo calculations of the radial distribution functions for a
  proton--electron plasma}.
\bjournal{Australian Journal of Physics}
\bvolume{18}
\bpages{119--134}.
\bdoi{10.1071/PH650119}
\end{barticle}
\endbibitem

\bibitem[\protect\citeauthoryear{Bunch and Godsill}{2013}]{bunch2013improved}
\begin{barticle}[author]
\bauthor{\bsnm{Bunch},~\bfnm{Pete}\binits{P.}} \AND
  \bauthor{\bsnm{Godsill},~\bfnm{Simon}\binits{S.}}
(\byear{2013}).
\btitle{Improved particle approximations to the joint smoothing distribution
  using {M}arkov chain {M}onte {C}arlo}.
\bjournal{IEEE Transactions on Signal Processing}
\bvolume{61}
\bpages{956--963}.
\bdoi{10.1109/TSP.2012.2229277}
\end{barticle}
\endbibitem

\bibitem[\protect\citeauthoryear{Capp{\'e}, Moulines and
  Ryd{\'e}n}{2005}]{cappe2005inference}
\begin{bbook}[author]
\bauthor{\bsnm{Capp{\'e}},~\bfnm{Olivier}\binits{O.}},
  \bauthor{\bsnm{Moulines},~\bfnm{Eric}\binits{E.}} \AND
  \bauthor{\bsnm{Ryd{\'e}n},~\bfnm{Tobias}\binits{T.}}
(\byear{2005}).
\btitle{Inference in Hidden {M}arkov Models}.
\bseries{Springer Series in Statistics}.
\bpublisher{Springer}, \baddress{New York}.
\bdoi{10.1007/0-387-28982-8}
\end{bbook}
\endbibitem

\bibitem[\protect\citeauthoryear{Chopin and
  Papaspiliopoulos}{2020}]{chopin2020introduction}
\begin{bbook}[author]
\bauthor{\bsnm{Chopin},~\bfnm{Nicolas}\binits{N.}} \AND
  \bauthor{\bsnm{Papaspiliopoulos},~\bfnm{Omiros}\binits{O.}}
(\byear{2020}).
\btitle{An Introduction to Sequential {M}onte {C}arlo}.
\bseries{Springer Series in Statistics}.
\bpublisher{Springer}, \baddress{Cham}.
\bdoi{10.1007/978-3-030-47845-2}
\end{bbook}
\endbibitem

\bibitem[\protect\citeauthoryear{Chopin and Singh}{2013}]{chopin2013particle}
\begin{bmisc}[author]
\bauthor{\bsnm{Chopin},~\bfnm{Nicolas}\binits{N.}} \AND
  \bauthor{\bsnm{Singh},~\bfnm{Sumeetpal~S.}\binits{S.~S.}}
(\byear{2013}).
\btitle{On the particle {G}ibbs sampler}.
\bhowpublished{arXiv:1304.1887v1}.
\bnote{First version of \citet{chopin2015particle}}.
\end{bmisc}
\endbibitem

\bibitem[\protect\citeauthoryear{Chopin and Singh}{2015}]{chopin2015particle}
\begin{barticle}[author]
\bauthor{\bsnm{Chopin},~\bfnm{Nicolas}\binits{N.}} \AND
  \bauthor{\bsnm{Singh},~\bfnm{Sumeetpal~S.}\binits{S.~S.}}
(\byear{2015}).
\btitle{On particle {G}ibbs sampling}.
\bjournal{Bernoulli}
\bvolume{21}
\bpages{1855--1883}.
\bdoi{10.3150/14-BEJ629}
\end{barticle}
\endbibitem

\bibitem[\protect\citeauthoryear{Corenflos and
  Finke}{2024}]{corenflos2024particle}
\begin{bmisc}[author]
\bauthor{\bsnm{Corenflos},~\bfnm{Adrien}\binits{A.}} \AND
  \bauthor{\bsnm{Finke},~\bfnm{Axel}\binits{A.}}
(\byear{2024}).
\btitle{Particle-{MALA} and {P}article-m{GRAD}: Gradient-based {MCMC} methods
  for high-dimensional state-space models}.
\bhowpublished{arXiv:2401.14868v1}.
\end{bmisc}
\endbibitem

\bibitem[\protect\citeauthoryear{Dau and Chopin}{2023}]{dau2023backward}
\begin{barticle}[author]
\bauthor{\bsnm{Dau},~\bfnm{Hai-Dang}\binits{H.-D.}} \AND
  \bauthor{\bsnm{Chopin},~\bfnm{Nicolas}\binits{N.}}
(\byear{2023}).
\btitle{On backward smoothing algorithms}.
\bjournal{The Annals of Statistics}
\bvolume{51}
\bpages{2145--2169}.
\bdoi{10.1214/23-AOS2324}
\end{barticle}
\endbibitem

\bibitem[\protect\citeauthoryear{Dean et~al.}{2014}]{dean2014parameter}
\begin{barticle}[author]
\bauthor{\bsnm{Dean},~\bfnm{Thomas~A.}\binits{T.~A.}},
  \bauthor{\bsnm{Singh},~\bfnm{Sumeetpal~S.}\binits{S.~S.}},
  \bauthor{\bsnm{Jasra},~\bfnm{Ajay}\binits{A.}} \AND
  \bauthor{\bsnm{Peters},~\bfnm{Gareth~W.}\binits{G.~W.}}
(\byear{2014}).
\btitle{Parameter estimation for hidden {M}arkov models with intractable
  likelihoods}.
\bjournal{Scandinavian Journal of Statistics}
\bvolume{41}
\bpages{970--987}.
\bdoi{10.1111/sjos.12077}
\end{barticle}
\endbibitem

\bibitem[\protect\citeauthoryear{Del~Moral}{2004}]{delmoral2004feynman}
\begin{bbook}[author]
\bauthor{\bsnm{Del~Moral},~\bfnm{Pierre}\binits{P.}}
(\byear{2004}).
\btitle{{F}eynman--{K}ac Formulae: Genealogical and Interacting Particle
  Systems with Applications}.
\bseries{Probability and its Applications}.
\bpublisher{Springer}, \baddress{New York}.
\bdoi{10.1007/978-1-4684-9393-1}
\end{bbook}
\endbibitem

\bibitem[\protect\citeauthoryear{Dobrushin}{1956}]{dobrushin1956central}
\begin{barticle}[author]
\bauthor{\bsnm{Dobrushin},~\bfnm{R.~L.}\binits{R.~L.}}
(\byear{1956}).
\btitle{Central limit theorem for nonstationary {M}arkov chains. {I}}.
\bjournal{Theory of Probability and Its Applications}
\bvolume{1}
\bpages{65--80}.
\bdoi{10.1137/1101006}
\end{barticle}
\endbibitem

\bibitem[\protect\citeauthoryear{Doucet, Godsill and
  Andrieu}{2000}]{doucet2000sequential}
\begin{barticle}[author]
\bauthor{\bsnm{Doucet},~\bfnm{Arnaud}\binits{A.}},
  \bauthor{\bsnm{Godsill},~\bfnm{Simon}\binits{S.}} \AND
  \bauthor{\bsnm{Andrieu},~\bfnm{Christophe}\binits{C.}}
(\byear{2000}).
\btitle{On sequential {M}onte {C}arlo sampling methods for {B}ayesian
  filtering}.
\bjournal{Statistics and Computing}
\bvolume{10}
\bpages{197--208}.
\bdoi{10.1023/A:1008935410038}
\end{barticle}
\endbibitem

\bibitem[\protect\citeauthoryear{Finke}{2015}]{finke2015extended}
\begin{bphdthesis}[author]
\bauthor{\bsnm{Finke},~\bfnm{Axel}\binits{A.}}
(\byear{2015}).
\btitle{On extended state-space constructions for {M}onte {C}arlo methods},
\btype{PhD thesis},
\bpublisher{University of Warwick}.
\end{bphdthesis}
\endbibitem

\bibitem[\protect\citeauthoryear{Finke and Thiery}{2023}]{finke2023conditional}
\begin{barticle}[author]
\bauthor{\bsnm{Finke},~\bfnm{Axel}\binits{A.}} \AND
  \bauthor{\bsnm{Thiery},~\bfnm{Alexandre~H.}\binits{A.~H.}}
(\byear{2023}).
\btitle{Conditional sequential {M}onte {C}arlo in high dimensions}.
\bjournal{The Annals of Statistics}
\bvolume{51}
\bpages{437--463}.
\bdoi{10.1214/22-AOS2252}
\end{barticle}
\endbibitem

\bibitem[\protect\citeauthoryear{Finke et~al.}{2026}]{finke2026resampling}
\begin{bmisc}[author]
\bauthor{\bsnm{Finke},~\bfnm{Axel}\binits{A.}},
  \bauthor{\bsnm{Johansen},~\bfnm{Adam~M.}\binits{A.~M.}},
  \bauthor{\bsnm{Lee},~\bfnm{Anthony}\binits{A.}} \AND
  \bauthor{\bsnm{Murray},~\bfnm{Lawrence~M.}\binits{L.~M.}}
(\byear{2026}).
\btitle{Resampling in conditional {SMC} algorithms}.
\bhowpublished{arXiv:2606.25603v1}.
\end{bmisc}
\endbibitem

\bibitem[\protect\citeauthoryear{Godsill, Doucet and
  West}{2004}]{godsill2004monte}
\begin{barticle}[author]
\bauthor{\bsnm{Godsill},~\bfnm{Simon~J.}\binits{S.~J.}},
  \bauthor{\bsnm{Doucet},~\bfnm{Arnaud}\binits{A.}} \AND
  \bauthor{\bsnm{West},~\bfnm{Mike}\binits{M.}}
(\byear{2004}).
\btitle{{M}onte {C}arlo smoothing for nonlinear time series}.
\bjournal{Journal of the American Statistical Association}
\bvolume{99}
\bpages{156--168}.
\bdoi{10.1198/016214504000000151}
\end{barticle}
\endbibitem

\bibitem[\protect\citeauthoryear{Goldstein}{1979}]{goldstein1979maximal}
\begin{barticle}[author]
\bauthor{\bsnm{Goldstein},~\bfnm{Sheldon}\binits{S.}}
(\byear{1979}).
\btitle{Maximal coupling}.
\bjournal{Zeitschrift f{\"u}r Wahrscheinlichkeitstheorie und Verwandte Gebiete}
\bvolume{46}
\bpages{193--204}.
\bdoi{10.1007/BF00533259}
\end{barticle}
\endbibitem

\bibitem[\protect\citeauthoryear{Gordon, Salmond and
  Smith}{1993}]{gordon1993novel}
\begin{barticle}[author]
\bauthor{\bsnm{Gordon},~\bfnm{Neil~J.}\binits{N.~J.}},
  \bauthor{\bsnm{Salmond},~\bfnm{David~J.}\binits{D.~J.}} \AND
  \bauthor{\bsnm{Smith},~\bfnm{Adrian F.~M.}\binits{A.~F.~M.}}
(\byear{1993}).
\btitle{Novel approach to nonlinear/non-{G}aussian {B}ayesian state
  estimation}.
\bjournal{IEE Proceedings F (Radar and Signal Processing)}
\bvolume{140}
\bpages{107--113}.
\bdoi{10.1049/ip-f-2.1993.0015}
\end{barticle}
\endbibitem

\bibitem[\protect\citeauthoryear{Jasra et~al.}{2012}]{jasra2012filtering}
\begin{barticle}[author]
\bauthor{\bsnm{Jasra},~\bfnm{Ajay}\binits{A.}},
  \bauthor{\bsnm{Singh},~\bfnm{Sumeetpal~S.}\binits{S.~S.}},
  \bauthor{\bsnm{Martin},~\bfnm{James~S.}\binits{J.~S.}} \AND
  \bauthor{\bsnm{McCoy},~\bfnm{Emma}\binits{E.}}
(\byear{2012}).
\btitle{Filtering via approximate {B}ayesian computation}.
\bjournal{Statistics and Computing}
\bvolume{22}
\bpages{1223--1237}.
\bdoi{10.1007/s11222-010-9185-0}
\end{barticle}
\endbibitem

\bibitem[\protect\citeauthoryear{Karjalainen
  et~al.}{2023}]{karjalainen2023forgetting}
\begin{bmisc}[author]
\bauthor{\bsnm{Karjalainen},~\bfnm{Joona}\binits{J.}},
  \bauthor{\bsnm{Lee},~\bfnm{Anthony}\binits{A.}},
  \bauthor{\bsnm{Singh},~\bfnm{Sumeetpal~S.}\binits{S.~S.}} \AND
  \bauthor{\bsnm{Vihola},~\bfnm{Matti}\binits{M.}}
(\byear{2023}).
\btitle{On the forgetting of particle filters}.
\bhowpublished{arXiv:2309.08517v4}.
\end{bmisc}
\endbibitem

\bibitem[\protect\citeauthoryear{Karjalainen
  et~al.}{2026}]{karjalainen2026mixing}
\begin{barticle}[author]
\bauthor{\bsnm{Karjalainen},~\bfnm{Joona}\binits{J.}},
  \bauthor{\bsnm{Lee},~\bfnm{Anthony}\binits{A.}},
  \bauthor{\bsnm{Singh},~\bfnm{Sumeetpal~S.}\binits{S.~S.}} \AND
  \bauthor{\bsnm{Vihola},~\bfnm{Matti}\binits{M.}}
(\byear{2026}).
\btitle{Mixing time of the conditional backward sampling particle filter}.
\bjournal{Journal of the Royal Statistical Society Series B: Statistical
  Methodology}
\bvolume{88}
\bpages{1087--1106}.
\bdoi{10.1093/jrsssb/qkaf078}
\end{barticle}
\endbibitem

\bibitem[\protect\citeauthoryear{Karppinen, Singh and
  Vihola}{2024}]{karppinen2024bridge}
\begin{barticle}[author]
\bauthor{\bsnm{Karppinen},~\bfnm{Santeri}\binits{S.}},
  \bauthor{\bsnm{Singh},~\bfnm{Sumeetpal~S.}\binits{S.~S.}} \AND
  \bauthor{\bsnm{Vihola},~\bfnm{Matti}\binits{M.}}
(\byear{2024}).
\btitle{Conditional particle filters with bridge backward sampling}.
\bjournal{Journal of Computational and Graphical Statistics}
\bvolume{33}
\bpages{364--378}.
\bdoi{10.1080/10618600.2023.2231514}
\end{barticle}
\endbibitem

\bibitem[\protect\citeauthoryear{Kitagawa}{1996}]{kitagawa1996monte}
\begin{barticle}[author]
\bauthor{\bsnm{Kitagawa},~\bfnm{Genshiro}\binits{G.}}
(\byear{1996}).
\btitle{Monte {C}arlo Filter and Smoother for Non-{G}aussian Nonlinear State
  Space Models}.
\bjournal{Journal of Computational and Graphical Statistics}
\bvolume{5}
\bpages{1--25}.
\bdoi{10.1080/10618600.1996.10474692}
\end{barticle}
\endbibitem

\bibitem[\protect\citeauthoryear{Lee, Singh and Vihola}{2020}]{lee2020coupled}
\begin{barticle}[author]
\bauthor{\bsnm{Lee},~\bfnm{Anthony}\binits{A.}},
  \bauthor{\bsnm{Singh},~\bfnm{Sumeetpal~S.}\binits{S.~S.}} \AND
  \bauthor{\bsnm{Vihola},~\bfnm{Matti}\binits{M.}}
(\byear{2020}).
\btitle{Coupled conditional backward sampling particle filter}.
\bjournal{The Annals of Statistics}
\bvolume{48}
\bpages{3066--3089}.
\bdoi{10.1214/19-AOS1922}
\end{barticle}
\endbibitem

\bibitem[\protect\citeauthoryear{Lindsten, Jordan and
  Sch{\"o}n}{2014}]{lindsten2014particle}
\begin{barticle}[author]
\bauthor{\bsnm{Lindsten},~\bfnm{Fredrik}\binits{F.}},
  \bauthor{\bsnm{Jordan},~\bfnm{Michael~I.}\binits{M.~I.}} \AND
  \bauthor{\bsnm{Sch{\"o}n},~\bfnm{Thomas~B.}\binits{T.~B.}}
(\byear{2014}).
\btitle{Particle {G}ibbs with ancestor sampling}.
\bjournal{Journal of Machine Learning Research}
\bvolume{15}
\bpages{2145--2184}.
\end{barticle}
\endbibitem

\bibitem[\protect\citeauthoryear{Lindsten and
  Sch{\"o}n}{2013}]{lindsten2013backward}
\begin{barticle}[author]
\bauthor{\bsnm{Lindsten},~\bfnm{Fredrik}\binits{F.}} \AND
  \bauthor{\bsnm{Sch{\"o}n},~\bfnm{Thomas~B.}\binits{T.~B.}}
(\byear{2013}).
\btitle{Backward simulation methods for {M}onte {C}arlo statistical inference}.
\bjournal{Foundations and Trends in Machine Learning}
\bvolume{6}
\bpages{1--143}.
\bdoi{10.1561/2200000045}
\end{barticle}
\endbibitem

\bibitem[\protect\citeauthoryear{Liu}{1996}]{liu1996peskun}
\begin{barticle}[author]
\bauthor{\bsnm{Liu},~\bfnm{Jun~S.}\binits{J.~S.}}
(\byear{1996}).
\btitle{Peskun's theorem and a modified discrete-state {G}ibbs sampler}.
\bjournal{Biometrika}
\bvolume{83}
\bpages{681--682}.
\bdoi{10.1093/biomet/83.3.681}
\end{barticle}
\endbibitem

\bibitem[\protect\citeauthoryear{Mira}{2001}]{mira2001ordering}
\begin{barticle}[author]
\bauthor{\bsnm{Mira},~\bfnm{Antonietta}\binits{A.}}
(\byear{2001}).
\btitle{Ordering and improving the performance of {M}onte {C}arlo {M}arkov
  chains}.
\bjournal{Statistical Science}
\bvolume{16}
\bpages{340--350}.
\bdoi{10.1214/ss/1015346319}
\end{barticle}
\endbibitem

\bibitem[\protect\citeauthoryear{Peskun}{1973}]{peskun1973optimum}
\begin{barticle}[author]
\bauthor{\bsnm{Peskun},~\bfnm{Peter~H.}\binits{P.~H.}}
(\byear{1973}).
\btitle{Optimum {M}onte-{C}arlo sampling using {M}arkov chains}.
\bjournal{Biometrika}
\bvolume{60}
\bpages{607--612}.
\bdoi{10.1093/biomet/60.3.607}
\end{barticle}
\endbibitem

\bibitem[\protect\citeauthoryear{Tierney}{1998}]{tierney1998note}
\begin{barticle}[author]
\bauthor{\bsnm{Tierney},~\bfnm{Luke}\binits{L.}}
(\byear{1998}).
\btitle{A note on {M}etropolis--{H}astings kernels for general state spaces}.
\bjournal{The Annals of Applied Probability}
\bvolume{8}
\bpages{1--9}.
\bdoi{10.1214/aoap/1027961031}
\end{barticle}
\endbibitem

\bibitem[\protect\citeauthoryear{Whiteley}{2010}]{whiteley2010particle}
\begin{barticle}[author]
\bauthor{\bsnm{Whiteley},~\bfnm{Nick}\binits{N.}}
(\byear{2010}).
\btitle{Contribution to the discussion on `{P}article {M}arkov chain {M}onte
  {C}arlo methods' by {A}ndrieu, {C}., {D}oucet, {A}., and {H}olenstein, {R}.}
\bjournal{Journal of the Royal Statistical Society: Series B (Statistical
  Methodology)}
\bvolume{72}
\bpages{306--307}.
\bdoi{10.1111/j.1467-9868.2009.00736.x}
\end{barticle}
\endbibitem

\end{thebibliography}

\begin{appendix}
\section{Proofs}\label{sec:proofs}

This appendix gives the proofs of our results, in the order in which they appear.

\begin{proof}[Proof of Lemma~\ref{lem:coupling}]
Let $\mathbf{u} \coloneqq \mathbf{p} \odot \mathbf{q}$, so that $\mathbf{q} \odot \mathbf{p} = \mathbf{u}$.
By the first part of Lemma~\ref{lem:forced} if $\mathbf{u} \neq \mathbf{0}$, and trivially if $\mathbf{u} = \mathbf{0}$, $u_i \forced_{\mathbf{u}}(i, i') = u_{i'} \forced_{\mathbf{u}}(i', i)$ for all $i, i' \in [N]$.
It follows from~\eqref{eq:coupling} that, for $i \neq i'$,
\begin{align*}
  p_i \coupling_{\mathbf{p},\mathbf{q}}(i, i') & = p_i q_{i'} + u_i \forced_{\mathbf{u}}(i, i') \\
  & = p_i q_{i'} + u_{i'} \forced_{\mathbf{u}}(i', i) = q_{i'} \coupling_{\mathbf{q},\mathbf{p}}(i', i),
\end{align*}
whereas $p_i \coupling_{\mathbf{p},\mathbf{q}}(i, i) = u_i \forced_{\mathbf{u}}(i, i) = q_i \coupling_{\mathbf{q},\mathbf{p}}(i, i)$.
The second part follows from $\coupling_{\mathbf{p},\mathbf{q}}(i, i') = q_{i'} + q_i \forced_{\mathbf{u}}(i, i')$ for $i \neq i'$.
For the third part, the coefficients $p_i \{\coupling_{\mathbf{p},\mathbf{q}}(i, i') - q_{i'}\}$, $i, i' \in [N]$, sum to zero over $i'$ for every $i$, because $\coupling_{\mathbf{p},\mathbf{q}}$ is a Markov kernel, and over $i$ for every $i'$, because $\sum_i p_i \coupling_{\mathbf{p},\mathbf{q}}(i, i') = q_{i'} \sum_i \coupling_{\mathbf{q},\mathbf{p}}(i', i) = q_{i'}$ by the first part.
Because $\langle \varphi, \coupling_{\mathbf{p},\mathbf{q}} \varphi \rangle_{\mathbf{p}} - \mathbf{p}(\varphi) \mathbf{q}(\varphi)$ is the sum of these coefficients times $\varphi(i) \varphi(i')$, expanding $\{\varphi(i) - \varphi(i')\}^2 = \varphi(i)^2 + \varphi(i')^2 - 2 \varphi(i) \varphi(i')$, whose first two terms contribute zero to the sum, gives
\begin{align*}
  & \langle \varphi, \coupling_{\mathbf{p},\mathbf{q}} \varphi \rangle_{\mathbf{p}} - \mathbf{p}(\varphi) \mathbf{q}(\varphi) \\
  & \quad = -\frac{1}{2} \sum_{i \neq i'} p_i \{\coupling_{\mathbf{p},\mathbf{q}}(i, i') - q_{i'}\} \{\varphi(i) - \varphi(i')\}^2,
\end{align*}
where the terms with $i = i'$ vanish.
The right-hand side is nonpositive by the second part.
\end{proof}

\begin{proof}[Proof of Lemma~\ref{lem:extended}]
This proof is not ours and is provided for completeness and ease of reference.
We collect arguments from \citet{andrieu2010particle} in the form of \citet[Section~2.3]{chopin2015particle} and \citet[proof of Proposition~9]{chopin2015particle}.
Steps~\ref{alg:csmc:indices} and~\ref{alg:csmc:forward} of Algorithm~\ref{alg:csmc} draw $\refidx_{1:T}$ with probability $N^{-T}$, then $x_1^n \sim \transdens_1$ for $n \neq \refidx_1$, and, for $2 \leq t \leq T$ and $n \neq \refidx_t$, $\anc_{t-1}^n \sim \weightvec_{t-1}$ and $x_t^n \sim \transdens_t(x_{t-1}^{\anc_{t-1}^n}, \cdot)$, with $\anc_{t-1}^{\refidx_t} = \refidx_{t-1}$.
With $x_{1:T} \sim \target$, on the set where $\anc_{t-1}^{\refidx_t} = \refidx_{t-1}$ for $2 \leq t \leq T$, the variables $(\pos, \anc, \refidx_{1:T})$ therefore have the density
\begin{align*}
  & \target(x_1^{\refidx_1}, \dotsc, x_T^{\refidx_T}) \frac{1}{N^T} \prod_{n \neq \refidx_1} \transdens_1(x_1^n) \\
  & \times \prod_{t=2}^T \prod_{n \neq \refidx_t} \weight_{t-1}^{\anc_{t-1}^n} \transdens_t(x_{t-1}^{\anc_{t-1}^n}, x_t^n).
\end{align*}
The factors $\transdens_1(x_1^{\refidx_1})$ and $\transdens_t(x_{t-1}^{\refidx_{t-1}}, x_t^{\refidx_t})$ of~\eqref{eq:target} complete the products over $n$, because $x_{t-1}^{\refidx_{t-1}} = x_{t-1}^{\anc_{t-1}^{\refidx_t}}$.
By the definition of the weights, $\pot_t(x_t^{\refidx_t}) = \weight_t^{\refidx_t} \sum_{i=1}^N \pot_t(x_t^i)$ for $t \in [T]$, and, because $\refidx_{t-1} = \anc_{t-1}^{\refidx_t}$,
\begin{align*}
  \prod_{t=1}^T \weight_t^{\refidx_t} = \weight_T^{\refidx_T} \prod_{t=2}^T \weight_{t-1}^{\anc_{t-1}^{\refidx_t}},
\end{align*}
which completes the products of the weights over $n$.
The density is therefore~\eqref{eq:extended}.
Because $\refidx_{t-1} = \anc_{t-1}^{\refidx_t}$, the indices $\refidx_{1:T-1}$ are functions of $\anc$ and $\refidx_T$, so that $(\pos, \anc, \refidx_T) \sim \exttarget$ and $x_{1:T}$ is the trajectory traced back from $\refidx_T$.
Because the weights depend on $\pos$ only, \eqref{eq:extended} is, as a function of $(\anc, \refidx_T)$, proportional to
\begin{align*}
  \weight_T^{\refidx_T} \prod_{t=2}^T \prod_{n=1}^N \weight_{t-1}^{\anc_{t-1}^n} \transdens_t(x_{t-1}^{\anc_{t-1}^n}, x_t^n),
\end{align*}
which gives the second part, by~\eqref{eq:bs}.
\end{proof}

\begin{proof}[Proof of Lemma~\ref{lem:pair}]
By Lemma~\ref{lem:extended}, $(\pos, \anc, \refidx_T) \sim \exttarget$, so that, conditionally on $\pos$, $\refidx_T \sim \weightvec_T$, and the first claim follows from \algstep{alg:csmc:final}.
Let $t \in [T-1]$.
The indices $(\refidx_{t+1:T}, \outidx_{t+1:T})$ are functions of $\refidx_T$, of the ancestor indices $\anc_s^n$ with $s > t$, and of the draws of steps~\ref{alg:csmc:final} and~\ref{alg:csmc:backward} of Algorithm~\ref{alg:csmc} at the times $s > t$, also with the draws $\outidx_s \sim \kernel_{s, \outidx_{s+1}, \outidx_{s+1}}(\anc_s^{\outidx_{s+1}}, \cdot)$.
By Lemma~\ref{lem:extended}, conditionally on $\pos$ and on these variables, the indices $\anc_t^n$, $n \in [N]$, are therefore independent, with the distributions $\bsprob_t(\cdot \mid n)$.
Hence $\refidx_t = \anc_t^{\refidx_{t+1}} \sim \bsprob_t(\cdot \mid \refidx_{t+1})$.
By \algstep{alg:csmc:backward}, given also $(\anc_t^n)_{n \in [N]}$, $\outidx_t \sim \kernel_{t, \refidx_{t+1}, \outidx_{t+1}}(\refidx_t, \cdot)$, which depends on these indices only through $\refidx_t$.
With $\outidx_t \sim \kernel_{t, \outidx_{t+1}, \outidx_{t+1}}(\anc_t^{\outidx_{t+1}}, \cdot)$ instead, $\outidx_t \sim \kernel_{t, \refidx_{t+1}, \refidx_{t+1}}(\refidx_t, \cdot)$ if $\outidx_{t+1} = \refidx_{t+1}$, because then $\anc_t^{\outidx_{t+1}} = \refidx_t$.
If $\outidx_{t+1} \neq \refidx_{t+1}$, then $\anc_t^{\outidx_{t+1}}$ is independent of $\refidx_t$, with the distribution $\bsprob_t(\cdot \mid \outidx_{t+1})$, which is invariant for $\kernel_{t, \outidx_{t+1}, \outidx_{t+1}}$, so that $\outidx_t$ is independent of $\refidx_t$, with the distribution $\bsprob_t(\cdot \mid \outidx_{t+1}) = \kernel_{t, \refidx_{t+1}, \outidx_{t+1}}(\refidx_t, \cdot)$.
In all cases, these distributions depend on the conditioning variables only through $\pos$, $\refidx_{t+1}$, and $\outidx_{t+1}$, which proves the claim.
\end{proof}

\begin{proof}[Proof of Proposition~\ref{prop:reversible}]
Let $x_{1:T} \sim \target$ and run Algorithm~\ref{alg:csmc}, so that $x_t = x_t^{\refidx_t}$ and $\tilde{x}_t = x_t^{\outidx_t}$ for $t \in [T]$.
By Lemma~\ref{lem:pair}, conditionally on $\pos$, the pairs $(\refidx_t, \outidx_t)$, $t = T, \dotsc, 1$, form a Markov chain on $[N]^2$.
Its initial distribution $\weight_T^i \kernel_T(i, j)$ is symmetric in $(i, j)$, by the reversibility of $\kernel_T$.
At time $t \in [T-1]$, its transition from $(j, j')$ to $(i, i')$ has the probability $\bsprob_t(i \mid j) \kernel_{t,j,j'}(i, i')$, which, by~\eqref{eq:symmetry}, does not change when $(i, j)$ and $(i', j')$ are exchanged.
Hence, conditionally on $\pos$, the distribution of $(\refidx_{1:T}, \outidx_{1:T})$ does not change when $\refidx_{1:T}$ and $\outidx_{1:T}$ are exchanged.
The pairs $(x_{1:T}, \tilde{x}_{1:T})$ and $(\tilde{x}_{1:T}, x_{1:T})$ hence have the same distribution, that is, the Markov kernel defined by Algorithm~\ref{alg:csmc} is reversible with respect to $\target$.
\end{proof}

\begin{proof}[Proof of Theorem~\ref{thm:ordering}]
We show, by backward induction on $t$, that, for every $\varphi \colon [N] \to \mathbb{R}$,
\begin{equation}
  \expprod_t^{\MBS{}}(\varphi) \leq \expprod_t^{\FMBS{}}(\varphi) \leq \expprod_t^{\BS{}}(\varphi) \leq \expprod_t^{\AT{}}(\varphi), \label{eq:qorder}
\end{equation}
and $\expprod_t^{\CMBS{}}(\varphi) \leq \expprod_t^{\FMBS{}}(\varphi)$, where the superscripts indicate the selection rule.
For $t = T$, by Lemma~\ref{lem:forced},
\begin{align*}
  \expprod_T^{\MBS{}}(\varphi) = \expprod_T^{\CMBS{}}(\varphi) = \expprod_T^{\FMBS{}}(\varphi) & = \langle \varphi, \forced_{\weightvec_T} \varphi \rangle_{\weightvec_T} \\
  & \leq \weightvec_T(\varphi)^2,
\end{align*}
and $\weightvec_T(\varphi)^2 = \expprod_T^{\BS{}}(\varphi) = \expprod_T^{\AT{}}(\varphi)$.
Assume that the claim holds at time $t + 1$.
By~\eqref{eq:qt} and the signs of $\excess_{t,j,j'}$, for \MBS{} and, in the same way, for \CMBS{},
\begin{align*}
  \expprod_t^{\MBS{}}(\varphi) & \leq \expprod_{t+1}^{\MBS{}}(\bsprob_t \varphi) \leq \expprod_{t+1}^{\FMBS{}}(\bsprob_t \varphi) = \expprod_t^{\FMBS{}}(\varphi), \\
  \expprod_t^{\FMBS{}}(\varphi) & = \expprod_{t+1}^{\FMBS{}}(\bsprob_t \varphi) \leq \expprod_{t+1}^{\BS{}}(\bsprob_t \varphi) = \expprod_t^{\BS{}}(\varphi), \\
  \expprod_t^{\BS{}}(\varphi) & = \expprod_{t+1}^{\BS{}}(\bsprob_t \varphi) \leq \expprod_{t+1}^{\AT{}}(\bsprob_t \varphi) \leq \expprod_t^{\AT{}}(\varphi).
\end{align*}
With $\varphi(i) \coloneqq f(x_t^i)$, we have $f(x_t) f(\tilde{x}_t) = \varphi(\refidx_t) \varphi(\outidx_t)$, so that $\bbE_{\pos}\{f(x_t) f(\tilde{x}_t)\} = \expprod_t(\varphi)$, whose expectation under the distribution of $\pos$, which does not depend on the selection rule, is the expectation in Theorem~\ref{thm:ordering}.
These expectations are finite, because $2 |f(x_t) f(\tilde{x}_t)|$ is at most $f(x_t)^2 + f(\tilde{x}_t)^2$, where $\tilde{x}_{1:T} \sim \target$ by Proposition~\ref{prop:reversible}.
Given $\pos$, the event $\{\tilde{x}_t = x_t\}$ is the disjoint union of the events $\{\refidx_t \in S, \outidx_t \in S\}$ over the classes $S \subseteq [N]$ of indices whose particles at time $t$ are equal, so that $\bbP_{\pos}(\tilde{x}_t = x_t) = \sum_S \expprod_t(\mathbf{1}_S)$, which is ordered by~\eqref{eq:qorder}.
So is its expectation under the distribution of $\pos$, which does not depend on the selection rule.
\end{proof}

\begin{proof}[Proof of Corollary~\ref{cor:pairs}]
Let $\varphi(i, i') \coloneqq f(x_t^i, x_{t+1}^{i'})$ for $i, i' \in [N]$, so that
\begin{align*}
  f(x_t, x_{t+1}) f(\tilde{x}_t, \tilde{x}_{t+1}) = \varphi(\refidx_t, \refidx_{t+1}) \varphi(\outidx_t, \outidx_{t+1}).
\end{align*}
By Lemma~\ref{lem:pair}, its conditional expectation given $\pos$, $\refidx_{t+1} = j$, and $\outidx_{t+1} = j'$ is~\eqref{eq:pairs}.
With $\chi(j) \coloneqq \sum_i \bsprob_t(i \mid j) \varphi(i, j)$, this is $\chi(j) \chi(j')$ for \BS{} and \FMBS{}, and also for \AT{} and \MBS{} if $j \neq j'$.
For $j = j'$, it is $\sum_i \bsprob_t(i \mid j) \varphi(i, j)^2 \geq \chi(j)^2$ for \AT{}, and at most $\chi(j)^2$ for \MBS{}, by the third part of Lemma~\ref{lem:forced}.
Hence the conditional expectation of $\varphi(\refidx_t, \refidx_{t+1}) \varphi(\outidx_t, \outidx_{t+1})$ given $\pos$ is $\expprod_{t+1}(\chi)$ for \BS{} and \FMBS{}, at least $\expprod_{t+1}^{\AT{}}(\chi)$ for \AT{}, and at most $\expprod_{t+1}^{\MBS{}}(\chi)$ for \MBS{}, so that it is ordered as claimed, by~\eqref{eq:qorder} at time $t + 1$.
The first claim follows by taking expectations under the distribution of $\pos$, which does not depend on the selection rule.
These expectations are finite, because $2 |f(x_t, x_{t+1}) f(\tilde{x}_t, \tilde{x}_{t+1})|$ is at most $f(x_t, x_{t+1})^2 + f(\tilde{x}_t, \tilde{x}_{t+1})^2$, where $\tilde{x}_{1:T} \sim \target$ by Proposition~\ref{prop:reversible}.
If the particles are pairwise distinct, then $\tilde{x}_t = x_t$ and $\tilde{x}_{t+1} = x_{t+1}$ if and only if $\refidx_t = \outidx_t$ and $\refidx_{t+1} = \outidx_{t+1}$, almost surely, and, by Lemma~\ref{lem:pair},
\begin{align*}
  & \bbP_{\pos}(\refidx_t = \outidx_t, \refidx_{t+1} = \outidx_{t+1}) \\
  & \quad = \sum_{j=1}^N \bbP_{\pos}(\refidx_{t+1} = \outidx_{t+1} = j) \sum_{i=1}^N \bsprob_t(i \mid j) \kernel_{t,j,j}(i, i).
\end{align*}
By~\eqref{eq:qorder} at time $t + 1$ with $\varphi = \mathbf{1}_{\{j\}}$, the first factor is nondecreasing from \MBS{} to \FMBS{}, \BS{}, and \AT{}, and at most as large under \CMBS{} as under \FMBS{}.
So is the second factor, which is the same under \MBS{} and \CMBS{}, because $\forced_{\mathbf{p}}(i, i) = 1 - \sum_{i' \neq i} \forced_{\mathbf{p}}(i, i') \leq p_i$ by the second part of Lemma~\ref{lem:forced}.
Since both factors are nonnegative, the second claim follows by taking expectations.
\end{proof}

\begin{proof}[Proof of Proposition~\ref{prop:washout}]
For $i \in [N]$, let $\psi_i \coloneqq \bsprob_{T-1} \dotsm \bsprob_t \mathbf{1}_{\{i\}}$, so that $\psi_i(j)$ is the probability that the backward chain with the kernels $\bsprob_s$, started from $j$ at time $T$, is at $i$ at time $t$, and, for $S \subseteq [N]$, let $\psi_S \coloneqq \sum_{i \in S} \psi_i = \bsprob_{T-1} \dotsm \bsprob_t \mathbf{1}_S$.
As in the proof of Theorem~\ref{thm:ordering}, $\bbP_{\pos}(\tilde{x}_t = x_t) = \sum_S \expprod_t(\mathbf{1}_S)$, where the sum is over the classes $S$ of indices whose particles at time $t$ are equal.
For \BS{} and \FMBS{}, $\expprod_t(\mathbf{1}_S) = \expprod_T(\psi_S)$, by~\eqref{eq:qt}, so that, given $\pos$, the difference of the update rates is
\begin{align*}
  \sum_S \bigl\{\weightvec_T(\psi_S)^2 - \langle \psi_S, \forced_{\weightvec_T} \psi_S \rangle_{\weightvec_T}\bigr\}.
\end{align*}
Each term is minus the lag-one autocovariance of $\psi_S$ for the Markov chain with kernel $\forced_{\weightvec_T}$ started from its invariant distribution $\weightvec_T$.
It is therefore nonnegative, by Lemma~\ref{lem:forced}, and upper bounded by the variance $\weightvec_T(\psi_S^2) - \weightvec_T(\psi_S)^2$ of $\psi_S$ under $\weightvec_T$, by the Cauchy--Schwarz inequality.
Summing over $S$,
\begin{align*}
  & \sum_S \bigl\{\weightvec_T(\psi_S^2) - \weightvec_T(\psi_S)^2\bigr\} \\
  & \quad = \frac{1}{2} \sum_{j, j'} \weight_T^j \weight_T^{j'} \sum_S \{\psi_S(j) - \psi_S(j')\}^2 \\
  & \quad \leq \max_{j, j'} \frac{1}{2} \sum_S |\psi_S(j) - \psi_S(j')| \\
  & \quad \leq \max_{j, j'} {\frac{1}{2} \sum_{i=1}^N} |\psi_i(j) - \psi_i(j')|,
\end{align*}
because $|\psi_S(j) - \psi_S(j')| \leq 1$ and $\sum_{j, j'} \weight_T^j \weight_T^{j'} = 1$, and by the triangle inequality, since $\psi_S = \sum_{i \in S} \psi_i$.
The right-hand side is the maximal total-variation distance between the distributions of the backward chain at time $t$ started from two indices at time $T$, which is at most $\prod_{s=t}^{T-1} \dobrushin_s$~\citep{dobrushin1956central}.
The condition on $\transdens_{s+1}$ is the one-step case, with a constant $\mixconst$, of the mixing condition of \citet[Section~3.5.2]{delmoral2004feynman}.
Under this condition, for all $i, j \in [N]$,
\begin{align*}
  \transdens_{s+1}(x_s^i, x_{s+1}^j) \geq \mixconst \max_k \transdens_{s+1}(x_s^k, x_{s+1}^j),
\end{align*}
so that $\bsprob_s(i \mid j) \geq \mixconst \weight_s^i$.
Hence $\dobrushin_s \leq 1 - \mixconst$.
\end{proof}

\begin{proof}[Proof of Proposition~\ref{prop:gain}]
We give the proof for \MBS{}, that for \CMBS{} being the same with \CMBS{} in place of \MBS{}.
Because $\sum_i \excess_{t,j,j'}(\mathbf{1}_{\{i\}}) = -\gain_t(j, j')$, the sum over $\varphi = \mathbf{1}_{\{i\}}$, $i \in [N]$, of the last term of~\eqref{eq:qt} is $-\bbE_{\MBS{}}\{\gain_t(\refidx_{t+1}, \outidx_{t+1}) \mid \pos\}$.
By~\eqref{eq:qt} and~\eqref{eq:qorder} at time $t + 1$, it follows that
\begin{align*}
  & \sum_{i=1}^N \expprod_t^{\MBS{}}(\mathbf{1}_{\{i\}}) \leq \sum_{i=1}^N \expprod_{t+1}^{\FMBS{}}(\bsprob_t \mathbf{1}_{\{i\}}) \\
  & \quad - \bbE_{\MBS{}}\bigl\{\gain_t(\refidx_{t+1}, \outidx_{t+1}) \bigm| \pos\bigr\},
\end{align*}
and $\sum_i \expprod_{t+1}^{\FMBS{}}(\bsprob_t \mathbf{1}_{\{i\}}) = \sum_i \expprod_t^{\FMBS{}}(\mathbf{1}_{\{i\}})$, by~\eqref{eq:qt}.
The claim follows by taking expectations, because $\bbP_{\pos}(\tilde{x}_t = x_t) = \sum_i \expprod_t(\mathbf{1}_{\{i\}})$ for both rules.
Finally, $\forced_{\mathbf{p}}(i, i) = 1 - \sum_{i' \neq i} \forced_{\mathbf{p}}(i, i') \leq p_i$, by the second part of Lemma~\ref{lem:forced}, and $\coupling_{\mathbf{p},\mathbf{q}}(i, i) = q_i \forced_{\mathbf{p} \odot \mathbf{q}}(i, i) \leq q_i$, so that $\gain_t(j, j') \geq 0$.
\end{proof}

\begin{proof}[Proof of Proposition~\ref{prop:imh}]
Because $\transdens_{t+1}(x_t^i, x_{t+1}^j)$ does not depend on $i$, $\bsprob_t(\cdot \mid j) = \weightvec_t$ for all $j \in [N]$ and $t \in [T-1]$.
The kernels of \CMBS{} are therefore $\kernel_{t,j,j} = \forced_{\weightvec_t}$ and, for $j \neq j'$, $\kernel_{t,j,j'} = \coupling_{\weightvec_t, \weightvec_t}$.
For a probability vector $\mathbf{p}$ on $[2]$, $\coupling_{\mathbf{p},\mathbf{p}} = \forced_{\mathbf{p}}$.
Indeed, let $p_1 \geq p_2$, the other case being symmetric.
By~\eqref{eq:forced}, $\forced_{\mathbf{p}}(1, 2) = p_2 / p_1$ and $\forced_{\mathbf{p}}(2, 1) = 1$, so that, with $\mathbf{u} \coloneqq \mathbf{p} \odot \mathbf{p}$, $\forced_{\mathbf{u}}(1, 2) = (p_2 / p_1)^2$ and $\forced_{\mathbf{u}}(2, 1) = 1$.
By~\eqref{eq:coupling}, $\coupling_{\mathbf{p},\mathbf{p}}(1, 2) = p_2 + p_1 (p_2 / p_1)^2 = p_2 / p_1$ and $\coupling_{\mathbf{p},\mathbf{p}}(2, 1) = p_1 + p_2 = 1$.
By Lemma~\ref{lem:pair}, conditionally on $\pos$, the indices $(\refidx_{1:T}, \outidx_{1:T})$ therefore have the probability
\begin{align*}
  \prod_{t=1}^T \weight_t^{\refidx_t} \forced_{\weightvec_t}(\refidx_t, \outidx_t),
\end{align*}
which proves the claim.
\end{proof}

\section{Averages over the time steps}\label{sec:averages}

Table~\ref{tab:averages} gives the averages over the time steps of the update rates and of the ESJD in the experiment of Section~\ref{sec:numerics}, for each model, selection rule, and number of particles.

\begin{table*}[t]
\caption{Averages over the time steps of the update rates and of the ESJD, for the four selection rules and $N \in \{2, 4, 8, 16, 32, 64\}$, on the linear-Gaussian (LG) model with $T = 200$ and the nonlinear (NL) model with $T = 100$, estimated as in Figure~\ref{fig:ratios}.
The standard errors are below $10^{-4}$ for the update rates and below $0.1\%$ of the ESJD.}
\label{tab:averages}
\centering
\renewcommand{\arraystretch}{1.15}
\begin{tabular}{@{}l cccccc cccccc@{}}
\toprule
 & \multicolumn{6}{c}{update rates} & \multicolumn{6}{c}{ESJD} \\
\cmidrule(lr){2-7} \cmidrule(l){8-13}
$N$ & 2 & 4 & 8 & 16 & 32 & 64 & 2 & 4 & 8 & 16 & 32 & 64 \\
\midrule
\multicolumn{13}{@{}l}{\emph{LG, $d = 1$}} \\
\BS{} & 0.310 & 0.569 & 0.749 & 0.861 & 0.925 & 0.960 & 0.296 & 0.536 & 0.700 & 0.801 & 0.859 & 0.892 \\
\FMBS{} & 0.312 & 0.570 & 0.750 & 0.861 & 0.925 & 0.960 & 0.297 & 0.537 & 0.701 & 0.802 & 0.860 & 0.892 \\
\MBS{} & 0.419 & 0.642 & 0.782 & 0.873 & 0.929 & 0.962 & 0.373 & 0.584 & 0.721 & 0.808 & 0.862 & 0.893 \\
\CMBS{} & 0.500 & 0.789 & 0.919 & 0.969 & 0.988 & 0.995 & 0.419 & 0.667 & 0.796 & 0.860 & 0.893 & 0.910 \\
\addlinespace
\multicolumn{13}{@{}l}{\emph{LG, $d = 3$}} \\
\BS{} & 0.132 & 0.293 & 0.465 & 0.623 & 0.751 & 0.844 & 0.379 & 0.833 & 1.31 & 1.75 & 2.10 & 2.36 \\
\FMBS{} & 0.133 & 0.294 & 0.466 & 0.624 & 0.752 & 0.845 & 0.380 & 0.835 & 1.31 & 1.75 & 2.10 & 2.36 \\
\MBS{} & 0.180 & 0.359 & 0.523 & 0.662 & 0.773 & 0.854 & 0.499 & 0.994 & 1.45 & 1.84 & 2.15 & 2.38 \\
\CMBS{} & 0.189 & 0.394 & 0.588 & 0.746 & 0.856 & 0.923 & 0.518 & 1.07 & 1.58 & 2.00 & 2.31 & 2.51 \\
\addlinespace
\multicolumn{13}{@{}l}{\emph{NL, $\sigma_W^2 = 10$}} \\
\BS{} & 0.274 & 0.504 & 0.685 & 0.815 & 0.896 & 0.944 & 3.19 & 7.26 & 12.1 & 16.5 & 19.5 & 21.3 \\
\FMBS{} & 0.277 & 0.506 & 0.687 & 0.816 & 0.897 & 0.944 & 3.25 & 7.32 & 12.2 & 16.5 & 19.5 & 21.3 \\
\MBS{} & 0.379 & 0.585 & 0.732 & 0.836 & 0.905 & 0.947 & 4.12 & 8.16 & 12.8 & 16.9 & 19.7 & 21.4 \\
\CMBS{} & 0.450 & 0.714 & 0.868 & 0.945 & 0.978 & 0.991 & 4.66 & 9.29 & 14.3 & 18.2 & 20.6 & 22.0 \\
\addlinespace
\multicolumn{13}{@{}l}{\emph{NL, $\sigma_W^2 = 1$}} \\
\BS{} & 0.163 & 0.346 & 0.527 & 0.682 & 0.801 & 0.883 & 0.751 & 1.63 & 2.56 & 3.47 & 4.24 & 4.81 \\
\FMBS{} & 0.164 & 0.348 & 0.528 & 0.683 & 0.802 & 0.883 & 0.769 & 1.65 & 2.59 & 3.51 & 4.28 & 4.83 \\
\MBS{} & 0.236 & 0.435 & 0.596 & 0.724 & 0.822 & 0.892 & 1.00 & 1.93 & 2.83 & 3.66 & 4.37 & 4.86 \\
\CMBS{} & 0.261 & 0.509 & 0.710 & 0.847 & 0.927 & 0.968 & 1.07 & 2.12 & 3.10 & 3.96 & 4.63 & 5.08 \\
\bottomrule
\end{tabular}
\end{table*}

\section{Potential functions of the previous state and other resampling schemes}\label{sec:previous-state}

Our proofs require multinomial resampling and potential functions which depend on the current state only.
This section removes both requirements with the construction of \citet[Appendix~D]{finke2026resampling}.
In Algorithm~\ref{alg:previous-state}, the ancestor index of each state of the reference is drawn by ancestor sampling~\citep{lindsten2014particle} in the forward pass, which is followed by the backward pass of Algorithm~\ref{alg:csmc}.
Algorithm~\ref{alg:previous-state} is reversible for all the kernels of Algorithm~\ref{alg:csmc} (Proposition~\ref{prop:previous-state}).
The results of Sections~\ref{sec:update-rates} and~\ref{sec:gains} hold for it (Proposition~\ref{prop:orderings}).
With multinomial resampling and potential functions of the current state, it gives the same joint distribution of the reference and the output as Algorithm~\ref{alg:csmc} (Proposition~\ref{prop:reduction}).
Without ancestor sampling, the usual implementations of \AT{}, \FMAT{}, \BS{}, \FMBS{}, and \MBS{} leave the target distribution invariant (Proposition~\ref{prop:ancestors}).

\subsection{Model and resampling schemes}

Let $\pot_1 \colon \statespace \to (0, \infty)$ and $\pot_t \colon \statespace^2 \to (0, \infty)$, $2 \leq t \leq T$, be potential functions, and let the target distribution have density
\begin{equation}
  \target(x_{1:T}) = {\frac{\transdens_1(x_1) \pot_1(x_1)}{\normconst} \prod_{t=2}^T \transdens_t(x_{t-1}, x_t) \pot_t(x_{t-1}, x_t)}, \label{eq:target-previous}
\end{equation}
where $\normconst \in (0, \infty)$.
At each time $t \in [T-1]$, the vector $\anc_t \coloneqq (\anc_t^1, \dotsc, \anc_t^N)$ of the indices of the ancestors at time $t$ of the particles at time $t + 1$ is drawn from a resampling scheme $\resdist_t$ (Definition~\ref{def:resampling}).
The normalized weights at the times $t \geq 2$ are
\begin{align*}
  \weight_t^n \coloneqq \frac{\pot_t(x_{t-1}^{\anc_{t-1}^n}, x_t^n)}{\sum_{i=1}^N \pot_t(x_{t-1}^{\anc_{t-1}^i}, x_t^i)}, \quad n \in [N],
\end{align*}
which depend on the ancestor indices at time $t - 1$.

\begin{definition}\label{def:resampling}
A resampling scheme at time $t \in [T-1]$ is a distribution $\resdist_t$ on $[N]^N$, which may depend on the particles at time $t$ and on their weights, such that, for every $i \in [N]$, the expected number of indices $n \in [N]$ with $\anc_t^n = i$ under $\resdist_t$ is $N \weight_t^i$.
\end{definition}

Multinomial resampling, with $\resdist_t(\anc_t) = \prod_{n=1}^N \weight_t^{\anc_t^n}$, is a resampling scheme, as are stratified, systematic, residual, and killing resampling, among others~\citep[Section~3]{finke2026resampling}.
The first part of Lemma~\ref{lem:extended} extends to this setting (Lemma~\ref{lem:extended-previous} below), whereas the second part fails, because the weights depend on the ancestor indices, and because the ancestor indices need not be independent (Section~\ref{sec:locally-optimal-ancestors} gives an example).

For $t \in [T-1]$, $i \in [N]$, and $x' \in \statespace$, let
\begin{equation}
  \bsprob_t(i \mid x') \coloneqq \frac{\weight_t^i \transdens_{t+1}(x_t^i, x') \pot_{t+1}(x_t^i, x')}{\sum_{k=1}^N \weight_t^k \transdens_{t+1}(x_t^k, x') \pot_{t+1}(x_t^k, x')}, \label{eq:bs-previous}
\end{equation}
and let $\bsprob_t(i \mid j) \coloneqq \bsprob_t(i \mid x_{t+1}^j)$ for $j \in [N]$.
These are the backward-sampling probabilities of \citet{lindsten2013backward}, which reduce to~\eqref{eq:bs} when $\pot_{t+1}$ depends on the current state only.
Below, the denominator is almost surely positive for $x' = x_{t+1}^j$, $j \in [N]$, because the weights are positive and $x_{t+1}^j$ is either drawn from $\transdens_{t+1}(x_t^{\anc_t^j}, \cdot)$ or the state $x_{t+1}$ of the reference, whose density given its predecessor $x_t$, a particle at time $t$, is positive.
The kernels $\kernel_T$ and $\kernel_{t,j,j'}$ of the six selection rules are defined as in Section~\ref{sec:method}, with these probabilities.
For the six rules, $\kernel_{t,j,j}$ is the identity kernel, the Gibbs kernel $\kernel_{t,j,j}(i, \cdot) = \bsprob_t(\cdot \mid j)$, or the forced move $\forced_{\bsprob_t(\cdot \mid j)}$, which depend on $j$ only through $\bsprob_t(\cdot \mid j)$.

\subsection{Ancestor sampling of the reference}

\begin{algorithm}[CSMC with ancestor sampling and a general backward pass]\label{alg:previous-state}
Given the reference trajectory $x_{1:T}$ and kernels $\kernel_T$ and $\kernel_{t,j,j'}$ as in Algorithm~\ref{alg:csmc}, with the probabilities~\eqref{eq:bs-previous} and functions of the particle positions and of the ancestor indices:
\begin{algsteps}
  \item\label{alg:previous-state:initial} Draw $\refidx_1$ uniformly on $[N]$, set $x_1^{\refidx_1} \coloneqq x_1$, and draw $x_1^n \sim \transdens_1$ for $n \neq \refidx_1$.
  \item\label{alg:previous-state:forward} For $t = 1, \dotsc, T - 1$, draw $c \sim \bsprob_t(\cdot \mid x_{t+1})$, and then $(\refidx_{t+1}, \anc_t)$ with probabilities proportional to $\resdist_t(\anc_t)$ on the set where $\anc_t^{\refidx_{t+1}} = c$.
  Set $x_{t+1}^{\refidx_{t+1}} \coloneqq x_{t+1}$ and draw $x_{t+1}^n \sim \transdens_{t+1}(x_t^{\anc_t^n}, \cdot)$ for $n \neq \refidx_{t+1}$.
  \item\label{alg:previous-state:backward} Run steps~\ref{alg:csmc:final} to~\ref{alg:csmc:return} of Algorithm~\ref{alg:csmc}.
\end{algsteps}
\end{algorithm}

In \algstep{alg:previous-state:forward}, the index $c$ of the ancestor of the state $x_{t+1}$ of the reference is drawn by ancestor sampling~\citep{lindsten2014particle}.
The pair $(\refidx_{t+1}, \anc_t)$ is then drawn by the conditional resampling of \citet[Algorithm~2]{finke2026resampling}, given that $c$ is the ancestor of the particle $\refidx_{t+1}$.
By Definition~\ref{def:resampling}, the normalizing constant of its probabilities is $N \weight_t^c$.
Equivalently, $\refidx_{t+1} = n$ with probability proportional to the probability that $\anc_t^n = c$ under $\resdist_t$.
The other ancestor indices are then drawn from their conditional distribution under $\resdist_t$ given $\anc_t^{\refidx_{t+1}} = c$.
\citet[Section~3]{finke2026resampling} give algorithms for these draws for the resampling schemes above.
With multinomial resampling, $\refidx_{t+1}$ is uniform on $[N]$, while the other ancestor indices are independent draws from $\weightvec_t$, as in \algstep{alg:csmc:forward}.
Unlike in Algorithm~\ref{alg:csmc}, the indices of the reference cannot be fixed in general, because the resampling schemes need not be exchangeable~\citep[Section~2.3]{finke2026resampling}.
\algstep{alg:previous-state:forward} computes the probabilities $\bsprob_t(\cdot \mid x_{t+1})$, at a cost of order $N$ per time step, so that the costs of the six selection rules are of order $NT$, as in Section~\ref{sec:method}.

For each of the six selection rules, we also consider \algstep{alg:previous-state:forward} with $c$ drawn from the kernel $\kernel_{t,j,j}$ of the rule, with $\bsprob_t(\cdot \mid x_{t+1})$ in place of $\bsprob_t(\cdot \mid j)$, started from $\refidx_t$.
With the Gibbs kernels of \BS{} and \FMBS{}, this is \algstep{alg:previous-state:forward} itself.
With the identity kernels of \AT{} and \FMAT{}, $c = \refidx_t$, so that the ancestor of each state of the reference is its predecessor, as in \algstep{alg:csmc:forward}.
With the forced moves of \MBS{} and \CMBS{}, $c$ is drawn by a forced move from $\refidx_t$.

\begin{lemma}\label{lem:extended-previous}
Let $x_{1:T} \sim \target$ and run steps~\ref{alg:previous-state:initial} and~\ref{alg:previous-state:forward} of Algorithm~\ref{alg:previous-state}, with $c$ drawn from the kernels $\kernel_{t,j,j}$ of one of the six selection rules.
Then $x_{1:T} = (x_1^{\refidx_1}, \dotsc, x_T^{\refidx_T})$ and $(\pos, \anc, \refidx_{1:T})$ has the density
\begin{equation}
\begin{split}
  & \frac{\weight_T^{\refidx_T}}{\normconst N^T} \Bigl\{\sum_{i=1}^N \pot_1(x_1^i)\Bigr\} \prod_{n=1}^N \transdens_1(x_1^n) \prod_{t=1}^{T-1} \resdist_t(\anc_t) \\
  & \times \prod_{t=2}^T \Bigl\{\sum_{i=1}^N \pot_t(x_{t-1}^{\anc_{t-1}^i}, x_t^i)\Bigr\} \prod_{n=1}^N \transdens_t(x_{t-1}^{\anc_{t-1}^n}, x_t^n) \\
  & \times \prod_{t=1}^{T-1} \kernel_{t,\refidx_{t+1},\refidx_{t+1}}(\anc_t^{\refidx_{t+1}}, \refidx_t).
\end{split} \label{eq:extended-previous}
\end{equation}
\end{lemma}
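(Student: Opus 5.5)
The plan is to write down the density of the variables generated by steps~\ref{alg:previous-state:initial} and~\ref{alg:previous-state:forward} as a product of sequential conditional densities, and then rearrange it into~\eqref{eq:extended-previous}. The rearrangement uses the reversibility of $\kernel_{t,j,j}$ with respect to $\bsprob_t(\cdot \mid j)$, followed by a telescoping of the weights, as in the proof of Lemma~\ref{lem:extended}.

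The identity $x_{1:T} = (x_1^{\refidx_1}, \dotsc, x_T^{\refidx_T})$ holds by construction, since $x_{t+1}^{\refidx_{t+1}} \coloneqq x_{t+1}$. The generated variables are drawn in the following order:
\begin{itemize}
  \item $x_{1:T} \sim \target$, with density~\eqref{eq:target-previous};
  \item $\refidx_1$ with probability $1/N$, and $x_1^n \sim \transdens_1$ for $n \neq \refidx_1$;
  \item for each $t \in [T-1]$, $c = \anc_t^{\refidx_{t+1}}$ with probability $\kernel(\refidx_t, c)$, where $\kernel$ is the rule's kernel built from $\bsprob_t(\cdot \mid x_{t+1})$;
  \item the pair $(\refidx_{t+1}, \anc_t)$ with probability $\resdist_t(\anc_t)/(N \weight_t^c)$, using the normalizing constant from Definition~\ref{def:resampling};
  \item $x_{t+1}^n \sim \transdens_{t+1}(x_t^{\anc_t^n}, \cdot)$ for $n \neq \refidx_{t+1}$.
\end{itemize}
Each conditional law depends only on variables already generated, including the weights $\weightvec_t$, which depend on $\anc_{t-1}$. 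The product of these factors is therefore the joint density.

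The key step is to use reversibility at each $t$, since $\kernel$ is the identity, the Gibbs kernel, or $\forced_{\bsprob_t(\cdot \mid x_{t+1})}$. This gives
\[
  \kernel(\refidx_t, c) = \frac{\bsprob_t(c \mid x_{t+1}) \, \kernel(c, \refidx_t)}{\bsprob_t(\refidx_t \mid x_{t+1})}.
\]
Because $x_{t+1}^{\refidx_{t+1}} = x_{t+1}$, we have $\bsprob_t(\cdot \mid x_{t+1}) = \bsprob_t(\cdot \mid \refidx_{t+1})$. Hence $\kernel(c, \refidx_t)$ is exactly the factor $\kernel_{t,\refidx_{t+1},\refidx_{t+1}}(\anc_t^{\refidx_{t+1}}, \refidx_t)$ of~\eqref{eq:extended-previous}.

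Next, substitute~\eqref{eq:bs-previous} into the remaining ratio. The normalizing sums of $\bsprob_t(c \mid x_{t+1})$ and $\bsprob_t(\refidx_t \mid x_{t+1})$ cancel, and so does $\weight_t^c$ against the $1/(N \weight_t^c)$ factor. What remains is
\[
  \frac{\transdens_{t+1}(x_t^c, x_{t+1}) \pot_{t+1}(x_t^c, x_{t+1})}{N \weight_t^{\refidx_t} \transdens_{t+1}(x_t^{\refidx_t}, x_{t+1}) \pot_{t+1}(x_t^{\refidx_t}, x_{t+1})}.
\]
The denominator's $\transdens_{t+1}\pot_{t+1}$ cancels the corresponding factor of $\target$, because $x_t = x_t^{\refidx_t}$. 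The numerator's $\transdens_{t+1}(x_t^{\anc_t^{\refidx_{t+1}}}, x_{t+1}^{\refidx_{t+1}})$ completes the product over $n$ of the transition densities at time $t+1$. The numerator's potential is rewritten as $\weight_{t+1}^{\refidx_{t+1}} \sum_i \pot_{t+1}(x_t^{\anc_t^i}, x_{t+1}^i)$, and in the same way $\pot_1(x_1^{\refidx_1}) = \weight_1^{\refidx_1} \sum_i \pot_1(x_1^i)$.

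The weights then telescope. Each $\weight_{t+1}^{\refidx_{t+1}}$ cancels the $1/\weight_{t+1}^{\refidx_{t+1}}$ produced at step $t+1$, and $\weight_1^{\refidx_1}$ cancels the one produced at $t = 1$, leaving only $\weight_T^{\refidx_T}$. The factors $1/N$ collect to $N^{-T}$, and $1/\normconst$ comes from $\target$. This yields~\eqref{eq:extended-previous}, still restricted to the set where $\anc_t^{\refidx_{t+1}} = c$; on that set $c$ is a function of $(\anc, \refidx_{1:T})$, so the restriction is harmless.

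The main obstacle is justifying the division by $\bsprob_t(\refidx_t \mid x_{t+1})$, particularly for the identity kernels, where reversibility is trivial but the draw itself does not require positivity. This needs $\transdens_{t+1}(x_t, x_{t+1}) > 0$, which holds $\target$-almost surely. Care is also needed to check that every cancelled quantity is evaluated at the same, already-generated weights $\weightvec_t$, since these depend on $\anc_{t-1}$.
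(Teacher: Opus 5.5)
Your proposal is correct and takes essentially the same route as the paper. You write the sequential density, use the reversibility of $\kernel_{t,\refidx_{t+1},\refidx_{t+1}}$ with respect to $\bsprob_t(\cdot \mid \refidx_{t+1})$ to exchange $\kernel(\refidx_t, c_t)$ for $\kernel(c_t, \refidx_t)$, cancel $\weight_t^{c_t}$ against the normalizer $N \weight_t^{c_t}$ of the conditional resampling, complete the products over $n$, and telescope the weights to $\weight_T^{\refidx_T}$. The only difference is cosmetic: the paper applies reversibility in multiplied form with the unnormalized backward weights, dividing only by $\weight_t^{\refidx_t} > 0$ instead of by $\bsprob_t(\refidx_t \mid x_{t+1})$, but both versions rest on the same almost sure positivity of $\transdens_{t+1}(x_t, x_{t+1})$.
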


\begin{proof}
For $t \in [T-1]$, let $c_t \coloneqq \anc_t^{\refidx_{t+1}}$, the index $c$ drawn in \algstep{alg:previous-state:forward} at time $t$.
Because $x_{t+1}^{\refidx_{t+1}} = x_{t+1}$, it is drawn from $\kernel_{t,\refidx_{t+1},\refidx_{t+1}}(\refidx_t, \cdot)$.
Steps~\ref{alg:previous-state:initial} and~\ref{alg:previous-state:forward} of Algorithm~\ref{alg:previous-state} therefore draw $\refidx_1$ with probability $1 / N$, and, at each time $t \in [T-1]$, $c_t$ with probability $\kernel_{t,\refidx_{t+1},\refidx_{t+1}}(\refidx_t, c_t)$ and $(\refidx_{t+1}, \anc_t)$ with probability $\resdist_t(\anc_t) / (N \weight_t^{c_t})$.
With the density~\eqref{eq:target-previous} of $x_{1:T} = (x_1^{\refidx_1}, \dotsc, x_T^{\refidx_T})$ and the densities of the other particles, $(\pos, \anc, \refidx_{1:T})$ has the density
\begin{align*}
  & \target(x_1^{\refidx_1}, \dotsc, x_T^{\refidx_T}) \frac{1}{N} \prod_{n \neq \refidx_1} \transdens_1(x_1^n) \prod_{t=1}^{T-1} \kernel_{t,\refidx_{t+1},\refidx_{t+1}}(\refidx_t, c_t) \\
  & \times \frac{\resdist_t(\anc_t)}{N \weight_t^{c_t}} \prod_{n \neq \refidx_{t+1}} \transdens_{t+1}(x_t^{\anc_t^n}, x_{t+1}^n).
\end{align*}
Let $t \in [T-1]$ and $x' \coloneqq x_{t+1}^{\refidx_{t+1}}$.
By~\eqref{eq:bs-previous}, for $i \in [N]$,
\begin{align*}
  & \weight_t^i \transdens_{t+1}(x_t^i, x') \pot_{t+1}(x_t^i, x') \\
  & \quad = \bsprob_t(i \mid \refidx_{t+1}) \sum_{k=1}^N \weight_t^k \transdens_{t+1}(x_t^k, x') \pot_{t+1}(x_t^k, x').
\end{align*}
Because $\kernel_{t,\refidx_{t+1},\refidx_{t+1}}$ is reversible with respect to $\bsprob_t(\cdot \mid \refidx_{t+1})$, it follows that
\begin{align*}
  & \weight_t^{\refidx_t} \transdens_{t+1}(x_t^{\refidx_t}, x') \pot_{t+1}(x_t^{\refidx_t}, x') \kernel_{t,\refidx_{t+1},\refidx_{t+1}}(\refidx_t, c_t) \\
  & \quad = \weight_t^{c_t} \transdens_{t+1}(x_t^{c_t}, x') \pot_{t+1}(x_t^{c_t}, x') \kernel_{t,\refidx_{t+1},\refidx_{t+1}}(c_t, \refidx_t).
\end{align*}
The left-hand side is $\weight_t^{\refidx_t}$ times the factor of~\eqref{eq:target-previous} at time $t + 1$ times the factor $\kernel_{t,\refidx_{t+1},\refidx_{t+1}}(\refidx_t, c_t)$ of the density above.
On the right-hand side, $\transdens_{t+1}(x_t^{c_t}, x')$ completes the product over $n \neq \refidx_{t+1}$, because $c_t = \anc_t^{\refidx_{t+1}}$.
The factor $\weight_t^{c_t}$ cancels with the denominator of $\resdist_t(\anc_t) / (N \weight_t^{c_t})$.
Similarly, $\transdens_1(x_1^{\refidx_1})$ completes the product over $n \neq \refidx_1$.
The density of $(\pos, \anc, \refidx_{1:T})$ is therefore
\begin{align*}
  & \frac{\pot_1(x_1^{\refidx_1})}{\normconst N^T} \prod_{n=1}^N \transdens_1(x_1^n) \prod_{t=1}^{T-1} \frac{\pot_{t+1}(x_t^{c_t}, x_{t+1}^{\refidx_{t+1}})}{\weight_t^{\refidx_t}} \resdist_t(\anc_t) \\
  & \times \prod_{n=1}^N \transdens_{t+1}(x_t^{\anc_t^n}, x_{t+1}^n) \kernel_{t,\refidx_{t+1},\refidx_{t+1}}(c_t, \refidx_t).
\end{align*}
By the definition of the weights, and because $c_t = \anc_t^{\refidx_{t+1}}$,
\begin{align*}
  \pot_1(x_1^{\refidx_1}) & = \weight_1^{\refidx_1} \sum_{i=1}^N \pot_1(x_1^i), \\
  \pot_{t+1}(x_t^{c_t}, x_{t+1}^{\refidx_{t+1}}) & = \weight_{t+1}^{\refidx_{t+1}} \sum_{i=1}^N \pot_{t+1}(x_t^{\anc_t^i}, x_{t+1}^i).
\end{align*}
The product of these potentials divided by $\weight_1^{\refidx_1} \dotsm \weight_{T-1}^{\refidx_{T-1}}$ is therefore $\weight_T^{\refidx_T}$ times the product of the sums, which gives~\eqref{eq:extended-previous}.
\end{proof}

The density~\eqref{eq:extended-previous} is the product of a function of $(\pos, \anc)$ and of
\begin{align*}
  \weight_T^{\refidx_T} \prod_{t=1}^{T-1} \kernel_{t,\refidx_{t+1},\refidx_{t+1}}(\anc_t^{\refidx_{t+1}}, \refidx_t),
\end{align*}
which is a probability distribution of $\refidx_{1:T}$.
Hence the marginal density of $(\pos, \anc)$ does not depend on the selection rule.
Conditionally on $(\pos, \anc)$, $\refidx_T \sim \weightvec_T$ and, for $t = T - 1, \dotsc, 1$, given also $\refidx_{t+1:T}$, $\refidx_t \sim \kernel_{t,\refidx_{t+1},\refidx_{t+1}}(\anc_t^{\refidx_{t+1}}, \cdot)$.
Conversely, by Lemma~\ref{lem:extended-previous}, if $(\pos, \anc, \refidx_{1:T})$ has the density~\eqref{eq:extended-previous}, then $(x_1^{\refidx_1}, \dotsc, x_T^{\refidx_T}) \sim \target$.
With multinomial resampling, potential functions of the current state, and the identity kernels, \eqref{eq:extended-previous} is the extended distribution~\eqref{eq:extended}, with $\refidx_t = \anc_t^{\refidx_{t+1}}$ for $t \in [T-1]$, as in the first part of Lemma~\ref{lem:extended}.
With the identity and the Gibbs kernels, it is the extended distribution of \citet[Lemma~1]{finke2026resampling} for CSMC with ancestral tracing and with ancestor sampling, respectively.

\begin{proposition}\label{prop:previous-state}
For any kernels $\kernel_T$ and $\kernel_{t,j,j'}$ as in Algorithm~\ref{alg:previous-state}, the Markov kernel defined by Algorithm~\ref{alg:previous-state} is reversible with respect to $\target$.
\end{proposition}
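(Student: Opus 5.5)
The plan mirrors the proof of Proposition~\ref{prop:reversible}, with Lemma~\ref{lem:extended-previous} in place of Lemma~\ref{lem:extended}. Let $x_{1:T} \sim \target$ and run Algorithm~\ref{alg:previous-state}, with $c$ drawn from the Gibbs kernels in \algstep{alg:previous-state:forward}. By Lemma~\ref{lem:extended-previous}, $(\pos, \anc, \refidx_{1:T})$ has the density~\eqref{eq:extended-previous} with $\kernel_{t,j,j}(i, \cdot) = \bsprob_t(\cdot \mid j)$. By the factorization noted after that lemma, this gives the conditional law of $\refidx_{1:T}$ given $(\pos, \anc)$:
\begin{itemize}
\item $\refidx_T \sim \weightvec_T$;
\item for $t = T-1, \dotsc, 1$, given also $\refidx_{t+1:T}$, $\refidx_t \sim \bsprob_t(\cdot \mid \refidx_{t+1})$.
\end{itemize}
For the Gibbs kernel, this no longer depends on $\anc_t^{\refidx_{t+1}}$.

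Steps~\ref{alg:csmc:final} and~\ref{alg:csmc:backward} then draw $\outidx_T \sim \kernel_T(\refidx_T, \cdot)$ and $\outidx_t \sim \kernel_{t,\refidx_{t+1},\outidx_{t+1}}(\refidx_t, \cdot)$. Here the kernels and the probabilities $\bsprob_t$ are functions of $(\pos, \anc)$ only. So, conditionally on $(\pos, \anc)$, the pairs $(\refidx_t, \outidx_t)$ form a backward Markov chain on $[N]^2$. Its initial probability is $\weight_T^i \kernel_T(i, i')$ and its transitions from $(j, j')$ to $(i, i')$ are $\bsprob_t(i \mid j) \kernel_{t,j,j'}(i, i')$. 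This is the analogue of Lemma~\ref{lem:pair}, now conditional on $(\pos, \anc)$ rather than on $\pos$. The paper's requirement that the kernels satisfy~\eqref{eq:symmetry} with the probabilities~\eqref{eq:bs-previous}, for every fixed $(\pos, \anc)$, has two consequences:
\begin{itemize}
\item the initial law is symmetric in $(i, i')$, by the reversibility of $\kernel_T$;
\item each transition probability is unchanged under the swap $(i, j) \leftrightarrow (i', j')$.
\end{itemize}
Hence, conditionally on $(\pos, \anc)$, the joint law of $(\refidx_{1:T}, \outidx_{1:T})$ is exchangeable. Since $x_t = x_t^{\refidx_t}$ and $\tilde{x}_t = x_t^{\outidx_t}$, integrating over $(\pos, \anc)$ shows that $(x_{1:T}, \tilde{x}_{1:T})$ and $(\tilde{x}_{1:T}, x_{1:T})$ have the same law, which is reversibility.

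The main obstacle is that the statement concerns Algorithm~\ref{alg:previous-state} as written, with $c \sim \bsprob_t(\cdot \mid x_{t+1})$. I would therefore handle only this version, where Lemma~\ref{lem:extended-previous} with the Gibbs kernels gives a clean conditional law of $\refidx_t$ free of $\anc_t^{\refidx_{t+1}}$. One point needs care: the ancestor indices and weights at later times depend on $\anc_t$, so the conditioning must be on all of $(\pos, \anc)$ at once. There is no issue with this, because the factor $\weight_T^{\refidx_T}\prod_t \bsprob_t(\refidx_t \mid \refidx_{t+1})$ is a normalized distribution of $\refidx_{1:T}$ for each fixed $(\pos, \anc)$. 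A second, routine check is that the backward kernels in step~\ref{alg:csmc:backward} use $\bsprob_t$ computed from $(\pos, \anc)$, which is legitimate since the weights $\weight_t$ depend on $\anc_{t-1}$. Positivity of the denominators in~\eqref{eq:bs-previous} holds almost surely, as noted before the algorithm.
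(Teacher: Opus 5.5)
Your proposal is correct and follows the paper's proof essentially step for step. Lemma~\ref{lem:extended-previous} with the Gibbs kernels gives $\refidx_T \sim \weightvec_T$ and $\refidx_t \sim \bsprob_t(\cdot \mid \refidx_{t+1})$ conditionally on $(\pos, \anc)$, and the backward kernels depend only on $(\pos, \anc)$, so the first two sentences of Lemma~\ref{lem:pair} hold conditionally on $(\pos, \anc)$ and the symmetry argument of Proposition~\ref{prop:reversible} applies with $(\pos, \anc)$ in place of $\pos$.
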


\begin{proof}
Let $x_{1:T} \sim \target$ and run Algorithm~\ref{alg:previous-state}.
With the Gibbs kernels, $\kernel_{t,\refidx_{t+1},\refidx_{t+1}}(\anc_t^{\refidx_{t+1}}, \refidx_t) = \bsprob_t(\refidx_t \mid \refidx_{t+1})$.
By Lemma~\ref{lem:extended-previous}, conditionally on $(\pos, \anc)$, $\refidx_T \sim \weightvec_T$ and, for $t = T - 1, \dotsc, 1$, given also $\refidx_{t+1:T}$, $\refidx_t \sim \bsprob_t(\cdot \mid \refidx_{t+1})$.
Let $t \in [T-1]$.
The kernels of steps~\ref{alg:csmc:final} and~\ref{alg:csmc:backward} of Algorithm~\ref{alg:csmc} depend on $(\pos, \anc)$ only, so that, at the times $s > t$, they draw $\outidx_s$ given $(\pos, \anc)$, $\refidx_{s:T}$, and $\outidx_{s+1:T}$.
Conditionally on $(\pos, \anc)$, $\refidx_{t+1:T}$, and $\outidx_{t+1:T}$, therefore, $\refidx_t \sim \bsprob_t(\cdot \mid \refidx_{t+1})$ and, given $\refidx_t$, $\outidx_t \sim \kernel_{t,\refidx_{t+1},\outidx_{t+1}}(\refidx_t, \cdot)$.
These are the first two sentences of Lemma~\ref{lem:pair}, conditionally on $(\pos, \anc)$ instead of $\pos$.
The proof of Proposition~\ref{prop:reversible} then applies with $(\pos, \anc)$ in place of $\pos$.
\end{proof}

\begin{proposition}\label{prop:orderings}
Let $\tilde{x}_{1:T}$ be the output of Algorithm~\ref{alg:previous-state} started from $x_{1:T} \sim \target$.
Theorem~\ref{thm:ordering}, Corollary~\ref{cor:pairs}, and Propositions~\ref{prop:washout} and~\ref{prop:gain} hold with the probabilities~\eqref{eq:bs-previous}, with the distribution~\eqref{eq:extended-previous} in place of $\exttarget$, and, in Proposition~\ref{prop:washout}, with $\transdens_{s+1}(x, \cdot) \pot_{s+1}(x, \cdot) \geq \mixconst \transdens_{s+1}(x', \cdot) \pot_{s+1}(x', \cdot)$ in place of the condition on $\transdens_{s+1}$.
\end{proposition}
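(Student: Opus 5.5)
The plan is to observe that every proof in Sections~\ref{sec:update-rates} and~\ref{sec:gains} uses only three ingredients: the backward Markov structure of Lemma~\ref{lem:pair} for the pairs $(\refidx_t, \outidx_t)$ given the particle positions; the fact that the law of the conditioning variables does not depend on the selection rule; and the algebraic properties of the kernels in Lemmas~\ref{lem:forced} and~\ref{lem:coupling}. Each will be re-established for Algorithm~\ref{alg:previous-state}, conditionally on $(\pos, \anc)$ rather than $\pos$.

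\textbf{Conditional backward chain.} In the proof of Proposition~\ref{prop:previous-state}, Lemma~\ref{lem:extended-previous} with the Gibbs kernels already gives the analogue of Lemma~\ref{lem:pair}. Conditionally on $(\pos, \anc)$, we have $\refidx_T \sim \weightvec_T$ and $\outidx_T \sim \kernel_T(\refidx_T, \cdot)$. For $t < T$, we have $\refidx_t \sim \bsprob_t(\cdot \mid \refidx_{t+1})$ and then $\outidx_t \sim \kernel_{t,\refidx_{t+1},\outidx_{t+1}}(\refidx_t, \cdot)$, with the probabilities~\eqref{eq:bs-previous}.

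\textbf{Rule-independent conditioning law.} By the remark after Lemma~\ref{lem:extended-previous}, the marginal of $(\pos, \anc)$ under~\eqref{eq:extended-previous} does not depend on the kernels. The main subtlety is to fix this forward pass once, with ancestor sampling via the Gibbs kernel, for all rules. Comparisons are then made on a common distribution of $(\pos, \anc)$, exactly as $\pos \sim \exttarget$ was common before. With this in place, the recursion~\eqref{eq:qt} holds verbatim with $\bbE_{\pos}$ replaced by conditional expectation given $(\pos, \anc)$. The signs of $\excess_{t,j,j'}$ come from Lemmas~\ref{lem:forced} and~\ref{lem:coupling}, which are pure statements about probability vectors. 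The induction~\eqref{eq:qorder} is therefore unchanged.

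\textbf{Transferring the four results.}
\begin{itemize}
\item Theorem~\ref{thm:ordering} follows after integrating over $(\pos, \anc)$. Integrability comes from stationarity, by Proposition~\ref{prop:previous-state}.
\item Corollary~\ref{cor:pairs} and Proposition~\ref{prop:gain} follow in the same way, because their proofs only use~\eqref{eq:pairs}, \eqref{eq:qt} and~\eqref{eq:qorder}. The distinctness hypotheses are now read under~\eqref{eq:extended-previous}.
\item Proposition~\ref{prop:washout} is a pathwise bound given the conditioning variables, so the same argument yields $\bbE(\prod_s \dobrushin_s)$ under~\eqref{eq:extended-previous}.
\end{itemize}

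\textbf{Mixing bound.} For the last claim of Proposition~\ref{prop:washout}, the new mixing condition on $\transdens_{s+1}\pot_{s+1}$ gives, for all $i, j$,
\begin{equation*}
\transdens_{s+1}(x_s^i, x_{s+1}^j)\pot_{s+1}(x_s^i, x_{s+1}^j) \geq \mixconst \max_k \transdens_{s+1}(x_s^k, x_{s+1}^j)\pot_{s+1}(x_s^k, x_{s+1}^j).
\end{equation*}
Hence $\bsprob_s(i \mid j) \geq \mixconst \weight_s^i$ and $\dobrushin_s \leq 1 - \mixconst$. The weights $\weight_s$ here may depend on $\anc_{s-1}$, but they are fixed given the conditioning variables, so this causes no difficulty.

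I expect the only real obstacle to be the second step: confirming that the comparison is meant with the Gibbs-kernel forward pass shared by all rules. If the variant forward passes were used instead, the law of $(\pos,\anc)$ would still coincide, by the same remark, so the comparison goes through either way.
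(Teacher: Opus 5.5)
Your proposal is correct and takes essentially the same route as the paper: you condition on $(\pos,\anc)$ instead of $\pos$, take the conditional backward chain from the proof of Proposition~\ref{prop:previous-state}, use Proposition~\ref{prop:previous-state} in place of Proposition~\ref{prop:reversible} for integrability, note that the common Gibbs forward pass makes the law of $(\pos,\anc)$ independent of the selection rule, and redo the mixing bound with $\transdens_{s+1}\pot_{s+1}$. Only your closing aside is wrong as stated: with the variant forward passes the marginal of $(\pos,\anc)$ does coincide, but $\refidx_t$ given $(\pos,\anc,\refidx_{t+1:T})$ then has law $\kernel_{t,\refidx_{t+1},\refidx_{t+1}}(\anc_t^{\refidx_{t+1}},\cdot)$ rather than $\bsprob_t(\cdot \mid \refidx_{t+1})$, so~\eqref{eq:qt} need not hold conditionally on $(\pos,\anc)$; this does not affect the proposition, because Algorithm~\ref{alg:previous-state} uses the Gibbs forward pass.
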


\begin{proof}
Condition on $(\pos, \anc)$ instead of $\pos$ in Sections~\ref{sec:update-rates} and~\ref{sec:gains} and in the proofs of these results in Section~\ref{sec:proofs}, with $\bbP_{\pos,\anc}$ and $\bbE_{\pos,\anc}$ in place of $\bbP_{\pos}$ and $\bbE_{\pos}$.
The proofs use the first two sentences of Lemma~\ref{lem:pair}, which hold conditionally on $(\pos, \anc)$ by the proof of Proposition~\ref{prop:previous-state}.
They also use Proposition~\ref{prop:reversible}, which is replaced by Proposition~\ref{prop:previous-state}, and the fact that the distribution of the conditioning variables does not depend on the selection rule.
This holds for $(\pos, \anc)$, because steps~\ref{alg:previous-state:initial} and~\ref{alg:previous-state:forward} of Algorithm~\ref{alg:previous-state} do not depend on the kernels.
In the proof of Proposition~\ref{prop:washout}, the condition on $\transdens_{s+1} \pot_{s+1}$ gives $\bsprob_s(i \mid j) \geq \mixconst \weight_s^i$ by~\eqref{eq:bs-previous}, as before.
\end{proof}

\begin{proposition}\label{prop:reduction}
Assume multinomial resampling, and that $\pot_t(x_{t-1}, x_t)$ does not depend on $x_{t-1}$ for $2 \leq t \leq T$.
Then, for $x_{1:T} \sim \target$ and any kernels as in Algorithm~\ref{alg:csmc}, the reference and the output have the same joint distribution under Algorithm~\ref{alg:previous-state} as under Algorithm~\ref{alg:csmc}.
\end{proposition}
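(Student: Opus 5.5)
The idea is to show that, under the stated assumptions, both algorithms produce, conditionally on the particle positions $\pos$, the same backward Markov chain of pairs $(\refidx_t, \outidx_t)$ described by Lemma~\ref{lem:pair}, and that $\pos$ has the same law under both. Since the reference is $x_t = x_t^{\refidx_t}$ and the output is $\tilde{x}_t = x_t^{\outidx_t}$ in both cases, equality of the joint law of $(\pos, \refidx_{1:T}, \outidx_{1:T})$ gives the claim.

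\textbf{Step 1: the backward-sampling probabilities agree.} With $\pot_{t+1}$ depending on the current state only, the factor $\pot_{t+1}(x')$ cancels in~\eqref{eq:bs-previous}, so $\bsprob_t(i \mid j)$ reduces to~\eqref{eq:bs}. The weights $\weightvec_t$ no longer depend on the ancestor indices. Hence the kernels $\kernel_T$ and $\kernel_{t,j,j'}$ are the same functions of $\pos$ in both algorithms.

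\textbf{Step 2: the law of $(\pos, \refidx_{1:T})$ under Algorithm~\ref{alg:previous-state}.} Here $c$ is drawn from $\bsprob_t(\cdot \mid x_{t+1})$, the Gibbs kernel, so Lemma~\ref{lem:extended-previous} applies with the \BS{} kernels. Its density~\eqref{eq:extended-previous} with multinomial $\resdist_t$ becomes
\begin{equation*}
\frac{\weight_T^{\refidx_T}}{\normconst N^T}\prod_t\Bigl\{\sum_i\pot_t(x_t^i)\Bigr\}\prod_n\transdens_1(x_1^n)\prod_{t\ge2}\prod_n\weight_{t-1}^{\anc_{t-1}^n}\transdens_t(x_{t-1}^{\anc_{t-1}^n},x_t^n)\prod_{t=1}^{T-1}\bsprob_t(\refidx_t\mid\refidx_{t+1}).
\end{equation*}
This is $\exttarget(\pos,\anc,\refidx_T)$ times $\prod_t \bsprob_t(\refidx_t \mid \refidx_{t+1})$. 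Summing over $\anc$, the marginal of $\pos$ equals that under $\exttarget$. Conditionally on $\pos$ (and even on $(\pos,\anc)$), $\refidx_T \sim \weightvec_T$ and $\refidx_t \sim \bsprob_t(\cdot \mid \refidx_{t+1})$ backwards. Under Algorithm~\ref{alg:csmc}, Lemma~\ref{lem:extended} gives the same marginal of $\pos$. Lemma~\ref{lem:pair} gives the same conditional law of $\refidx_{1:T}$.

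\textbf{Step 3: the output indices.} In both algorithms, $\outidx_{1:T}$ is produced by steps~\ref{alg:csmc:final}--\ref{alg:csmc:backward} from $\refidx_{1:T}$ and kernels depending on $\pos$ only. Under Algorithm~\ref{alg:previous-state}, the reasoning in the proof of Proposition~\ref{prop:previous-state}, conditioning on $(\pos,\anc)$, gives the pair chain of Lemma~\ref{lem:pair}. Its transitions $\bsprob_t(i\mid j)\kernel_{t,j,j'}(i,i')$ do not involve $\anc$, so integrating $\anc$ out leaves the same chain conditionally on $\pos$ alone. This matches Lemma~\ref{lem:pair} for Algorithm~\ref{alg:csmc}.

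The main obstacle is Step 2: carefully checking that~\eqref{eq:extended-previous}, with its extra factor $\prod_t\bsprob_t(\refidx_t\mid\refidx_{t+1})$ and with $\refidx_t$ no longer equal to $\anc_t^{\refidx_{t+1}}$, has the same $\pos$-marginal as $\exttarget$. The reassurance is that this extra factor is a probability distribution of $\refidx_{1:T-1}$ given $(\pos,\refidx_T)$, so it sums to one, and the remaining factors match term by term.
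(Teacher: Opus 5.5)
Your proposal is correct and follows the paper's proof. The kernels become functions of $\pos$ only, the laws of $(\pos,\refidx_{1:T})$ agree by Lemma~\ref{lem:extended-previous} and Lemma~\ref{lem:pair}, and the output indices are then drawn with the same kernels given $(\pos,\refidx_{1:T})$, so your pair-chain argument in Step~3 is correct but not needed. The one difference is in how you match the $\pos$-marginals: you evaluate \eqref{eq:extended-previous} directly with the Gibbs kernels and recognize $\exttarget(\pos,\anc,\refidx_T)\prod_t\bsprob_t(\refidx_t\mid\refidx_{t+1})$, which is the route recorded in Section~\ref{sec:formalization-reduction}, whereas the paper's text observes that the identity-kernel forward pass of Algorithm~\ref{alg:previous-state} coincides with that of Algorithm~\ref{alg:csmc} and uses that the $(\pos,\anc)$-marginal of \eqref{eq:extended-previous} does not depend on the kernels.
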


\begin{proof}
Under these assumptions, $\weightvec_T$ and $\bsprob_t$ are functions of $\pos$, and so are the kernels.
With multinomial resampling and the identity kernels, steps~\ref{alg:previous-state:initial} and~\ref{alg:previous-state:forward} of Algorithm~\ref{alg:previous-state} draw $\refidx_{t+1}$ uniformly on $[N]$, independently of the other variables, so that they generate $(\pos, \anc, \refidx_{1:T})$ with the same distribution as steps~\ref{alg:csmc:indices} and~\ref{alg:csmc:forward} of Algorithm~\ref{alg:csmc}.
By Lemma~\ref{lem:extended-previous}, the marginal density of $(\pos, \anc)$ is therefore the same under both algorithms.
Under Algorithm~\ref{alg:previous-state}, conditionally on $(\pos, \anc)$, and hence on $\pos$, $\refidx_T \sim \weightvec_T$ and $\refidx_t \sim \bsprob_t(\cdot \mid \refidx_{t+1})$ given $\refidx_{t+1:T}$, by Lemma~\ref{lem:extended-previous}.
Under Algorithm~\ref{alg:csmc}, the same holds conditionally on $\pos$, by Lemma~\ref{lem:pair}.
The variables $(\pos, \refidx_{1:T})$ thus have the same distribution under both algorithms.
So do $(\pos, \refidx_{1:T}, \outidx_{1:T})$, because both algorithms draw $\outidx_{1:T}$ given $\pos$ and $\refidx_{1:T}$ with the same kernels.
The reference $(x_1^{\refidx_1}, \dotsc, x_T^{\refidx_T})$ and the output $(x_1^{\outidx_1}, \dotsc, x_T^{\outidx_T})$ are functions of these variables.
\end{proof}

\subsection{Backward passes from the ancestors}

\begin{algorithm}[CSMC with a backward pass from the ancestors]\label{alg:ancestors}
Given the reference trajectory $x_{1:T}$ and the kernels $\kernel_T$ and $\kernel_{t,j,j}$ of \AT{}, \FMAT{}, \BS{}, \FMBS{}, or \MBS{}, with the probabilities~\eqref{eq:bs-previous}:
\begin{algsteps}
  \item\label{alg:ancestors:forward} Run steps~\ref{alg:previous-state:initial} and~\ref{alg:previous-state:forward} of Algorithm~\ref{alg:previous-state}, with $c \coloneqq \refidx_t$ in \algstep{alg:previous-state:forward}.
  \item\label{alg:ancestors:final} Draw $\outidx_T \sim \kernel_T(\refidx_T, \cdot)$.
  \item\label{alg:ancestors:backward} For $t = T - 1, \dotsc, 1$, draw $\outidx_t$ from the kernel $\kernel_{t,\outidx_{t+1},\outidx_{t+1}}$ started from the ancestor $\anc_t^{\outidx_{t+1}}$ of the particle $\outidx_{t+1}$.
  \item\label{alg:ancestors:return} Return $\tilde{x}_{1:T} \coloneqq (x_1^{\outidx_1}, \dotsc, x_T^{\outidx_T})$.
\end{algsteps}
\end{algorithm}

\algstep{alg:ancestors:forward} is the forward pass of CSMC with the resampling schemes $\resdist_t$~\citep[Algorithm~27]{finke2026resampling}.
With multinomial resampling and potential functions of the current state, Algorithm~\ref{alg:ancestors} is the usual implementation of Section~\ref{sec:method}, with the forward pass of Algorithm~\ref{alg:csmc}.

\begin{proposition}\label{prop:ancestors}
The Markov kernel defined by Algorithm~\ref{alg:ancestors} leaves $\target$ invariant.
With the kernels of \AT{} and \FMAT{}, it is reversible with respect to $\target$.
\end{proposition}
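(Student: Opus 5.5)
The plan is to compare Algorithm~\ref{alg:ancestors} with Algorithm~\ref{alg:previous-state}, run with the Gibbs (ancestor-sampling) forward pass and the same rule's kernels, and show that the two produce the same joint law of the reference and the output. Invariance then follows from Proposition~\ref{prop:previous-state}, and reversibility for \AT{} and \FMAT{} needs a separate symmetry check.

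\emph{Forward pass.} With $c \coloneqq \refidx_t$, \algstep{alg:ancestors:forward} is \algstep{alg:previous-state:forward} with $c$ drawn from the identity kernel. By Lemma~\ref{lem:extended-previous} with the identity kernels, $(\pos, \anc, \refidx_{1:T})$ then has density~\eqref{eq:extended-previous} with last factor $\prod_t \mathbf{1}\{\refidx_t = \anc_t^{\refidx_{t+1}}\}$. The marginal of $(\pos, \anc)$ is the same for every kernel choice. Conditionally on $(\pos, \anc)$, $\refidx_T \sim \weightvec_T$, and the remaining reference indices are traced back deterministically.

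\emph{Backward pass.} Given $(\pos, \anc)$, Algorithm~\ref{alg:ancestors} draws $\outidx_T \sim \kernel_T(\refidx_T, \cdot)$ and then $\outidx_t \sim \kernel_{t,\outidx_{t+1},\outidx_{t+1}}(\anc_t^{\outidx_{t+1}}, \cdot)$. The step I expect to be the main obstacle is that, conditionally on $(\pos, \anc)$, the ancestor $\anc_t^{\outidx_{t+1}}$ is not distributed according to $\bsprob_t(\cdot \mid \outidx_{t+1})$. Indeed, Lemma~\ref{lem:extended} fails here, since the $\anc_t^n$ need be neither independent nor Gibbs-distributed. I would therefore not condition on $\anc$. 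Instead I would prove directly that the law of $\tilde{x}_{1:T}$ is $\target$ when $x_{1:T} \sim \target$.

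To do so, I would redo the density computation of Lemma~\ref{lem:extended-previous} for the output indices. The output path $\outidx_{1:T}$, together with the ancestor jumps $c'_t \coloneqq \anc_t^{\outidx_{t+1}}$, is generated from the same $(\pos, \anc)$ exactly as the reference would be under Lemma~\ref{lem:extended-previous}. The reference uses identity kernels, so $\anc_t^{\refidx_{t+1}} = \refidx_t$; the output uses kernel $\kernel_{t,\outidx_{t+1},\outidx_{t+1}}$ from $c'_t$ to $\outidx_t$. Multiplying the density of $(\pos, \anc)$ by $\weight_T^{\outidx_T}$ and by these kernel factors gives a density of the form~\eqref{eq:extended-previous}. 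The only difference is that the final factor reads $\kernel_{t,\outidx_{t+1},\outidx_{t+1}}(\anc_t^{\outidx_{t+1}}, \outidx_t)$, with the output in the place of the reference. Two facts are needed for this. First, $\kernel_T$ leaves $\weightvec_T$ invariant, so $\sum_{\refidx_T} \weight_T^{\refidx_T} \kernel_T(\refidx_T, \outidx_T) = \weight_T^{\outidx_T}$. Second, the marginal of $(\pos, \anc)$ does not depend on the kernel.

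So $(\pos, \anc, \outidx_{1:T})$ has density~\eqref{eq:extended-previous} with $\outidx$ in place of $\refidx$. By the converse remark after Lemma~\ref{lem:extended-previous}, $(x_1^{\outidx_1}, \dotsc, x_T^{\outidx_T}) \sim \target$. The kernels $\kernel_{t,j,j}$ are functions of $(\pos, \anc)$ that are reversible with respect to $\bsprob_t(\cdot \mid j)$, which is what Lemma~\ref{lem:extended-previous} requires. This proves invariance for all five rules.

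For \AT{} and \FMAT{}, the backward pass is $\outidx_t = \anc_t^{\outidx_{t+1}}$. Given $(\pos, \anc)$, the joint law of $(\refidx_{1:T}, \outidx_{1:T})$ is therefore $\weight_T^{\refidx_T} \kernel_T(\refidx_T, \outidx_T)$ times indicators that both paths are genealogies. This expression is symmetric under exchanging the two paths, by the reversibility of $\kernel_T$, which yields reversibility with respect to $\target$.
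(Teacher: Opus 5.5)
Your argument is correct and is essentially the paper's proof. It uses Lemma~\ref{lem:extended-previous} with the identity kernels for $\refidx_T \sim \weightvec_T$ given $(\pos,\anc)$, then the invariance of $\kernel_T$ and the kernel-independence of the marginal of $(\pos,\anc)$ to give $(\pos,\anc,\outidx_{1:T})$ the density~\eqref{eq:extended-previous} and hence $\tilde{x}_{1:T} \sim \target$, and finally the symmetry of $\weight_T^{\refidx_T}\kernel_T(\refidx_T,\outidx_T)$ times the genealogy indicators for \AT{} and \FMAT{}. Drop the opening plan of matching the joint law of reference and output under Algorithm~\ref{alg:previous-state}, which you never use and which need not hold outside multinomial resampling with potentials of the current state. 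Also, despite your remark, your argument does condition on $(\pos,\anc)$; this is harmless, because only the conditional law of $\outidx_{1:T}$ given $(\pos,\anc)$ is needed, not that $\anc_t^{\outidx_{t+1}} \sim \bsprob_t(\cdot \mid \outidx_{t+1})$.
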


\begin{proof}
Let $x_{1:T} \sim \target$ and run Algorithm~\ref{alg:ancestors}.
By Lemma~\ref{lem:extended-previous} with the identity kernels, $\refidx_T \sim \weightvec_T$ conditionally on $(\pos, \anc)$.
Because $\kernel_T$ leaves $\weightvec_T$ invariant, $\outidx_T \sim \weightvec_T$ conditionally on $(\pos, \anc)$.
Because the marginal density of $(\pos, \anc)$ in~\eqref{eq:extended-previous} does not depend on the kernels, \algstep{alg:ancestors:backward} then gives $(\pos, \anc, \outidx_{1:T})$ the density~\eqref{eq:extended-previous}, with $\outidx_{1:T}$ in place of $\refidx_{1:T}$ and the kernels $\kernel_{t,j,j}$ of the selection rule.
By Lemma~\ref{lem:extended-previous}, $\tilde{x}_{1:T} \sim \target$.
With the identity kernels of \AT{} and \FMAT{}, $\outidx_t = \anc_t^{\outidx_{t+1}}$ and $\refidx_t = \anc_t^{\refidx_{t+1}}$ for $t \in [T-1]$.
Conditionally on $(\pos, \anc)$, $(\refidx_{1:T}, \outidx_{1:T})$ therefore has the probability $\weight_T^{\refidx_T} \kernel_T(\refidx_T, \outidx_T)$ if these identities hold, and zero otherwise.
By the reversibility of $\kernel_T$, this probability does not change when $\refidx_{1:T}$ and $\outidx_{1:T}$ are exchanged.
The reversibility then follows as in the proof of Proposition~\ref{prop:reversible}.
\end{proof}

The proof uses \algstep{alg:ancestors:forward} only through $\refidx_T \sim \weightvec_T$ conditionally on $(\pos, \anc)$, which holds for all the kernels of Lemma~\ref{lem:extended-previous}, so that the invariance also holds with ancestor sampling in \algstep{alg:ancestors:forward}.
By Lemma~\ref{lem:extended-previous}, the backward pass of Algorithm~\ref{alg:ancestors} and the forward pass of \algstep{alg:previous-state:forward} with the same kernels $\kernel_{t,j,j}$ both lead to the distribution~\eqref{eq:extended-previous}.
This is the duality of backward and ancestor sampling of \citet[Proposition~3.27]{finke2015extended}, here with forced moves.
\CMBS{} does not have such an implementation, because its kernels for $j \neq j'$ start from the index $\refidx_t$ of the reference.
Under Algorithm~\ref{alg:previous-state}, $\refidx_t \sim \bsprob_t(\cdot \mid \refidx_{t+1})$ conditionally on $(\pos, \anc)$ and $\refidx_{t+1:T}$, by Lemma~\ref{lem:extended-previous}, whereas, with $c \coloneqq \refidx_t$, $\refidx_t = \anc_t^{\refidx_{t+1}}$ is a function of $\anc$ and $\refidx_{t+1}$.

\section{Locally optimal proposals}\label{sec:locally-optimal}

This section reports the experiment of Section~\ref{sec:numerics} on the linear-Gaussian (LG) model with the particle filter which uses the locally optimal proposal~\citep{doucet2000sequential}, whose potential functions depend on the previous state.

\subsection{Particle filter with the locally optimal proposal}

The LG model is $x_1 \sim \mathcal{N}(0, \sigma^2 (1 - \rho^2)^{-1} I_d)$, $x_t \mid x_{t-1} \sim \mathcal{N}(\rho x_{t-1}, \sigma^2 I_d)$, and $y_t \mid x_t \sim \mathcal{N}(x_t, \sigma^2 I_d)$.
The particle filter with the locally optimal proposal is that of Section~\ref{sec:previous-state}, with multinomial resampling, for the model~\eqref{eq:target-previous} in which $\transdens_t(x_{t-1}, \cdot)$, $t \geq 2$, is the density of the conditional distribution of $x_t$ given $x_{t-1}$ and $y_t$,
\begin{align*}
  \mathcal{N}\Bigl(\frac{\rho x_{t-1} + y_t}{2}, \frac{\sigma^2}{2} I_d\Bigr),
\end{align*}
and $\pot_t(x_{t-1}, x_t)$ is the density of $y_t$ given $x_{t-1}$, which is the density of $\mathcal{N}(\rho x_{t-1}, 2 \sigma^2 I_d)$ at $y_t$, whereas $\transdens_1$ is the density of the conditional distribution of $x_1$ given $y_1$, $\mathcal{N}(y_1 / (2 - \rho^2), \sigma^2 (2 - \rho^2)^{-1} I_d)$, and $\pot_1$ is constant.
Because $\transdens_t(x_{t-1}, x_t) \pot_t(x_{t-1}, x_t)$ is the transition density of the LG model times the density of $y_t$ given $x_t$, while $\transdens_1(x_1) \pot_1(x_1)$ is proportional to the density of $x_1$ times that of $y_1$ given $x_1$, the target distribution~\eqref{eq:target-previous} is the smoothing distribution.
The backward-sampling probability $\bsprob_t(i \mid j)$ is therefore proportional to $\weight_t^i$ times the density of $\mathcal{N}(\rho x_t^i, \sigma^2 I_d)$ at $x_{t+1}^j$, as for the bootstrap filter.
The four selection rules use these probabilities with Algorithm~\ref{alg:previous-state}, whose forward pass draws the ancestor index of each state of the reference by ancestor sampling.
Their kernels start from the indices of the reference, as in \algstep{alg:csmc:backward}, because the ancestor $\anc_t^{\refidx_{t+1}}$ drawn by ancestor sampling need not be $\refidx_t$.
With multinomial resampling, the indices of the reference may be fixed, as in the experiment of Section~\ref{sec:numerics}.

\subsection{Dependence of the ancestor indices on the proposal}\label{sec:locally-optimal-ancestors}

The backward-sampling probabilities do not involve the proposal, but the conditional distribution of the ancestor indices given the particle positions does.
With $T = 2$, for example, the distribution~\eqref{eq:extended-previous} with the identity kernels shows that, given the particle positions and the final index $\refidx_2$, the ancestor indices $\anc_1^n$, $n \neq \refidx_2$, are independent, with probabilities proportional to $\transdens_2(x_1^i, x_2^n)$, $i \in [N]$, the density of the proposal of $x_2^n$ from $x_1^i$.
The backward-sampling probability $\bsprob_1(i \mid n)$ is instead proportional to the density of $\mathcal{N}(\rho x_1^i, \sigma^2 I_d)$ at $x_2^n$.
These probabilities differ unless the density of $y_2$ given $x_1 = x_1^i$ does not depend on $i$.
Moreover, the final index $\refidx_2$ given the particle positions does not have the distribution $\weightvec_2$, which depends on the ancestor indices.
The second part of Lemma~\ref{lem:extended} therefore fails, and so does Lemma~\ref{lem:pair} conditionally on the particle positions only.
Under Algorithm~\ref{alg:previous-state}, the first two sentences of Lemma~\ref{lem:pair} hold instead conditionally on the particle positions and the ancestor indices (proof of Proposition~\ref{prop:previous-state}).
Algorithm~\ref{alg:previous-state} is therefore reversible, and the four selection rules are ordered with it as in Theorem~\ref{thm:ordering} and Corollary~\ref{cor:pairs} (Propositions~\ref{prop:previous-state} and~\ref{prop:orderings}), which Section~\ref{sec:locally-optimal-experiment} illustrates.

\subsection{Update rates and ESJD}\label{sec:locally-optimal-experiment}

We repeat the experiment of Section~\ref{sec:numerics} on the LG model, with $\rho = 0.9$, $\sigma = 1$, $T = 200$, $d \in \{1, 3\}$, and $N \in \{2, 4, 8, 16, 32, 64\}$, with $10^6$ data sets drawn from the model for each $d$ and $N$, and with the locally optimal proposal.
For each data set, the forward pass of Algorithm~\ref{alg:previous-state} is run once with the reference $x_{1:T}$, and the four selection rules are applied to the same particles.

Figure~\ref{fig:locally-optimal-ratios} shows the update rates and the ESJD of \FMBS{}, \MBS{}, and \CMBS{} relative to those of \BS{} at each time step, while Table~\ref{tab:locally-optimal-averages} shows their averages over the time steps.
With the locally optimal proposal, the update rates of the four selection rules are larger than with the bootstrap filter of Section~\ref{sec:numerics}, for example those of \BS{} with $N = 2$, increase from $0.310$ to $0.418$ for $d = 1$ and from $0.132$ to $0.302$ for $d = 3$ (Tables~\ref{tab:averages} and~\ref{tab:locally-optimal-averages}).
In agreement with Proposition~\ref{prop:orderings}, \MBS{} and \CMBS{} increase the update rates and the ESJD at every time step in Figure~\ref{fig:locally-optimal-ratios}, whereas, as for the bootstrap filter, the final forced move of \FMBS{} only matters at the last time steps.
On average over the time steps, \MBS{} increases the update rates of \BS{} by $29\%$ for $N = 2$ and by $2\%$ to $4\%$ for $N = 8$, and the ESJD by $23\%$ to $26\%$ for $N = 2$ and by $2\%$ to $4\%$ for $N = 8$, while \FMBS{} increases both by less than $1\%$.
\CMBS{} increases the update rates of \BS{} by $49\%$ to $66\%$ for $N = 2$ and by $17\%$ to $19\%$ for $N = 8$, and the ESJD by $43\%$ to $50\%$ for $N = 2$ and by $12\%$ to $15\%$ for $N = 8$.

\begin{figure*}[t]
\centering
\includegraphics[width=\textwidth]{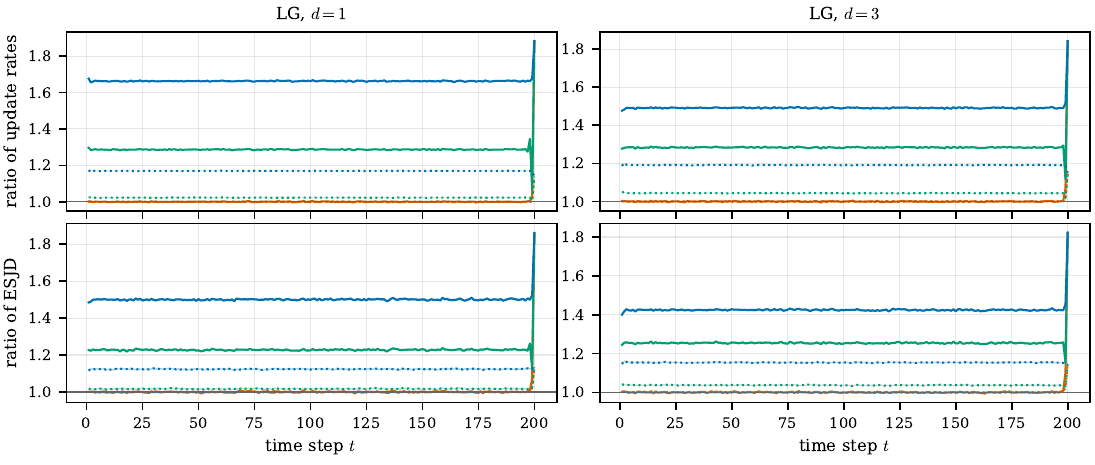}
\caption{Update rates (top) and ESJD (bottom) of forced move plus backward sampling (\FMBS{}), Metropolized backward sampling (\MBS{}), and coupled Metropolized backward sampling (\CMBS{}), relative to those of backward sampling (\BS{}), at each time step $t$, on the linear-Gaussian (LG) model with $T = 200$ and the locally optimal proposal: \FMBS{} with $N = 2$ (\legFMBSNtwo) and $N = 8$ (\legFMBSNeight), \MBS{} with $N = 2$ (\legMBSNtwo) and $N = 8$ (\legMBSNeight), and \CMBS{} with $N = 2$ (\legCMBSNtwo) and $N = 8$ (\legCMBSNeight), the four rules with Algorithm~\ref{alg:previous-state}.
The ratios are those of the averages over $10^6$ data sets drawn from the model, with one iteration for each data set, started from the trajectory drawn with the data.
The shaded bands extend two standard errors on each side of the ratios.}
\label{fig:locally-optimal-ratios}
\end{figure*}

\begin{table*}[t]
\caption{Averages over the time steps of the update rates and of the ESJD, for the four selection rules and $N \in \{2, 4, 8, 16, 32, 64\}$, on the linear-Gaussian (LG) model with $T = 200$ and the locally optimal proposal, estimated as in Figure~\ref{fig:locally-optimal-ratios}.
The standard errors are below $10^{-4}$ for the update rates and below $0.1\%$ of the ESJD.}
\label{tab:locally-optimal-averages}
\centering
\renewcommand{\arraystretch}{1.15}
\begin{tabular}{@{}l cccccc cccccc@{}}
\toprule
 & \multicolumn{6}{c}{update rates} & \multicolumn{6}{c}{ESJD} \\
\cmidrule(lr){2-7} \cmidrule(l){8-13}
$N$ & 2 & 4 & 8 & 16 & 32 & 64 & 2 & 4 & 8 & 16 & 32 & 64 \\
\midrule
\multicolumn{13}{@{}l}{\emph{LG, $d = 1$}} \\
\BS{} & 0.418 & 0.674 & 0.824 & 0.908 & 0.952 & 0.976 & 0.353 & 0.590 & 0.739 & 0.827 & 0.876 & 0.902 \\
\FMBS{} & 0.420 & 0.675 & 0.825 & 0.908 & 0.953 & 0.976 & 0.355 & 0.591 & 0.740 & 0.827 & 0.876 & 0.902 \\
\MBS{} & 0.539 & 0.728 & 0.843 & 0.914 & 0.954 & 0.976 & 0.434 & 0.626 & 0.752 & 0.831 & 0.877 & 0.902 \\
\CMBS{} & 0.696 & 0.894 & 0.964 & 0.988 & 0.996 & 0.999 & 0.530 & 0.731 & 0.830 & 0.880 & 0.905 & 0.917 \\
\addlinespace
\multicolumn{13}{@{}l}{\emph{LG, $d = 3$}} \\
\BS{} & 0.302 & 0.537 & 0.711 & 0.829 & 0.903 & 0.947 & 0.766 & 1.40 & 1.90 & 2.25 & 2.47 & 2.61 \\
\FMBS{} & 0.305 & 0.539 & 0.711 & 0.829 & 0.903 & 0.947 & 0.773 & 1.41 & 1.90 & 2.25 & 2.47 & 2.61 \\
\MBS{} & 0.389 & 0.598 & 0.742 & 0.842 & 0.908 & 0.949 & 0.965 & 1.54 & 1.97 & 2.28 & 2.48 & 2.62 \\
\CMBS{} & 0.452 & 0.700 & 0.846 & 0.927 & 0.967 & 0.986 & 1.09 & 1.75 & 2.19 & 2.45 & 2.61 & 2.69 \\
\bottomrule
\end{tabular}
\end{table*}

\section{Stratified resampling}\label{sec:stratified}

We repeat the experiments of Section~\ref{sec:numerics} and of Section~\ref{sec:locally-optimal} with stratified resampling~\citep{kitagawa1996monte} instead of multinomial resampling, with Algorithm~\ref{alg:previous-state}.
In stratified resampling, $\anc_t^n$ is the index $i$ with $U_n \in (\weight_t^1 + \dotsb + \weight_t^{i-1}, \weight_t^1 + \dotsb + \weight_t^i]$, for independent uniforms $U_n$ on $[(n - 1) / N, n / N)$, $n \in [N]$.
It satisfies Definition~\ref{def:resampling}~\citep[Section~3.1.2]{finke2026resampling}, so that the reversibility and the orderings hold (Propositions~\ref{prop:previous-state} and~\ref{prop:orderings}).
In \algstep{alg:previous-state:forward}, with the conditional stratified resampling of \citet[Algorithm~6]{finke2026resampling}, $\refidx_{t+1} = n$ with probability proportional to the length of the intersection of $[(n - 1) / N, n / N)$ with $(\weight_t^1 + \dotsb + \weight_t^{c-1}, \weight_t^1 + \dotsb + \weight_t^c]$, $\anc_t^{\refidx_{t+1}} = c$, and the other ancestor indices are drawn as in stratified resampling.
Because stratified resampling is not exchangeable, the indices of the reference are not fixed, and $\refidx_1$ is drawn uniformly on $[N]$, as in \algstep{alg:previous-state:initial}.
As in Section~\ref{sec:locally-optimal}, the kernels start from the indices of the reference, and the four selection rules are applied to the same particles, for $10^6$ data sets for each model and $N$.

In Table~\ref{tab:stratified-averages}, the averages over the time steps of \MBS{} and \CMBS{} exceed those of \FMBS{}, and those of \CMBS{} exceed those of \MBS{}, for all the models and numbers of particles.
With $N = 2$, \MBS{} increases the update rates of \BS{} by $35\%$ to $45\%$ with the bootstrap filter and by $28\%$ with the locally optimal proposal, and \CMBS{} by $43\%$ to $64\%$ and by $47\%$ to $63\%$, close to the increases with multinomial resampling.
The update rates of \BS{} are slightly smaller than with multinomial resampling (Tables~\ref{tab:averages} and~\ref{tab:stratified-averages}).

\begin{table*}[t]
\caption{Averages over the time steps of the update rates and of the ESJD with stratified resampling and Algorithm~\ref{alg:previous-state}, for the four selection rules and $N \in \{2, 4, 8, 16, 32, 64\}$, on the linear-Gaussian (LG) model with $T = 200$ and the nonlinear (NL) model with $T = 100$ with the bootstrap filter, and on the LG model with the locally optimal proposal, estimated as in Figure~\ref{fig:ratios}.
The standard errors are below $10^{-4}$ for the update rates and below $0.1\%$ of the ESJD.}
\label{tab:stratified-averages}
\centering
\renewcommand{\arraystretch}{1.15}
\begin{tabular}{@{}l cccccc cccccc@{}}
\toprule
 & \multicolumn{6}{c}{update rates} & \multicolumn{6}{c}{ESJD} \\
\cmidrule(lr){2-7} \cmidrule(l){8-13}
$N$ & 2 & 4 & 8 & 16 & 32 & 64 & 2 & 4 & 8 & 16 & 32 & 64 \\
\midrule
\multicolumn{13}{@{}l}{\emph{LG, $d = 1$}} \\
\BS{} & 0.304 & 0.564 & 0.747 & 0.860 & 0.925 & 0.960 & 0.300 & 0.545 & 0.708 & 0.806 & 0.862 & 0.894 \\
\FMBS{} & 0.306 & 0.565 & 0.748 & 0.860 & 0.925 & 0.960 & 0.301 & 0.545 & 0.709 & 0.807 & 0.863 & 0.894 \\
\MBS{} & 0.411 & 0.638 & 0.780 & 0.872 & 0.929 & 0.961 & 0.379 & 0.595 & 0.729 & 0.814 & 0.865 & 0.895 \\
\CMBS{} & 0.490 & 0.784 & 0.918 & 0.969 & 0.988 & 0.995 & 0.425 & 0.678 & 0.805 & 0.865 & 0.896 & 0.912 \\
\addlinespace
\multicolumn{13}{@{}l}{\emph{LG, $d = 3$}} \\
\BS{} & 0.129 & 0.288 & 0.460 & 0.620 & 0.750 & 0.844 & 0.376 & 0.830 & 1.31 & 1.75 & 2.11 & 2.36 \\
\FMBS{} & 0.130 & 0.289 & 0.461 & 0.620 & 0.750 & 0.844 & 0.377 & 0.832 & 1.31 & 1.76 & 2.11 & 2.36 \\
\MBS{} & 0.176 & 0.353 & 0.518 & 0.659 & 0.771 & 0.854 & 0.496 & 0.994 & 1.45 & 1.85 & 2.16 & 2.39 \\
\CMBS{} & 0.185 & 0.387 & 0.582 & 0.742 & 0.854 & 0.922 & 0.514 & 1.06 & 1.58 & 2.01 & 2.31 & 2.51 \\
\addlinespace
\multicolumn{13}{@{}l}{\emph{NL, $\sigma_W^2 = 10$}} \\
\BS{} & 0.250 & 0.485 & 0.677 & 0.812 & 0.896 & 0.944 & 3.72 & 8.54 & 13.6 & 17.6 & 20.2 & 21.7 \\
\FMBS{} & 0.252 & 0.487 & 0.678 & 0.813 & 0.896 & 0.944 & 3.79 & 8.63 & 13.7 & 17.7 & 20.2 & 21.7 \\
\MBS{} & 0.345 & 0.567 & 0.725 & 0.834 & 0.904 & 0.947 & 4.76 & 9.67 & 14.4 & 18.0 & 20.4 & 21.7 \\
\CMBS{} & 0.409 & 0.691 & 0.861 & 0.943 & 0.977 & 0.991 & 5.34 & 11.0 & 16.1 & 19.5 & 21.4 & 22.3 \\
\addlinespace
\multicolumn{13}{@{}l}{\emph{NL, $\sigma_W^2 = 1$}} \\
\BS{} & 0.153 & 0.334 & 0.518 & 0.677 & 0.799 & 0.882 & 0.759 & 1.67 & 2.63 & 3.55 & 4.31 & 4.85 \\
\FMBS{} & 0.154 & 0.336 & 0.519 & 0.679 & 0.800 & 0.883 & 0.780 & 1.70 & 2.67 & 3.58 & 4.34 & 4.87 \\
\MBS{} & 0.221 & 0.421 & 0.588 & 0.720 & 0.820 & 0.891 & 1.01 & 2.00 & 2.92 & 3.75 & 4.44 & 4.91 \\
\CMBS{} & 0.245 & 0.493 & 0.700 & 0.843 & 0.926 & 0.968 & 1.08 & 2.18 & 3.19 & 4.05 & 4.71 & 5.12 \\
\addlinespace
\multicolumn{13}{@{}l}{\emph{LG, $d = 1$, locally optimal}} \\
\BS{} & 0.402 & 0.666 & 0.821 & 0.907 & 0.952 & 0.976 & 0.362 & 0.602 & 0.748 & 0.832 & 0.879 & 0.904 \\
\FMBS{} & 0.404 & 0.667 & 0.822 & 0.907 & 0.952 & 0.976 & 0.364 & 0.603 & 0.749 & 0.833 & 0.879 & 0.904 \\
\MBS{} & 0.515 & 0.720 & 0.841 & 0.913 & 0.954 & 0.976 & 0.447 & 0.639 & 0.762 & 0.837 & 0.880 & 0.904 \\
\CMBS{} & 0.655 & 0.883 & 0.961 & 0.987 & 0.996 & 0.999 & 0.538 & 0.746 & 0.841 & 0.886 & 0.908 & 0.919 \\
\addlinespace
\multicolumn{13}{@{}l}{\emph{LG, $d = 3$, locally optimal}} \\
\BS{} & 0.281 & 0.522 & 0.702 & 0.825 & 0.901 & 0.946 & 0.749 & 1.40 & 1.90 & 2.26 & 2.48 & 2.62 \\
\FMBS{} & 0.282 & 0.523 & 0.703 & 0.825 & 0.902 & 0.946 & 0.754 & 1.40 & 1.91 & 2.26 & 2.48 & 2.62 \\
\MBS{} & 0.359 & 0.582 & 0.734 & 0.839 & 0.907 & 0.948 & 0.941 & 1.54 & 1.98 & 2.29 & 2.49 & 2.62 \\
\CMBS{} & 0.413 & 0.678 & 0.836 & 0.923 & 0.966 & 0.985 & 1.06 & 1.75 & 2.20 & 2.47 & 2.62 & 2.70 \\
\bottomrule
\end{tabular}
\end{table*}

\section{Differences between the formalization and the paper}\label{sec:formalization}
\subsection{Overview}
The results of the paper are formalized in the Lean proof assistant.
The formalization is available at \url{https://adriencorenflos.github.io/formalization/metropolized-CSMC}.
In order to reduce dependencies, we also provided a formal proof of \citet{liu1996peskun}, \citet[Lemma~3]{tierney1998note}, and \citet{dobrushin1956central} as part of the process.
The formalized proofs follow those of Sections~\ref{sec:proofs} and~\ref{sec:previous-state}, except in the five steps of Sections~\ref{sec:formalization-forced} to~\ref{sec:formalization-reduction}.
Instead of forcing the AI model to follow our proofs, we instead document here where it deviated (despite having been given our original proofs as input) as this may be of interest to some readers.
The statements of the lemmas, propositions, and theorems are themselves faithful to the paper.
Section~\ref{sec:formalization-assumptions} gives the assumptions and conventions of the formalization, and Section~\ref{sec:formalization-claims} the claims of the paper which are not formalized.

\subsection{The forced move}\label{sec:formalization-forced}

In Lemma~\ref{lem:forced}, the third part is derived from the first two without the Gibbs kernel and \citet[Lemma~3]{tierney1998note}.
Because $\forced_{\mathbf{p}}$ is a Markov kernel which leaves $\mathbf{p}$ invariant, the computation of the proof of Lemma~\ref{lem:coupling}, with $\mathbf{q} = \mathbf{p}$ and $\forced_{\mathbf{p}}$ in place of $\coupling_{\mathbf{p},\mathbf{q}}$, gives
\begin{align*}
  & \langle \varphi, \forced_{\mathbf{p}} \varphi \rangle_{\mathbf{p}} - \mathbf{p}(\varphi)^2 \\
  & \quad = -\frac{1}{2} \sum_{i \neq j} p_i \{\forced_{\mathbf{p}}(i, j) - p_j\} \{\varphi(i) - \varphi(j)\}^2,
\end{align*}
which is nonpositive by the second part.

\subsection{The backward Markov chain of the indices}\label{sec:formalization-pair}

In Lemma~\ref{lem:pair}, the conditional distributions are not derived one time step at a time, from the fact that $(\refidx_{t+1:T}, \outidx_{t+1:T})$ is a function of $\refidx_T$, of the ancestor indices at the times $s > t$, and of the draws at these times.
Instead, the probability of $(\refidx_{1:T}, \outidx_{1:T})$ given $\pos$ is computed as the sum over $(\anc, \refidx_T)$ of the probability of $(\anc, \refidx_T)$ given $\pos$, from Lemma~\ref{lem:extended}, times that of $\outidx_{1:T}$ in steps~\ref{alg:csmc:final} and~\ref{alg:csmc:backward} of Algorithm~\ref{alg:csmc}, with $\refidx_{1:T}$ traced back from $\refidx_T$.
The sums over the vectors $\anc_1, \dotsc, \anc_{T-1}$ are taken one at a time.
Each uses that, if $\anc_t^n \sim \bsprob_t(\cdot \mid n)$ independently for $n \in [N]$, then $\anc_t^j \sim \bsprob_t(\cdot \mid j)$, independently of $\anc_t^{j'}$ for $j' \neq j$.
The result is the probability
\begin{align*}
  & \weight_T^{\refidx_T} \kernel_T(\refidx_T, \outidx_T) \\
  & \times \prod_{t=1}^{T-1} \bsprob_t(\refidx_t \mid \refidx_{t+1}) \kernel_{t,\refidx_{t+1},\outidx_{t+1}}(\refidx_t, \outidx_t).
\end{align*}
Its factors are the conditional distributions of Lemma~\ref{lem:pair}.
The same probability is obtained for the usual implementation of \algstep{alg:csmc:backward}, with the case split of our proof on $\outidx_{t+1} = \refidx_{t+1}$ and $\outidx_{t+1} \neq \refidx_{t+1}$.

\subsection{The recursion of the expected products}\label{sec:formalization-recursion}

In Section~\ref{sec:update-rates}, the recursion~\eqref{eq:qt} and the expression~\eqref{eq:pairs} are derived from the joint probabilities given $\pos$, rather than from the expectations conditional on $\pos$, $\refidx_{t+1} = j$, and $\outidx_{t+1} = j'$.
By the probability of $(\refidx_{1:T}, \outidx_{1:T})$ in Section~\ref{sec:formalization-pair},
\begin{align*}
  & \bbP_{\pos}(\refidx_t = i, \outidx_t = i', \refidx_{t+1} = j, \outidx_{t+1} = j') \\
  & \quad = \bbP_{\pos}(\refidx_{t+1} = j, \outidx_{t+1} = j') \bsprob_t(i \mid j) \kernel_{t,j,j'}(i, i').
\end{align*}
Multiplying by $\varphi(i) \varphi(i')$ and summing over $i, i', j, j' \in [N]$ gives~\eqref{eq:qt}.
Multiplying by $\varphi(i, j) \varphi(i', j')$ instead gives the expectation of $f(x_t, x_{t+1}) f(\tilde{x}_t, \tilde{x}_{t+1})$ given $\pos$ in the proof of Corollary~\ref{cor:pairs}, as the sum over $j, j'$ of $\bbP_{\pos}(\refidx_{t+1} = j, \outidx_{t+1} = j')$ times~\eqref{eq:pairs}.

\subsection{The final forced move}\label{sec:formalization-washout}

In the proof of Proposition~\ref{prop:washout}, the bound by the variance of $\psi_S$ under $\weightvec_T$ is derived without the Cauchy--Schwarz inequality.
By the reversibility of $\forced_{\weightvec_T}$ with respect to $\weightvec_T$,
\begin{align*}
  & \langle \psi_S, \forced_{\weightvec_T} \psi_S \rangle_{\weightvec_T} = \weightvec_T(\psi_S^2) \\
  & \quad - \frac{1}{2} \sum_{j, j'} \weight_T^j \forced_{\weightvec_T}(j, j') \{\psi_S(j) - \psi_S(j')\}^2.
\end{align*}
Because $\forced_{\weightvec_T}$ leaves $\weightvec_T$ invariant, the bound
\begin{align*}
  & \{\psi_S(j) - \psi_S(j')\}^2 \leq 2 \{\psi_S(j) - \weightvec_T(\psi_S)\}^2 \\
  & \quad + 2 \{\psi_S(j') - \weightvec_T(\psi_S)\}^2
\end{align*}
shows that the last term is at most twice the variance, so that $\weightvec_T(\psi_S)^2 - \langle \psi_S, \forced_{\weightvec_T} \psi_S \rangle_{\weightvec_T}$ is at most the variance.

\subsection{Multinomial resampling and potential functions of the current state}\label{sec:formalization-reduction}

In the proof of Proposition~\ref{prop:reduction}, the marginal densities of $(\pos, \anc)$ under the two algorithms are compared directly.
Steps~\ref{alg:previous-state:initial} and~\ref{alg:previous-state:forward} of Algorithm~\ref{alg:previous-state} with the identity kernels are not used.
The sum of~\eqref{eq:extended-previous} with the Gibbs kernels over $\refidx_{1:T}$ is the function of $(\pos, \anc)$ in the factorization of~\eqref{eq:extended-previous} after Lemma~\ref{lem:extended-previous}, because the other factor is a probability distribution of $\refidx_{1:T}$.
With multinomial resampling and potential functions of the current state, this function is the sum of~\eqref{eq:extended} over $\refidx_T$, which is the marginal density of $(\pos, \anc)$ under Algorithm~\ref{alg:csmc}, by Lemma~\ref{lem:extended}.

\subsection{Assumptions and conventions of the formalization}\label{sec:formalization-assumptions}

The kernels of Algorithms~\ref{alg:csmc} and~\ref{alg:previous-state} are assumed to be measurable functions of the particle positions and, for Algorithm~\ref{alg:previous-state}, of the ancestor indices.
The kernels of the six selection rules are proved to satisfy this assumption.
The resampling schemes are assumed to be measurable functions of the particles and of their weights, and to satisfy the condition of Definition~\ref{def:resampling} for every probability vector of weights.
The results on the update rates, in Theorem~\ref{thm:ordering}, Corollary~\ref{cor:pairs}, and Propositions~\ref{prop:washout}, \ref{prop:gain}, and~\ref{prop:orderings}, assume that the diagonal of $\statespace^2$ is measurable, so that $\{\tilde{x}_t \neq x_t\}$ is an event.
Where the denominator of~\eqref{eq:bs} or~\eqref{eq:bs-previous} vanishes, $\bsprob_t(\cdot \mid j)$ is defined as $\weightvec_t$.

\subsection{Claims which are not formalized}\label{sec:formalization-claims}

All claims below are minor and affect the comparison or interpretation of the results, but not the nature of the main construction of the paper.
The experiments of Section~\ref{sec:numerics} and of Sections~\ref{sec:locally-optimal} and~\ref{sec:stratified} are not formalized, nor are the computations for the locally optimal proposal in Section~\ref{sec:locally-optimal}.
Neither are some comparison remarks with other methods in Sections~\ref{sec:intro} and~\ref{sec:discussion}, such as the effect of the steps of \citet{bunch2013improved} on the update rates.
The claims on fixing the indices of the reference, in Section~\ref{sec:method} and in Sections~\ref{sec:previous-state} and~\ref{sec:locally-optimal}, are not formalized as they are common knowledge in the field (and our proofs do not use this fact).

In Section~\ref{sec:validity}, the description of the backward pass of \BS{} as a Gibbs update of $(\refidx_T, \anc)$ given $\pos$ is not formalized.
In Section~\ref{sec:update-rates}, the remarks after Theorem~\ref{thm:ordering} and Corollary~\ref{cor:pairs} are not formalized.
These are the orderings of the lag-one autocorrelations and of the ESJD, the equality of the distributions of $(x_T, \tilde{x}_T)$ under \FMBS{} and \FMAT{} and under \BS{} and \AT{}, the pairwise distinctness of the particles when $\statespace = \mathbb{R}^d$ and $\dd x$ is Lebesgue measure, which Section~\ref{sec:numerics} also uses, and the reversal of the bound of Lemma~\ref{lem:coupling} for $\varphi(\cdot, j') = -\varphi(\cdot, j)$.
In Section~\ref{sec:gains}, the consequence of Proposition~\ref{prop:washout} with $\mixconst = 1$ for models which factorize across time is not formalized, nor is the interpretation of $\gain_t(j, j')$ as the decrease of a conditional probability.
In Section~\ref{sec:imh}, the probabilities $\bbP_{\pos}(\outidx_t = \refidx_t)$ with uniform probabilities $\weightvec_T$ and $\bsprob_t(\cdot \mid j)$ are formalized as probabilities under the backward Markov chain of Lemma~\ref{lem:pair}.
Their limit as $T - t$ grows is not formalized, nor are the interpretation of Proposition~\ref{prop:imh} as the independent Metropolis--Hastings sampler and the proportion of the time steps at which the moves of \MBS{} follow the Metropolis rule.

In Section~\ref{sec:previous-state}, the almost sure positivity of the denominator of~\eqref{eq:bs-previous} is not formalized.
The formalized results do not use it, because of the convention of Section~\ref{sec:formalization-assumptions}.
Only multinomial resampling is proved to satisfy Definition~\ref{def:resampling}, whereas stratified, systematic, residual, and killing resampling are taken to satisfy it from \citet[Section~3]{finke2026resampling}.
The failure of the second part of Lemma~\ref{lem:extended} in this setting, of which Section~\ref{sec:locally-optimal-ancestors} gives an example, is not formalized.
Neither is the distribution of $(\refidx_{t+1}, \anc_t)$ in \algstep{alg:previous-state:forward} with multinomial resampling.
Finally, neither are the reductions, with multinomial resampling and potential functions of the current state, of~\eqref{eq:extended-previous} with the identity kernels to~\eqref{eq:extended}, and of Algorithm~\ref{alg:ancestors} to the usual implementation of Section~\ref{sec:method}.
\end{appendix}

\end{document}